\newif\ifdraft
\newif\ifhideproofs
\newcommand{\CONGEST}{\ensuremath{\mathsf{CONGEST}}\xspace}
\newcommand{\CLIQUE}{\ensuremath{\mathsf{CONGESTED~CLIQUE}}\xspace}
\newcommand{\LOCAL}{\ensuremath{\mathsf{LOCAL}}\xspace}
\newtheorem{theorem}{Theorem}
\newtheorem{lemma}[theorem]{Lemma}
\newtheorem{corollary}[theorem]{Corollary}
 \newtheorem{definition}[theorem]{Definition}
\newtheorem{claim}{Claim}
\newcommand{\NN}{\mathbb{N}}
\newcommand{\eps}{\varepsilon}
\newcommand{\poly}{\operatorname{\text{{\rm poly}}}}
\newcommand{\ceil}[1]{\lceil #1 \rceil}
\newcommand{\set}[1]{\left\{#1\right\}}
\DeclareMathOperator{\polylog}{\poly\log}
\newcommand{\svp}[1]{\ifdraft \todo[color=blue!20]{#1}\fi}
\newenvironment{theorem-repeat}[1]{\begin{trivlist}
		\item[\hspace{\labelsep}{\bf\noindent Theorem \ref{#1} .}]\em }
	{\end{trivlist}}
\begin{document}

\begin{flushleft}

\vspace*{0.8cm}
{\huge\bf Distributed Approximation on Power Graphs}
\vspace{1.0cm}
\end{flushleft}

\newcommand{\auth}[3]{\textbf{#1}$\,\,\,\cdot\,\,\,$#2$\,\,\,\cdot\,\,\,$#3\par\medskip}

\auth{Reuven Bar-Yehuda}
{Technion}
{reuven@cs.technion.ac.il}
\auth{Keren Censor-Hillel}
{Technion}
{ckeren@cs.technion.ac.il}
\auth{Yannic Maus}
{Technion}
{yannic.maus@cs.technion.ac.il}
\auth{Shreyas Pai}
{The University of Iowa}
{shreyas-pai@uiowa.edu}
\auth{Sriram V.~Pemmaraju}
{The University of Iowa}
{sririam-pemmaraju@uiowa.edu}

\begin{abstract}
We investigate graph problems in the following setting: we are given a graph $G$ and
we are required to solve a problem on $G^2$.
While we focus mostly on exploring this theme in the distributed \CONGEST model, 
we show new results and surprising connections to the centralized model of computation.
In the \CONGEST model, it is natural to expect that problems on $G^2$ would be quite difficult to solve efficiently
on $G$, due to congestion.
However, we show that the picture is both more complicated and more interesting.

Specifically, we encounter two phenomena acting in opposing directions:
(i) \textit{slowdown due to congestion} and (ii) \textit{speedup due to structural properties of $G^2$.}

We demonstrate these two phenomena via two fundamental graph problems, namely, 
\textit{Minimum Vertex Cover (MVC)} and \textit{Minimum Dominating Set (MDS)}. Among our many contributions, the highlights are the following.
\begin{enumerate}
\item
In the \CONGEST model, we show an $O(n/\epsilon)$-round $(1+\epsilon)$-approximation algorithm for MVC on $G^2$, while no $o(n^2)$-round
algorithm is known for any better-than-2 approximation for MVC on $G$. 

\item
We show a centralized polynomial time $5/3$-approximation algorithm for MVC on $G^2$, 
whereas a better-than-2 approximation is UGC-hard for $G$.

\item
In contrast, for MDS, in the \CONGEST model, we show an $\tilde{\Omega}(n^2)$ lower bound for a constant approximation factor for MDS on $G^2$, 
		whereas an $\Omega(n^2)$ lower bound for MDS on $G$ is known only for exact computation.
\end{enumerate}
In addition to these highlighted results, we prove a number of other results in the 
distributed \CONGEST model including
an $\tilde{\Omega}(n^2)$ lower bound for computing an exact solution to MVC on $G^2$,
a conditional hardness result for obtaining a $(1+\epsilon)$-approximation to MVC 
on $G^2$, and an $O(\log \Delta)$-approximation to the MDS problem on $G^2$ 
in $\mbox{poly}\log n$ rounds.
Our lower bound reductions also lead to hardness results in the centralized setting.
Specifically, we show that there is no FPTAS for MVC on $G^2$ unless $P = NP$ and there is no 
$(1-\epsilon)\ln n$-approximation for MDS on $G^2$ unless $NP \subseteq
DTIME(n^{O(\log\log n)})$.
\end{abstract}

\clearpage

\tableofcontents
\clearpage

\section{Introduction}
\label{sec:introduction}
The theme of this paper is designing algorithms and proving hardness results for graph problems
in the following setting: \textit{we are given a graph $G$ and we are required to solve the problem 
on the square $G^2$.}
Computing on the square $G^2$ of a communication network $G$ is a crucial primitive in distributed applications, a prime example being the computation of a network decomposition of $G^2$ to obtain derandomization results~\cite{GHK18}. Another example is the problem of coloring $G^2$, which arises in frequency assignment in radio networks~\cite{Fotakis1999,SHAO200837}.

In the \LOCAL model, where message sizes are not bounded, 
computing on $G^2$ incurs just a constant-factor overhead in the complexity of an algorithm. 
Yet, this is far from being true when messages are of bounded size, e.g., consider the problem in which each node needs to learn the input values 
of all of its neighbors in $G^2$ in  the \CONGEST model: since a message only contains $O(\log{n})$ bits, a simple information-theoretic argument gives that the runtime dramatically suffers from \emph{congestion} and the worst case requires a multiplicative 
overhead proportional to the maximum degree of $G$, which is not present if vertices solve the same problem on $G$ instead of $G^2$.

While the above shows that with limited message sizes, computing on $G^2$ potentially suffers from more congestion than computing in $G$, one notices that the graph $G^2$ has more 
\emph{structure} compared to $G$, which could potentially be exploited when solving problems on graphs. A notable example is that $G^2$ contains many cliques -- for each neighborhood of nodes with degree greater than 1 in $G$. These two properties act in opposite directions, and the contributions of this paper are to analyze their effect on two fundamental problems, 
namely, \textit{minimum vertex cover (MVC)} and \textit{minimum dominating set (MDS)}.

Formally, when we say that we solve a problem $\Pi$ on $G^2$, we mean that the input graph is 
$G$ and the output is a solution for $\Pi$ on the graph $G^2=(V,F)$, where $F$ is the set of edges $\{u, v\}$ for which $u$ and $v$ are at most two hops from
each other in $G$. We use $G^2$-$\Pi$ to denote the variants of the problems on $G^2$, e.g.,  
$G^2$-MVC denotes the minimum vertex cover problem with input $G$ that is required to output a minimum size vertex cover 
of $G^2$.

\subsection{Our contributions}
\label{subsection:contributions}
In a nutshell, our main findings are that MVC becomes easier on $G^2$ due to its structure, while for MDS the obstacle of congestion is more substantial. The highlights of our contributions are:
\begin{enumerate}
\item
\label{item:congest-vc-alg}
A deterministic $(1+\eps)$-approximation algorithm for $G^2$-MVC in the \CONGEST\ model\footnote{In the \CONGEST\ model  (\cite{peleg00}) a communication network is abstracted as an $n$-node graph. In synchronous rounds each node can send a $O(\log n)$ bit message to each of its neighbors. The \emph{complexity} is the number of rounds until each node has computed its output, e.g., whether it belongs to a VC or not.} 
which completes in $O(n/\eps)$ rounds, for any $\eps > 0$. We also provide an algorithm with these guarantees for weighted $G^2$-MVC (denoted by $G^2$-MWVC). 

In comparison, for MVC on $G$ in the \CONGEST\ model, 
the fastest algorithm for any better-than-2 approximation factor is the naive $O(n^2)$-round algorithm. 

\item 
\label{item:cent-vc-alg}
A deterministic polynomial time centralized algorithm that gives a $5/3$-approximation for $G^2$-MVC while we also show that MVC remains hard on $G^2$, i.e.,  it does not admit a FPTAS unless $P = NP$. 

Our algorithm should be contrasted with the celebrated 
UGC hardness of a polynomial time algorithm for any better-than-2 approximation~\cite{KhotRJCSS2008}. 
Given the hardness of finding a better-than-2-approximation algorithm for MVC, there is a long line of research on the approximability of MVC on 
specific graph classes \cite{BakerJACM1994,HalldorssonSODA1995,ChenISAAC2000}.
Our result contributes to this line of research.

\item 
\label{item:congest-mds-lb}
A lower bound of $\tilde{\Omega}(n^2)$ rounds for any $c$-approximation algorithm with $c<7/6$ for weighted $G^2$-MDS (denoted $G^2$-MWDS) in the \CONGEST\ model. We also provide such a quadratic lower bound when $c<9/8$ for the unweighted case. 

In comparison, for MDS on $G$ in the \CONGEST\ model, no super-polylogarithmic lower bound is known for any approximation factor that is smaller than $\ln \Delta$, where $\Delta$ is the maximum degree in $G$, and the best algorithm for any better-than-$O(\log\Delta)$ approximation factor is the naive $O(n^2)$-round algorithm. The best known lower bounds are a $\tilde{\Omega}(n^2)$ lower bound for exact MDS, and a $\tilde{\Omega}(n)$ lower bound for a $O(1)$-approximation~\cite{BachrachCDELP19}.
We also point out that Bachrach et al.~\cite{BachrachCDELP19} do consider the $G^2$-MWDS problem and provide lower bounds, e.g., a linear lower bound of $O(\log n)$-approximation.
\end{enumerate}

We stress the contrast between the $O(n/\eps)$ rounds algorithm for $(1+\eps)$-approximation of $G^2$-MVC 
and the $\Omega(n^2)$ lower bound for computing an $9/8$-approximation for $G^2$-MDS. 
The main takeaway from our results is how differently MVC and MDS behave as we go from $G$ to $G^2$ in the \CONGEST\ model, exemplifying the two conflicting properties of computing on $G^2$: congestion and structure. Moreover, we show the following.

\paragraph{Distributed MVC} 
We combine our ideas for $G^2$-MVC approximation in \CONGEST with a randomized voting scheme, to obtain an $O(\log{n}+1/\eps)$ round algorithm for $(1+\eps)$-approximation (for any $\eps>0$) of $G^2$-MVC in the \CLIQUE model\footnote{The \CLIQUE model is similar to the \CONGEST model, but vertices  can send $O(\log{n})$-bits messages to all other nodes, not only to its neighbors in the input graph $G$ \cite{Peleg03}.}.
On the lower bound side, we show that exact $G^2$-MVC requires $\tilde{\Omega}(n^2)$ \CONGEST rounds, corresponding to the same lower bound for MVC on $G$~\cite{Censor-HillelKP17}.
Furthermore, we show that if one could improve our running time in item (\ref{item:congest-vc-alg}) above to be $o(n^{1/2}/\eps)$ rounds, for any $\eps > 0$, then one could obtain any constant-factor approximation algorithm for MVC on $G$ in $o(n^2)$ rounds, which would be a major breakthrough. For example, an $(1+\eps)$-approximation of $G^2$-MVC in $O(n^{1/3}/\eps)$ rounds for every $\eps>0$ would yield a $3/2$-approximation algorithm for $G$-MVC in $O(n^{16/9})$ rounds.

\paragraph{Distributed MDS} Using a randomized 2-neighborhood size estimation technique, we show how to simulate the $O(\log \Delta)$-approximation algorithm for
MDS in $G$~\cite{Censor-HillelD18} to work for $G^2$-MDS with only a constant-factor slow down. This yields a $O(\poly(\log n))$-round, $O(\log \Delta)$-approximation algorithm for solving MDS on $G^2$ in \CONGEST.

\paragraph{Centralized MDS}
In the centralized setting, we provide polynomial time reductions between MDS and $G^2$-MDS.  Together with the NP-completeness proof for MDS of Feige  \cite{FiegeJACM1998} these imply that $G^2$-MDS is NP-complete. Our centralized reductions are approximation factor presevering such that also the result on hardness of centralized approximation of MDS carries over to $G^2$-MDS: If a polynomial-time algorithm can solve $G^2$-MDS  with an 
approximation factor of $(1 - \eps) \ln n$, then $NP \subseteq DTIME\left(n^{O(\log \log n)}\right)$.

\subsection{Technical Challenges} 
\label{subsection:challenges}

We overcome significant technical challenges in obtaining our results. We highlight some of these here, and the rest are discussed in the corresponding sections.
\begin{itemize}
\item[(i)] The benefit of the structure of $G^2$ is that it contains many cliques, e.g., each neighborhood in $G$ induces a clique in $G^2$.
One approach to getting a good approximation to MVC is to repeatedly add disjoint cliques to the vertex cover.
Adding a clique of size $s$ costs our algorithm $s$, but OPT needs to pay $s-1$. However, large cliques
could also be exclusively induced by edges in $G^2 - G$ and these are not easy to find. Our algorithm relies on
a structural property we show: we only need to find cliques induced by neighborhoods in $G$. We show that once such cliques
are found and removed, then the remaining graph becomes sparse enough for fast processing.

\item[(ii)] For obtaining our centralized $5/3$-approximation for MVC on $G^2$, we also rely on the structure of $G^2$. We use the local ratio approach~\cite{Bar-YehudaE83} to take care of small vertex-disjoint parts of the graph, for which an optimal solution has to pay not much less than our cost. An example is taking triangles, for which an optimal solution needs to pay 2 while we pay 3. We also use techniques of finding local maximal matchings, as first done by Gavril as explained in~\cite{GareyJ79}.
Many local-ratio algorithms then take the worst approximation factor among all these parts and that is the approximation factor of the entire solution. However, in our algorithm, after handling some parts of the graph we still remain with a part for which we find a 2-approximation. Still, we avoid paying this in the final approximation factor, by constructing a useful partition, in which the \emph{size of one part} is bounded by a constant fraction of the \emph{optimal solution for another part}. This allows us to take a sloppy approximation for the former, and rely on the latter in order to argue that this still gives a good total approximation factor, rather than taking the worst factor among the parts.

\item[(iii)] For the lower bound constructions in this paper, we use the Alice-Bob lower bound framework developed in \cite{PelegR00} 
for obtaining quadratic and near-quadratic lower bounds for graph problems.
To leverage the current lower bound construction for $G$ and use it for $G^2$, we replace each edge $\{u, v\}$
by a \emph{path gadget} that creates a 2-path between $u$ and $v$ and possibly adds an additional $O(1)$ vertices
Note that this provides the edge $\{u, v\}$ in the square of this graph.
Doing this suffices for the centralized setting, but in the distributed setting it introduces a factor-$n$ \emph{blowup}
in the number of vertices, and no longer provides a lower bound that is quadratic in the number of vertices.
An even bigger challenge is the need to create a constant-factor gap in the output of the lower bound reduction, which necessitates many new techniques in our construction.

\end{itemize}

\subsection{Further Related Work}
There is a vast body of research on approximating MVC and MDS in the sequential
setting; we refer to the references in \cite{Vaziranibook, WilliamsonShmoysbook}
as an introduction to this literature.
Since there has been no progress on approximating MVC with an 
approximation factor that is smaller than $2$, researchers have studied the problem of approximating 
MVC on restricted graph classes, such as planar graphs \cite{BakerJACM1994},
bounded-degree graphs \cite{HalldorssonSODA1995}, and
graphs with perfect matchings \cite{ChenISAAC2000}.
More recently, research in \textit{distributed} approximation algorithms for 
MVC and MDS received a lot of attention.
Bar-Yehuda et al.~\cite{Bar-YehudaCSJACM2017} present a deterministic
$(2+\eps)$-approximation algorithm for MWVC in $O(\log \Delta /\eps \log \log \Delta)$ rounds, where 
the $\Delta$-dependency is optimal due to the lower bound of Kuhn et al.~\cite{KMW16}.
Ben-Basat et al.~\cite{Ben-BasatEKS18} shave off the $\eps$ term in the approximation factor
and present a deterministic 2-approximation algorithm for the MWVC problem in the \CONGEST
model at the cost of an increased runtime of $O(\log n\log \Delta/\log^2\log \Delta)$ rounds.
For MDS, Censor-Hillel and Dory \cite{Censor-HillelD18} obtain an $O(\log \Delta)$-approximation
in $O(\log n\log \Delta)$ rounds, improving on the result by Jia et al.~\cite{JiaRS02} that achieves
this approximation in expectation.
Both of these algorithms are randomized, whereas Ghaffari et al. \cite{GK18} and Deurer et al.~\cite{DKM19}
present deterministic algorithms for MDS with approximation factors $O(\log \Delta^2)$ and $(1+\eps)(1+\ln(\Delta+1))$, respectively. Both algorithms rely on a network decomposition of $G^2$ and \cite{GK18} provided a $2^{O(\sqrt{\log n\log\log n})}$ round CONGEST algorithm to compute such a decomposition; due to the faster decomposition algorithm from \cite{RG19} (that also works for $G^2$ and in the CONGEST model) both algorithms now run  in $\polylog n$ rounds.

On the lower bound side, Kuhn et al.~\cite{KMW16} provide lower bounds of the form
$\Omega(\log \Delta /\log \log \Delta)$ and 
$\Omega(\sqrt{\log n /\log \log n})$ for constant-approximation to MVC 
and polylogarithmic-approximation to MDS in the \LOCAL\ model.
Naturally, these lower bounds apply to the \CONGEST\ model as well.
In the \LOCAL model MVC, MDS, $G^2$-MVC and $G^2$-MDS can be solved deterministically in $\poly\log n$ rounds even if one aims for $(1+\eps)$-approximations with $\eps=1/\poly\log n$ \cite{SLOCAL17,RG19}.

\subsection{Outline}
\begin{itemize}
\item \Cref{sec:notation} formally introduces the problems that we study. 
\item \Cref{sec:DistVCUpperBound} presents our distributed upper bounds for $G^2$-vertex cover. 
\item \Cref{sec:cent-alg} presents the centralized upper bound for $G^2$-MVC.
\item \Cref{app:distVC} presents near-quadratic lower bound for the exact solution of $G^2$-MVC, conditional lower bounds (conditioned on the hardness of $G$-MVC) and limitations of our current lower bound techniques.  
\item \Cref{sec:MDSupper} presents a distributed algorithm for $G^2$-MDS.
\item  \Cref{sec:MDSLB} presents our lower bounds for approximating $G^2$-MDS (weighted an unweighted). 
\item \Cref{sec:centHardness} shows that there is no FPTAS for $G^2$-MVC or $G^2$-MDS unless $P=NP$. 
\end{itemize}

\section{Problems \& Notation}\label{sec:definitions}
\label{sec:notation}

A \emph{vertex cover (VC)} of a graph $G=(V,E)$ is a subset $S\subseteq V$ of the vertices such that for any edge $\{u,v\}\in E$ at least one of its endpoints is contained in $S$ and $|S|$ is its \emph{size}. 
A \emph{dominating set (DS)} of a graph $G=(V,E)$ is a subset $S\subseteq V$ of the vertices such that any vertex $v\in V$ is in $S$ or has a $G$-neighbor in $S$ and $|S|$ is its \emph{size}. In the \emph{minimum vertex cover problem (MVC)} or the \emph{minimum dominating set problem (MDS)} the objective is to compute a VC (DS) of minimal size among all feasible VCs (MDs). An $\alpha$-approximation to the MVC (MDS) problem is a VC (DS) $S$ with $|S|/|OPT|\leq \alpha$ where $OPT$ is a solution with minimal size for the problem in the respective graph. When solving the VC or DS problem on $G^2=(V,F)$  (with input graph $G=(V,E)$) the solution is a subset $S\subseteq V$ of the vertices for which all other feasibility notions are interpreted with regard to the edge set $F$.
We also consider the weighted versions of these problems, \textit{minimum weighted vertex cover (MWVC)} and \textit{minimum weighted dominating set (MWDS)}.

When we solve $G^2$-MVC or $G^2$-MDS problems in the \CONGEST or \CLIQUE model we require that at the end of the algorithm each node needs to know whether it is part of the vertex cover or the dominating set. We point out that nodes cannot decide locally (see \cite{FKP13}) whether a given given vertex cover (or DS) has a 'good' approximation factor as it might approximate an optimal solution badly in some part of the graph while it still provides a good approximation on the whole graph. 

\paragraph{Notation.} For some subset $S\subseteq V$ $G^2[S]$ denotes the subgraph of $G^2$ induced by the vertex set $S$, that is, it contains an edge between any two vertices $u,v\in S$ if and only if $u$ and $v$ have distance at most two in $G$; we explicitly point out that the distance is measured in $G$.
For a vertex $v\in V$ we denote its non-inclusive neighborhood in $G$ by $N(v)$.

 \section{Distributed $G^2$-Minimum Vertex Cover (Algorithms)} 
\label{sec:DistVCUpperBound}
In this section we show our distributed upper bounds for $G^2$-MVC. In \Cref{sssec:UBunweightedVC} we present an $O(n)$ algorithm to compute a $(1+\eps)$ approximation. In \Cref{sssec:UBweightedMVC} we extend this bound for the weighted version of the problem. In \Cref{sssec:UBcongestedClique} we show that one can compute a $(1+\eps)$-approximation for $G^2$-MVC in $O(\log n)$ rounds in the \CLIQUE. In \Cref{app:distVC} we prove lower bounds for distributed $G^2$-vertex cover and show limitations of the current lower bound techniques.
\subsection{\CONGEST: $(1+\eps)$-Approximation for $G^2$-MVC}
\label{sssec:UBunweightedVC}
This section is devoted to proving the following theorem.
\begin{theorem}\label{thm:G2VC}
For any $\eps>0$ there is a deterministic distributed \CONGEST\ algorithm that computes a $(1+\eps)$-approximation of $G^2$-minimum vertex cover with communication network $G$ in $O(n/\eps)$ rounds.
\end{theorem}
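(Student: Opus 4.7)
The plan is to exploit the structural observation that for every vertex $v$, the closed $G$-neighborhood $N_G[v]=\{v\}\cup N(v)$ induces a clique in $G^2$. Any vertex cover of a $G^2$-clique of size $s$ must contain at least $s-1$ of its vertices, so if the algorithm opts to include all $s$ vertices, it incurs a multiplicative loss of only $s/(s-1)\le 1+1/(s-1)$, which is at most $1+\eps$ whenever $s\ge 1+\lceil 1/\eps\rceil$. This suggests fixing a threshold $\tau=\lceil 1/\eps\rceil+1$ and aggressively adding to the cover every large neighborhood clique, while handling the leftover low-degree part separately.

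Concretely, the plan is a two-phase, \SLOCAL-style procedure simulated in \CONGEST\ via pipelining along a BFS tree $T$ of $G$ (which can be computed in $O(n)$ rounds). In Phase~1, vertices are processed one by one in the pre-order of $T$. When it is $v$'s turn, $v$ examines which of its $G$-neighbors are still ``live'' (not yet added to the cover); if $v$ has at least $\tau$ live $G$-neighbors, then $v$ adds itself together with all of its live $G$-neighbors to the cover and marks them dead. Because each processing step broadcasts only $O(\log n)$ bits of status along $T$, the entire Phase~1 scan completes in $O(D+n)=O(n)$ rounds, with pipelining hiding the diameter. Phase~2 handles the residual graph $H := G^2[\textit{live}]$: I would show that every live vertex has $G$-degree at most $\tau-1$ to other live vertices (since otherwise it would have triggered a clique grab), so after a mild augmentation of Phase~1 that also removes any vertex whose remaining live $G^2$-neighborhood becomes too small to matter, the graph $H$ has $O(n/\eps)$ edges in total. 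This sparse residual is then pipelined to a leader via $T$ in $O(n/\eps)$ rounds and solved locally (exactly or by a $(1+\eps)$-approximation on a small instance), and the solution is broadcast back.

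The approximation analysis is a charging argument with two disjoint pieces. For each clique $N_G[v]$ added in Phase~1, the $s:=|N_G[v]|\ge \tau$ vertices the algorithm pays are charged against the $\ge s-1$ vertices that OPT must pay within this clique, giving a local ratio of $s/(s-1)\le 1+\eps$; since Phase~1 processes vertex-disjoint cliques, these charges are to disjoint portions of OPT. Phase~2 is charged against the portion of OPT that falls in the live set, and since $H\subseteq G^2$, any optimum cover of $G^2$ restricted to live vertices covers $H$, so Phase~2's solution costs at most $(1+\eps)\cdot |\mathrm{OPT}\cap \textit{live}|$ (or less, depending on whether it is solved exactly). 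Summing yields the $(1+\eps)$-approximation bound; the total round count is $O(n/\eps)$.

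The main obstacle I anticipate is the residual accounting in Phase~2. A $G^2$-edge $\{u,w\}$ between two live vertices $u,w$ can arise from a $G$-path $u$-$x$-$w$ whose intermediate $x$ is already in the cover, and putting $x$ in the cover does \emph{not} cover $\{u,w\}$ in $G^2$; so naive bookkeeping on live $G$-degrees is not enough to bound $|E(H)|$. Closing this gap is where the algorithm design and analysis must be most careful, and I expect the proof to either (i) strengthen the Phase~1 invariant so that each live vertex also has bounded \emph{$G^2$-degree} to live vertices (by, e.g., preemptively adding $v$ itself whenever too many of $v$'s $G$-neighbors are dead), or (ii) piggyback a second pass over the BFS tree that collects exactly the potentially problematic 2-paths in aggregate time $O(n/\eps)$. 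Once that structural bound is in hand, the pipelining and the approximation analysis fit together as above.
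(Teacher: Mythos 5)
Your overall plan coincides with the paper's: a threshold-$\Theta(1/\eps)$ phase that repeatedly grabs the clique $G^2[N(c)\cap R]$ induced by a high-live-degree neighborhood and charges its $s$ vertices against the $s-1$ that any cover must take, followed by gathering the residual at a leader and solving it there, with the approximation obtained by summing the two disjoint charges. The clique-grabbing phase, its $O(n)$ implementation, and the charging argument are all sound and match the paper (whether or not you also add $c$ itself is immaterial).

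The gap is exactly where you flagged it, and your proposed repairs do not close it. The claim that $H=G^2[\mathit{live}]$ has $O(n/\eps)$ edges is false in general: the correct invariant at the end of Phase~1 is that \emph{every} vertex $x$ (dead or alive) has at most $1/\eps$ live $G$-neighbors, and a dead middle vertex $x$ with $1/\eps$ live neighbors contributes up to $\binom{1/\eps}{2}$ edges to $H$, so $|E(H)|$ can be $\Theta(n/\eps^2)$; shipping $H$ itself would therefore exceed the claimed $O(n/\eps)$ bound for small $\eps$. Your fix (i) (preemptively adding $v$ to the cover when too many of its neighbors are dead) adds vertices that cannot be charged against OPT and would break the $(1+\eps)$ analysis; your fix (ii) is in the right spirit but the object whose count is $O(n/\eps)$ is not the set of problematic 2-paths (there can be $\Theta(n/\eps^2)$ of those) but the set of underlying $G$-edges. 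The paper's resolution is to have the leader learn $F=\{\{u,v\}\in E(G): u\in U\}$, i.e., all $G$-edges with at least one endpoint in the live set $U$; by the invariant above $|F|\le n/\eps$, so $F$ can be pipelined to the leader in $O(n/\eps)$ rounds, and the leader then reconstructs $G^2[U]$ \emph{locally} from $F$, since every $G^2$-edge inside $U$ is either an edge of $F$ or the concatenation of two edges of $F$ through a common middle vertex (\Cref{lem:learnRemainingGraph,lem:computeRemainingGraph}). With that substitution — communicate the $O(n/\eps)$ incident $G$-edges rather than the residual $G^2$-edges or the 2-paths — the rest of your argument goes through as written.
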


We prove Theorem~\ref{thm:G2VC}. We first explain the algorithm and then prove its correctness, approximation factor and runtime.

\IncMargin{0.5em}
\RestyleAlgo{boxruled}
\begin{algorithm} \caption{The $(1 + \eps)$-approximation for $G^2$-Minimum Vertex Cover} \label{alg:VCepsSimplified}
$C = V$ \tcp*{possible centers}
$R = V$ \tcp*{vertices not in the cover}
$S = \emptyset$ \tcp*{Vertices in the Vertex Cover}
\While{there is a node $c \in C$ with $|N(c) \cap R| > 1/\eps$}{\label{alg:VCeps:outerbegin}
  Remove $c$ from $C$ \\
  Add $N(c)$ to $S$ \tcp*{same as adding $N(c)\cap R$} 
  Remove $N(c)$ from $R$ \tcp*{same as removing $N(c)\cap R$}
}\label{alg:VCeps:outerend}
$U=V\setminus S$\\
Elect a leader $\ell\in V$ and let it learn the following set of edges
$F=\{\{u,v\}\in E \mid u\in U, v\in V\}$\\
Leader $\ell$ computes an optimal solution $R^*$ of the VC problem on $H=G^2[U]$ using $F$ \label{alg:line:bruteforce}\\
\Return $S \cup R^*$
\end{algorithm}
\DecMargin{0.5em}

\medskip

\noindent\textbf{Algorithm:}
The algorithm consists of two phases (see Algorithm \ref{alg:VCepsSimplified} for Pseudocode). In the first phase we carefully and iteratively add vertices  to an initially empty set $S$ such that, 
(1) $S$ is a good approximation for all edges of $G^2$ that it covers, and 
(2) the graph $H=G^2[V\setminus S]$ of all edges of $G^2$ that are not covered by $S$ can be efficiently learned by a leader vertex $\ell$ in the second phase. 
 Then, the leader $\ell$ computes an optimal vertex cover $R^*$ of $H$ and we return the union of $S$ and $R^*$. The approach of repeatedly covering disjoint parts of the graph for which one can prove a good approximation compared with any optimal solution is common for computing MVCs, and is used throughout our algorithms (also in other sections).

\textbf{Phase I:}  We continue with explaining how the first phase can be executed in a sequential manner. At the start of the phase the \emph{'cover'} $S$ is empty, we denote the set of \emph{remaining nodes} by $R=V\setminus S$, i.e., the nodes that have not yet been added to the cover $S$, and $C=V$ denotes the set of possible \emph{centers}. Then, as long as there is a center node $c\in C$ that has more than $1/\eps$ neighbors in $R$---the neighbor relation is the neighbor relation in the communication graph $G$ and not the one in the square graph $G^2$---node $c$ adds all of its neighbors to $S$ and removes them from $R$. Then $c$ leaves the set $C$. 
As the runtime of the second phase will dominate anyhow there is no need to efficiently parallelize this sequential algorithm. Instead we use an arbitrary symmetry breaking between vertices in $c$ with the help of their ID to run the sequential algorithm in a distributed manner: Any vertex with degree at least $1/\eps$ in $R$ is a \emph{candidate} and any candidate who has the maximum ID in its two hop neighborhood adds its neighbors to $S$, removes them from $R$ and leaves $C$.

\textbf{Phase II:} We continue with explaining how the leader $\ell$ learns the graph $H$. After phase I let $U=V\setminus S$ be the vertices that are not in the cover yet and define the following set of edges
\begin{align}
F=\{\{u,v\}\in E \mid u\in U, v\in V\}~.
\end{align}
We show that the leader $\ell$ can learn the set $F$ efficiently (\Cref{lem:learnRemainingGraph}) and that it can compute the set $H$ with the knowledge of $F$(\Cref{lem:computeRemainingGraph}). We point out that the graph $H$ can be have a large number of edges and we do not explicitly send all its edges to the leader vertex $\ell$ but instead $\ell$ only learns the much smaller set of edges $F$ and then uses it to locally compute $H$. 

Note that all steps of the algorithm except for line \ref{alg:line:bruteforce} (which is executed locally inside one vertex) only reason about $G$ and in particular no single condition or action refers to $G^2$. All reasoning about $G^2$ (except for  line \ref{alg:line:bruteforce}) is only part of the analysis.

\begin{lemma}[Learning Remaining Graph]\label[lemma]{lem:learnRemainingGraph}
The leader vertex $\ell$ can learn the set $F$ in $O(n/\eps)$ rounds.
\end{lemma}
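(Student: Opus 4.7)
The plan is to prove the lemma in two steps: (i) bound the cardinality $|F|$ by $O(n/\eps)$, and (ii) route those $|F|$ edges to $\ell$ via a BFS tree using standard convergecast pipelining. The size bound is where the real content is; the routing is routine.

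For step~(i), the key invariant I will establish is that after Phase~I, every vertex $v \in V$ satisfies $|N(v) \cap U| \le 1/\eps$. I will split into two cases. If $v$ still belongs to $C$ when Phase~I terminates, then the while-loop exit condition directly gives $|N(v) \cap R| \le 1/\eps$, and $R = V\setminus S = U$ at that moment. If $v$ was removed from $C$ earlier, then at the iteration in which $v$ was chosen as a center the algorithm executed $S \leftarrow S \cup N(v)$; since $S$ is monotonically growing, $N(v) \subseteq S$ thereafter, so $N(v) \cap U = \emptyset$. Partitioning $F$ into edges with both endpoints in $U$ and edges with one endpoint in each of $U$ and $S$, and double-counting via the vertex bound, I get
\[
|F| \;=\; |E(G[U])| + |E_G(U,S)| \;\le\; \tfrac{|U|}{2\eps} + \tfrac{|S|}{\eps} \;=\; O(n/\eps).
\]

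For step~(ii), I elect a leader $\ell$ and build a BFS tree $T$ rooted at $\ell$ in the standard $O(n)$ rounds. After one extra round in which each vertex announces its membership in $U$ to its neighbors (one bit per edge), every vertex knows exactly which of its incident edges lie in $F$; assigning each such edge to its lower-ID endpoint designates a unique owner. Every edge is then an $O(\log n)$-bit token with a unique source, and the standard convergecast pipelining lemma delivers $|F|$ tokens to the root of a depth-$d$ tree in $O(d + |F|)$ rounds by having each node forward its lowest-priority pending token upward each round. With $d \le n - 1$ and $|F| = O(n/\eps)$, the total is $O(n + n/\eps) = O(n/\eps)$.

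The main---and only mildly subtle---obstacle is to notice that the degree bound $|N(v)\cap U| \le 1/\eps$ must be argued for \emph{every} vertex, not only those remaining in $C$. The termination condition handles $v \in C$ directly, while the monotonicity of $S$ handles $v \notin C$ trivially because the algorithm added \emph{all} of $N(v)$ (not just $N(v) \cap R$) when $v$ served as a center. Once that uniform bound is in hand, the edge-counting and the pipelining argument are standard, and summing the two phases proves the stated $O(n/\eps)$ round complexity.
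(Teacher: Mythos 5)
Your proposal is correct and follows essentially the same route as the paper: both rest on the observation that after Phase~I every vertex has at most $1/\eps$ neighbors in $U$ (hence $|F|=O(n/\eps)$) and then convergecast the edges over a BFS tree with standard pipelining. Your version is, if anything, slightly more careful, since you make explicit the case analysis (vertices still in $C$ versus vertices already processed) that the paper's one-line justification of the degree bound glosses over.
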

\begin{proof}
Consider the setting in which each node of the communication graph has at most $c$ distinct pieces of information. 
By building a BFS tree with a leader as the root and pipelining messages the leader vertex can learn the pieces of information in $O(c\cdot n)$ rounds. 

Any node in $v\in V$ has at most $1/\eps$ neighbors in $U$ as otherwise $v$ would be processed in the first phase and all of its neighbors would join $S$. We make $v$ responsible for sending its $1/\eps$ incident edges of $F$ to the leader. Using the aforementioned pipelining argument leader $\ell$ learns the set $F$ in $O(n/\eps)$ rounds.
\end{proof}
\begin{lemma}
\label[lemma]{lem:computeRemainingGraph}
The graph $H=G^2[U]$ can be computed using the knowledge of $F$.
\end{lemma}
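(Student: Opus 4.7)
The plan is to observe that every edge of $H = G^2[U]$ is witnessed by a walk in $G$ of length at most two all of whose edges lie in $F$, so $F$ determines $H$ completely.

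Fix $u, w \in U$ with $\{u,w\}$ an edge of $H$, i.e., $\mathrm{dist}_G(u,w) \le 2$. I would split into two cases. If $\mathrm{dist}_G(u,w)=1$, then $\{u,w\} \in E$ and, since $u \in U$, the edge lies in $F$ by definition. If $\mathrm{dist}_G(u,w)=2$, then there is some $x \in V$ with $\{u,x\}, \{x,w\} \in E$; although $x$ itself may lie in $S$, each of the two edges has an endpoint ($u$ or $w$) in $U$ and so is in $F$. Thus the entire 2-path certifying the edge $\{u,w\}$ is visible inside $F$. Conversely, any pair $u, w \in U$ for which $F$ contains the edge $\{u,w\}$ or a common neighbor gives a walk of length at most two in $G$, hence an edge of $H$.

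Therefore the reconstruction rule is: for every pair $u, w \in U$, include $\{u,w\}$ in $H$ iff $\{u,w\} \in F$, or there exists $x \in V$ with $\{u,x\}, \{x,w\} \in F$. The leader can perform this test locally once it holds $F$.

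The only mild subtlety is that the leader also needs to know the set $U$ in order to restrict the reconstruction to pairs in $U \times U$; this is handled essentially for free by tagging each edge of $F$ with the $U$-membership bits of its two endpoints, which the responsible vertex in \Cref{lem:learnRemainingGraph} can piggyback at no asymptotic cost. I do not anticipate a substantive obstacle here: the lemma is really the structural observation that $2$-hop reachability within $U$ in $G$ depends only on the edges of $G$ incident to $U$.
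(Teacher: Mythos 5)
Your proof is correct and follows essentially the same route as the paper's: the paper defines $F' = F \cup \{\{u,v\} \mid u,v \in U,\ \exists w:\ \{u,w\},\{v,w\} \in F\}$ and verifies both inclusions exactly as you do, using the fact that each edge of a witnessing $2$-path has an endpoint in $U$ and hence lies in $F$. Your added remark about the leader learning $U$-membership is a reasonable implementation detail the paper leaves implicit, but it does not change the argument.
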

\begin{proof}
We use the edge set $F$ to form the following graph $H'=(U,F')$ with
\begin{align}
F' & =F \cup F_1'\text{ where}\\
F'_1 & =\{\{u,v\}\mid u,v\in U\text{, exists $w$ with }\{u,w\}, \{v,w\}\in F, \} 
\end{align}
Let $H=G^2[U]=(U,E_H)$. We show that $H=H'$, i.e., that $F'$ equals $E_H$.
First, let $e=\{u_1,u_2\}\in E_H$. If $e\in E(G)$, then $e\in F\subseteq F'$ by the definition of $F$. If $e\notin E$, then there exists a $w \in V$ with $e_1=\{u_1,w\}\in E$ and $e_2=\{w,u_2\}\in E$. As both edges $e_1$ and $e_2$ have at least one endpoint in $U$ they are contained in $F$. Thus $e\in F'$ by the definition of $F'_1$.

 For the reverse inclusion let $e=\{u_1,u_2\}\in F'$. By the definition of $F$ and $F_1'$ the edge $e$ is an edge of $G^2[V]$. As both of its endpoints are not in $S$ the edge $e$ is an edge of $G^2[U]$, i.e., $e\in E_H$. 
\end{proof}

\begin{lemma}[Valid Vertex Cover]\label[lemma]{lem:validVC}
The computed set $S\cup R^*$ is a valid vertex cover of $G^2$. 
\end{lemma}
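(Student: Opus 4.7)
The plan is to argue case by case on the endpoints of an arbitrary edge of $G^2$. Fix any edge $e=\{u,v\}$ of $G^2$; we must show that $e$ has at least one endpoint in $S\cup R^*$. First I would dispose of the easy case: if either $u$ or $v$ lies in $S$, we are immediately done, since $S\subseteq S\cup R^*$.

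The interesting case is when both $u,v\in U=V\setminus S$. Here I would invoke \Cref{lem:computeRemainingGraph}: since $e$ is an edge of $G^2$ and both endpoints are in $U$, $e$ is an edge of $H=G^2[U]$. Because $R^*$ is (by construction in line \ref{alg:line:bruteforce}) an optimal vertex cover of $H$, it is in particular a vertex cover of $H$, so at least one endpoint of $e$ lies in $R^*\subseteq S\cup R^*$.

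Combining the two cases, every edge of $G^2$ has an endpoint in $S\cup R^*$, which is the definition of a vertex cover of $G^2$. The only nontrivial ingredient here is \Cref{lem:computeRemainingGraph}, which guarantees that the graph $H$ computed by the leader truly coincides with $G^2[U]$, so that the locally computed optimum $R^*$ genuinely covers all $G^2$-edges whose endpoints both avoid $S$. There is no real obstacle beyond this appeal; the argument is a one-line case split.
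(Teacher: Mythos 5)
Your proof is correct and follows essentially the same two-case argument as the paper: edges with an endpoint in $S$ are covered trivially, and edges with both endpoints in $U$ are covered because $R^*$ is a vertex cover of $G^2[U]$. The explicit appeal to \Cref{lem:computeRemainingGraph} to certify that the leader's graph $H$ really is $G^2[U]$ is a reasonable extra justification, but the substance of the argument is identical.
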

\begin{proof}
Any edge $\{u,v\}$ with at least one of the endpoints in $S$ is covered as all vertices in $S$ are contained in the cover. Thus let $\{u,v\}$ be an edge of $G^2$ with $u,v\in U$. As $R^*$ is a vertex cover of $G^2[U]$ the edge is covered by $R^*$. 
\end{proof}

\begin{lemma}[$S$ approximates well]\label[lemma]{lem:Sapprox} Let $l\in\NN_{>0}$ be an arbitrary positive integer. If the algorithm is executed with $\eps=1/l$ we obtain
$|S|\leq (1+\eps)\cdot |O|$ where $O$ is any vertex cover of the graph $G^2[S]$.
\end{lemma}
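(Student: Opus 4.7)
My approach is to view $S$ as a disjoint union of ``batches'', one per iteration of the while loop, show that each batch is a clique in $G^2[S]$, and then compare the size of each batch to the (forced) cost an arbitrary vertex cover $O$ of $G^2[S]$ must pay inside it.

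First I would set up notation. Let the while loop of Phase~I execute a total of $k$ times and, for iteration $i=1,\dots,k$, let $c_i$ be the chosen center and let $R_i$ denote the value of the set $R$ at the start of iteration $i$. Define the batch $B_i := N(c_i)\cap R_i$. The algorithm's update rule ensures $R_{i+1}=R_i\setminus N(c_i)$, so $B_{i+1}\subseteq R_{i+1}\subseteq R_i\setminus B_i$; thus the batches $B_1,\dots,B_k$ are pairwise disjoint and
\[
S \;=\; \bigsqcup_{i=1}^{k} B_i, \qquad |S| \;=\; \sum_{i=1}^{k} |B_i|.
\]
Moreover the loop condition guarantees $|B_i|=|N(c_i)\cap R_i| > 1/\eps = l$, hence $|B_i|\geq l+1$ for every $i$.

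Next I would use the key structural observation: any two vertices $u,v\in B_i\subseteq N(c_i)$ share the common $G$-neighbor $c_i$, so they are at distance at most $2$ in $G$ and therefore adjacent in $G^2$. Since $B_i\subseteq S$, it follows that $B_i$ induces a clique in $G^2[S]$. For any vertex cover $O$ of $G^2[S]$ this forces $|O\cap B_i|\geq |B_i|-1$, because a clique on $|B_i|$ vertices has no independent set of size $2$. Summing over the disjoint batches gives
\[
|O|\;\geq\;\sum_{i=1}^{k}|O\cap B_i|\;\geq\;\sum_{i=1}^{k}\bigl(|B_i|-1\bigr).
\]

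Finally I would do the easy arithmetic. Using $|B_i|\geq l+1$ we have $|B_i|-1\geq l=1/\eps$, equivalently $|B_i|\leq (1+\eps)(|B_i|-1)$. Summing this inequality across all $i$ yields
\[
|S| \;=\; \sum_{i=1}^{k}|B_i| \;\leq\; (1+\eps)\sum_{i=1}^{k}(|B_i|-1)\;\leq\;(1+\eps)|O|,
\]
which is the desired bound. There is no real obstacle here: the only conceptual point is the clique structure of each batch in $G^2[S]$, which follows directly from the definition of $G^2$ and the fact that a batch is contained in a common $G$-neighborhood. All the rest is bookkeeping and a one-line calculation.
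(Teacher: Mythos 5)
Your proof is correct and follows essentially the same route as the paper: decompose $S$ into the per-iteration batches, observe each batch is a clique in $G^2[S]$ (forcing $O$ to pay all but one vertex per batch), and conclude by arithmetic using $|B_i|\geq l+1$. The only cosmetic difference is that you prove the bound batch-by-batch via $|B_i|\leq(1+\eps)(|B_i|-1)$ and sum, whereas the paper bounds the ratio $|S|/|O|$ globally; the two calculations are equivalent.
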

\begin{proof}
If no vertex is processed in the loop we have $S=\emptyset$ and the claim holds trivially. Otherwise, 
  let $c_1,\dots,c_k$ be the nodes in $C$ that are chosen in loop, according to their order in which they are processed and for $i=1,\ldots,k$ let $S_i=N(v_i)\cap R$ be the set of vertices that join the set $S$ when node $c_i$ is processed, define $d_i=|S_i|$ and obtain the partition
  $S=S_1\cup \ldots \cup S_k$~.  
	From the condition in the while loop, we have, $d_{i} > 1/\eps$, but since \(d_{i}\) can only be an integer, we get $d_{i} \ge \lfloor 1/\eps \rfloor + 1$, and in particular $d_i\geq l+1\geq 2$. 
  Define $O_i=O\cap S_i$ and obtain the partition $O=O_1\cup \ldots \cup O_k~.$
  For any $i=1,\ldots,k$ the graph $G^2[S_i]$ forms a clique and $O_i$ has to cover all edges with both endpoints in $S_i$. Thus we obtain that $|O_i|\geq |S_i|-1=d_i-1>0$.  We obtain the following calculation in which we never divide by zero due to $d_i-1>0$
    \begin{align}
      \frac{|S|}{|O|} &\le \frac{\sum_{i=1}^{k}{d_i}}{\sum_{i=1}^{k}{(d_i - 1)}}
                      = \frac{\sum_{i=1}^{k}{d_i}}{(\sum_{i=1}^{k}{d_i}) - k}
                      = 1 + \frac{k}{(\sum_{i=1}^{k}{d_i}) - k}\\
											& \leq 1 + \frac{k}{(\sum_{i=1}^{k}{(\lfloor 1/\eps \rfloor+1)}) - k} 
											= 1+ \frac{k}{k\cdot \lfloor 1/\eps \rfloor  + k - k} \\
											&=  1+ \lfloor 1/\eps \rfloor^{-1}
											=1+\eps
    \end{align}

    The last equality follows as  $1/\eps=l$ is an integer and the claim follows by multiplying both sides of the inequality with $|O|$. 
		Note that $O$ is only charged at most once for covering every edge of $G^2[S]$.
\end{proof}

\begin{proof}[Proof of \Cref{thm:G2VC}] 
  If $\eps>1$ simply add all vertices to the cover and obtain a $2$-approximation. Note that a $2$-approximation of $G^2$-MVC is a trivial task that requires no communication. To see why, note that the complemented of any solution $S$ to $G^2$-MVC is an inclusion maximal independent set (MIS) $I$ in $G^2$. The size of any $G^2$-MIS in a connected $n$-node graph $G$ is upper bounded by $n/2$ as one can pair any vertex in $I$ with a distinct vertex in $V\setminus I$ (see \Cref{lem:structAllNodesVC} for a formal proof). Thus, taking all nodes into the cover gives a $2$-approximation.
  Otherwise  $\eps'=1/l$ where $l=\lceil 1/\eps\rceil$ and apply the aforementioned algorithm with $\eps'$ instead of $\eps$---note, if $1/\eps$ is an integer we have $\eps=\eps'$.

\noindent\textbf{Correctness:} The set $S\cup R^*$ is a valid vertex cover due to \Cref{lem:validVC}.

\noindent\textbf{Runtime:} The set $R$ shrinks by at least $1/\eps'$ vertices in each iteration of the loop and one iteration (with an arbitrary symmetry breaking as explained before) can be implemented in $O(1)$ rounds in CONGEST. Thus the number of iterations is upper bounded by $|V|/(1/\eps')=\eps'|V|$ and the first phase can be executed in $O(\eps'\cdot |V|)=O(\eps\cdot n)$ rounds. Learning the sets $F_1$ and $F_2$ takes $O(n/\eps')=O(n/\eps)$ rounds due to \Cref{lem:learnRemainingGraph}.
Computing the optimal solution $R^*$ of $G^2[U]$ is done locally and the solution can be distributed to all nodes in $O(n)$ rounds. Thus the runtime is $O(\eps\cdot n+n/\eps+n)=O(n/\eps)$~. 

\noindent\textbf{Approximation Factor:} We show that the resulting vertex cover $S\cup R^*$ is a $(1+\eps)$-approximation of the VC of $G^2$.
Let $OPT$ be an optimal VC of $G^2$, let $S$ and $U=V\setminus S$ be the sets after the execution of the loop. 
Define $OPT_S=OPT\cap S$ and $OPT_U=OPT\cap U$. 
As $OPT_S$ is a vertex cover of $G^2[S]$  \Cref{lem:Sapprox} implies that $|S|\leq (1+\eps')|OPT_S|$. 
Vertices in $OPT_S$ cannot cover any edge in $G^2[U]$ and thus $OPT_U$ is a vertex cover of $G^2[U]$. As $R^*$ is an optimal vertex cover of $G^2[U]$ we obtain 
 $|R^*|\leq |OPT_U|$. We obtain
\begin{align*}
  |S\cup R^*|&\leq |S|+|R^*|\leq (1+\eps')|OPT_S|+|OPT_U| \\
             &\leq (1+\eps')|OPT|\leq (1+\eps)|OPT|~.  & \qedhere
\end{align*}
\end{proof}

 We begin by proving the fact that every vertex cover of $G^r$ is quite large and this yields a trivial better-than-$2$ approximation for $G^r$, even for relatively small $r$.
Therefore, we have a \(0\)-round approximation algorithm for unweighted vertex cover.
We now formalize this result.
\begin{lemma}
\label{lem:structAllNodesVC}
For a connected \(n\)-vertex graph \(G\), the size of any vertex cover in
\(G^r\) \(1 \le r \le n\) is at least \(n - n/\alpha\) where \(\alpha = \lfloor r/2 \rfloor + 1\).
Thus, a solution that includes all vertices is a $(1 + 1/\lfloor r/2 \rfloor)$-approximation to unweighted MVC in \(G^{r}\).
\end{lemma}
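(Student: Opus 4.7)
The plan is to work with the complement: a vertex cover in $G^r$ of size $s$ corresponds to an independent set in $G^r$ of size $n - s$, so proving the lower bound $n - n/\alpha$ on every vertex cover is equivalent to proving the upper bound $n/\alpha$ on every independent set in $G^r$. I would fix an arbitrary independent set $I$ of $G^r$ and show $|I| \le n/\alpha$.

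The key observation is that any two distinct vertices $u, v \in I$ satisfy $d_G(u, v) \ge r+1$, since otherwise they would be adjacent in $G^r$. Set $k = \lfloor r/2 \rfloor$ and, for each $v \in I$, consider the closed ball $B_v = \{w \in V : d_G(v, w) \le k\}$. By the triangle inequality, if some $w$ lies in both $B_u$ and $B_v$ for distinct $u, v \in I$, then $d_G(u, v) \le 2k \le r$, a contradiction; hence the balls $\{B_v\}_{v \in I}$ are pairwise disjoint. Next I would use connectivity of $G$ to argue that each $B_v$ contains at least $\alpha = k+1$ vertices: running BFS from $v$, either distinct BFS layers $0, 1, \dots, k$ each contribute at least one vertex (yielding $k+1$ vertices), or the BFS saturates the whole graph earlier, in which case $|B_v| = n \ge k+1$ since $r \le n$ implies $k+1 \le n$. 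Summing, $\alpha \cdot |I| \le \sum_{v \in I} |B_v| \le n$, so $|I| \le n/\alpha$, and consequently every vertex cover of $G^r$ has size at least $n - n/\alpha$.

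For the approximation guarantee, taking $S = V$ clearly gives a valid vertex cover, and the previous bound yields
\[
\frac{|S|}{|\mathrm{OPT}|} \le \frac{n}{n - n/\alpha} = \frac{\alpha}{\alpha - 1} = 1 + \frac{1}{\lfloor r/2 \rfloor},
\]
which is exactly the claimed approximation factor.

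I do not expect any serious obstacle here: the only subtle step is the lower bound on $|B_v|$, which requires invoking connectivity together with $r \le n$ to rule out the degenerate case where the ball exhausts the graph before reaching radius $k$. Everything else is a packing argument via disjoint balls plus a one-line calculation for the approximation ratio.
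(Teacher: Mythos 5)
Your proof is correct and follows essentially the same route as the paper: both pass to the complementary independent set $I$ in $G^r$ and run a packing argument that charges each $v \in I$ with $\lfloor r/2\rfloor + 1$ vertices within distance $\lfloor r/2\rfloor$ of it (the paper phrases this as assigning path vertices; you phrase it as disjoint balls, which is the same count). If anything, your version is slightly more careful, since you explicitly justify disjointness via the triangle inequality and handle the degenerate case where a ball exhausts the graph, points the paper's proof glosses over.
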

\begin{proof}
Consider any independent set \(I\) in \(G^r\). Two vertices in \(I\) must be at least distance \(r+1\) apart in \(G\) and for every vertex \(u \notin I\), there can be at most one \(v \in I\) such that the distance between \(u\) and \(v\) in \(G\) is at most \(\lfloor r/2 \rfloor\). This is because if there is more than one such vertex then \(I\) is no longer an independent set in \(G^r\). Moreover, this is true for every vertex on the path from \(u\) to \(v\) in \(G\). In other words, for every vertex \(v \in I\), we can assign at least \(\lfloor r/2 \rfloor\) unique vertices that are not in \(I\). This implies \(|I| < n/\alpha\) as otherwise there will be more than \(n\) vertices in \(G\).

Since the complement of any vertex cover is an independent set, the lemma follows.
\end{proof}
The above lemma implies that the solution containing all vertices is a \(2\)-approximation for unweighted vertex cover in \(G^2\) and the approximation factor goes closer to \(1\) as \(r\) is increased.

\subsection{\CONGEST: $(1+\eps)$-Approximation for $G^2$-MWVC}
\label{sssec:UBweightedMVC}

We now show how to extend the algorithm for MVC on input $G^2$ using communication network $G$, described in the previous section,
to minimum \textit{weighted} vertex cover (MWVC). 
Since all 0-weight vertices can be included in the vertex cover with no cost, we assume without loss of generality, that all
vertex weights are positive.
For ease of exposition, we assume that every vertex weight can be represented in $O(\log n)$ bits.

\begin{theorem}
\label{theorem:MWVCUpperBound}
For any $\eps > 0$, there is a deterministic, distributed \CONGEST algorithm that yields a $(1+\eps)$-approximation
in $O(n\log n /\eps)$ rounds for the MWVC problem on input $G^2$ with communication network $G$.
\end{theorem}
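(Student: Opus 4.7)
The plan is to extend Algorithm~\ref{alg:VCepsSimplified} by refining the Phase~I trigger to be weight-class aware, while keeping the overall two-phase architecture (center elimination followed by a leader that learns a residual sparse graph $H$ and solves it optimally) unchanged. Since every weight fits in $O(\log n)$ bits, we may scale weights to positive integers and partition vertices into $O(\log n)$ geometric \emph{weight classes} $V_j := \{v \in V : 2^j \leq w(v) < 2^{j+1}\}$. The point of this bucketing is that within one class, weights differ by at most a factor of two, which is exactly the slack needed to lift the unweighted covering-clique argument of \Cref{lem:Sapprox} to the weighted setting with essentially no loss.

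Modify Phase~I as follows: as long as some center $c \in C$ and some class index $j$ satisfy $|N(c)\cap R \cap V_j| > 2(1+\eps)/\eps$, mark $(c,j)$ as a candidate, add the set $S_i := N(c)\cap R \cap V_j$ to $S$, and remove it from $R$ (leaving $c$ in $C$ in case it still qualifies for another class). As in the unweighted case, candidates whose IDs are locally maximal in the $2$-hop neighborhood in $G$ may fire simultaneously, so the loop runs in $O(n)$ rounds since every firing deletes strictly more than $2/\eps$ vertices from $R$. When the loop terminates, every vertex in $V$ has at most $2(1+\eps)/\eps$ uncovered neighbors in each of the $O(\log n)$ classes, hence $O(\log n / \eps)$ uncovered neighbors in total. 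Consequently the edge set $F = \{\{u,v\}\in E : u \in U\}$ has size $O(n \log n / \eps)$, and the proofs of \Cref{lem:learnRemainingGraph} and \Cref{lem:computeRemainingGraph} go through verbatim with this new per-vertex bound, so the leader learns $F$, reconstructs $H = G^2[U]$, computes an optimal MWVC $R^\ast$ of $H$ locally, and broadcasts it, all within $O(n\log n /\eps)$ rounds.

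For the approximation analysis, fix a chunk $S_i \subseteq V_j$ added by some center $c$. By construction $|S_i| \geq 2(1+\eps)/\eps$, so
\begin{equation*}
\max_{v \in S_i} w(v) \;<\; 2^{j+1} \;\leq\; \frac{2 \cdot 2^j \cdot |S_i|}{2(1+\eps)/\eps} \;\leq\; \frac{\eps}{1+\eps}\, w(S_i).
\end{equation*}
Because $S_i \subseteq N(c)$, the subgraph $G^2[S_i]$ is a clique, so any vertex cover of $G^2[S_i]$---in particular $OPT \cap S_i$ where $OPT$ is a minimum weighted vertex cover of $G^2$---includes all but at most one vertex of $S_i$, giving $w(OPT \cap S_i) \geq w(S_i) - \max_{v \in S_i} w(v) \geq w(S_i)/(1+\eps)$. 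Summing over the pairwise disjoint chunks yields $w(S) \leq (1+\eps)\, w(OPT \cap S)$, and combining with $w(R^\ast) \leq w(OPT \cap U)$ (optimality of $R^\ast$ on $G^2[U]$, using that $OPT \cap U$ is feasible there) gives $w(S \cup R^\ast) \leq (1+\eps)\, w(OPT)$. A rescaling $\eps \mapsto \eps/c$ for a small integer constant $c$ absorbs the integrality issue for $2(1+\eps)/\eps$, exactly analogously to the $\eps' = 1/\lceil 1/\eps \rceil$ trick in the proof of \Cref{thm:G2VC}.

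The main obstacle is designing the Phase~I trigger so that it is simultaneously (i) decidable from a center's $1$-hop view in $G$, (ii) strong enough that each chunk has $\max$-to-sum weight ratio at most $\eps/(1+\eps)$, and (iii) weak enough that the leftover graph has maximum $G$-degree $O(\log n/\eps)$ in $U$, so Phase~II still fits in the $O(n\log n/\eps)$ budget. The geometric bucketing reconciles all three, and modulo this refinement the entire argument is a direct weighted analogue of \Cref{thm:G2VC}.
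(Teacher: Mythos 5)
Your proposal is correct and follows essentially the same route as the paper: both bucket weights into $O(\log n)$ geometric classes so that the unweighted clique-charging argument of \Cref{lem:Sapprox} lifts with only a $(1+\eps)$ loss, and both verify that once no center fires, each vertex retains $O(\log n/\eps)$ uncovered neighbors so that $|F|=O(n\log n/\eps)$ and Phase~II goes through as in \Cref{lem:learnRemainingGraph}. The only (immaterial) difference is that you trigger on a per-class cardinality threshold while the paper triggers on the weight-ratio condition $w^*_i(c)\le W_i(c)\cdot\eps/(1+\eps)$, which within a factor-2 weight class is the same condition up to constants.
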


We make two changes to Algorithm \ref{alg:VCepsSimplified}.
\begin{itemize}
\item[(i)]  In Algorithm \ref{alg:VCepsSimplified}, we repeatedly picked a vertex $c$ with a large enough still-active neighborhood (i.e., $N(c) \cap R$) 
to ensure that when $N(c) \cap R$ is added to the vertex cover, we continue to get a good approximation.
The cardinality of active neighborhoods is not useful in the weighted setting, but we can derive a corresponding 
condition for picking $c$ as follows.
For any vertex $c$, let $W(c)$ denote $\sum_{v\in N(c) \cap R} w(v)$. Let $w^*(c)$ be the maximum weight of a vertex in
$N(c) \cap R$. Then, $W(c)- w^*$ is a lower bound on the weight of an optimal vertex cover
of $G^2$ restricted to $N(c) \cap R$, i.e., the clique in $G^2$ induced by $N(c) \cap R$. 
Therefore, to be able to safely add $N(c) \cap R$ to the vertex cover the following condition needs to be satisfied:
$W(c) \le (1+\eps) (W(c) - w^*(c))$, or equivalently
\begin{equation}
\label{equation:weightedCondition}
w^*(c) \le W(c) \cdot \frac{\eps}{(1+\eps)}.
\end{equation}
\item[(ii)] 
We apply the above condition, not to the entire active neighborhood of $c$, but to subsets with similar weights.
Let $w_*(c)$ denote the minimum weight of a vertex in $N(c)$.
We partition $N(c)$ into subsets $N_i(c) := \{v \in N(c) \mid w_*(c) \cdot 2^i \le w(v) < w_*(c) \cdot 2^{i+1}\}$
for $i = 0, 1, \ldots, I$, where $I = O(\log_2 n)$ (since all vertex weights have $O(\log n)$-bit representations).
Instead of checking condition (\ref{equation:weightedCondition}) for $N(c) \cap R$, we check it for
$N_i(c) \cap R$, for each $i$.
Let $w^*_i(c)$ denote the maximum weight of a vertex in $N_i(c) \cap R$ and similarly let
$W_i(c)$ denote $\sum_{v\in N_i(c) \cap R} w(v)$
Specifically, we replace Line 4 in Algorithm \ref{alg:VCepsSimplified} by 
$$\textbf{while}\textit{ there is a vertex $c \in C$ and $i$ with } w^*_i(c) \le W_i(c) \cdot \frac{\eps}{(1+\eps)}\textbf{ do}$$
and perform the body of the loop with $N_i(c)$ replacing $N(c)$ in Lines 6 and 7.
\end{itemize}

\noindent
To ensure efficiency of our algorithm, the key property we need is for $|F|$ to be small, so that a leader can gather all of $F$ 
and the algorithm can proceed to Phase II.
In the analysis of Algorithm \ref{alg:VCepsSimplified}, this simply followeed from the fact that after Phase I, every vertex has at most $1/\eps$ neighbors
in $U$ (the set of vertices not in the cover).
The following lemma proves a similar condition for the current algorithm.
\begin{lemma} 
\label{lemma:UpperBoundF}
$|F| = O\left(n \cdot \frac{(1+\eps)}{\eps} \cdot \log_2 n\right)$.
\end{lemma}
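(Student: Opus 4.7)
The plan is to deduce the global bound on $|F|$ from the per-vertex degree estimate
\[
|N(v) \cap U| \;=\; O\!\bigl(\tfrac{1+\eps}{\eps} \log n\bigr) \qquad \text{for every } v \in V.
\]
Given such a per-vertex bound, summing over $v$ suffices, because each edge of $F$ has at least one endpoint in $U$ and hence contributes at least once to $\sum_{v \in V} |N(v) \cap U|$, giving $|F| \le \sum_{v \in V} |N(v) \cap U|$ (edges with both endpoints in $U$ are counted twice; this is harmless for an upper bound).

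The heart of the argument is a per-weight-class estimate that mirrors the unweighted case of \Cref{lem:learnRemainingGraph}. Fix any $c \in V$ and any index $i$. At termination of the loop, the loop condition fails on the pair $(c,i)$, i.e.\ $w^*_i(c) > W_i(c) \cdot \tfrac{\eps}{1+\eps}$. Because every vertex of $N_i(c)$ has weight in the dyadic interval $[w_*(c) \cdot 2^i,\ w_*(c) \cdot 2^{i+1})$, I get $w^*_i(c) < w_*(c) \cdot 2^{i+1}$ and $W_i(c) \ge |N_i(c) \cap R| \cdot w_*(c) \cdot 2^i$. Substituting these two estimates into the negated condition and cancelling the factor $w_*(c) \cdot 2^i$ on both sides yields the clean bound $|N_i(c) \cap R| < \tfrac{2(1+\eps)}{\eps}$.

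Summing over classes finishes the per-vertex estimate. Since vertex weights are $O(\log n)$-bit integers, there are only $I+1 = O(\log n)$ nonempty classes $N_i(c)$, and therefore $|N(c) \cap R| = \sum_i |N_i(c) \cap R| = O\!\bigl(\tfrac{1+\eps}{\eps} \log n\bigr)$. Noting that $R = V \setminus S = U$ at the end of Phase~I (since $R$ is initialised to $V$ and only loses vertices that join $S$), this gives the per-vertex bound for every $c$. Summing over $c \in V$ then produces $|F| \le \sum_{c} |N(c) \cap U| = O\!\bigl(n \cdot \tfrac{1+\eps}{\eps} \log n\bigr)$, which is the target bound.

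The main subtlety I expect is justifying that the negated loop condition applies uniformly to every pair $(c,i)$ and not only to those whose center is still in $C$. A vertex $c$ removed from $C$ by Line~5 after being processed for some class $i^*$ might, in principle, leave a different class $N_i(c)$, $i \neq i^*$, unbounded. The clean way around this is to observe that keeping $c$ in $C$ after it is processed is an algorithmically equivalent formulation: once $N_{i^*}(c) \cap R = \emptyset$ the pair $(c,i^*)$ can never again satisfy the loop condition (it is a no-op), and re-selecting $c$ only processes genuinely new classes. Under this equivalent formulation the loop exhausts \emph{all} valid $(c,i)$ pairs, so the negated condition and therefore the per-class bound above hold for every $(c,i)$, and the summation argument sketched in the previous paragraph delivers the lemma.
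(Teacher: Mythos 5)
Your proposal is correct and follows essentially the same route as the paper's proof: negate the loop condition for each pair $(c,i)$, use the dyadic weight-class bounds $w^*_i(c) < w_*(c)\cdot 2^{i+1}$ and $W_i(c) \ge |N_i(c)\cap R|\cdot w_*(c)\cdot 2^i$ to get $|N_i(c)\cap R| < 2(1+\eps)/\eps$, and then sum over the $O(\log n)$ classes and over all vertices. The only difference is that you explicitly justify why the negated condition holds for \emph{every} pair $(c,i)$ at termination (not just for centers still in $C$), a point the paper's proof asserts without comment; your resolution is sound.
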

\begin{proof}
Suppose that for all $c \in V$ and $i$, we have $w^*_i(c) > W_i(c) \cdot \eps/(1+\eps)$.
This is guaranteed after Phase 1 of the algorithm.
Let $s$ denote the number of vertices in $N_i(c) \cap R$.
Since all vertices in $N_i(c) \cap R$ have weights in the range $[w_*(c) \cdot 2^i, w_*(c) \cdot 2^{i+1})$, we see
that $w_*(c) \cdot 2^{i+1} > w^*_i(c)$ and $W_i(c) \ge s \cdot w_*(c) \cdot 2^i$.
This leads to the inequality 
$$w_*(c) \cdot 2^{i+1} > s \cdot w_*(c) \cdot 2^i \cdot \frac{\eps}{(1 + \eps)},$$ 
which in turn yields the upper bound
$$s < 2 \cdot \frac{1+\eps}{\eps}.$$ 
Thus, each vertex $c$ has at most $O((1+\eps)/\eps \log_2 n)$ neighbors in $U$ after Phase I and the lemma follows
by accounting for all vertices.
\end{proof}
The rest of the running time analysis simply follows as in the corresponding steps 
for Algorithm \ref{alg:VCepsSimplified} (see Lemmas \ref{lem:learnRemainingGraph}, \ref{lem:computeRemainingGraph}). 
Note that in the current algorithm, Phase I runs in $O(n \log n)$ rounds because we sequentially 
consider every vertex $c$ and neighbor set $N_i(c)$ for $O(\log_2 n)$ possible values of $i$.
Phase II run in $O(n \cdot \frac{(1+\eps)}{\eps} \cdot \log_2 n)$ rounds because of the size
of $|F|$.
The correctness follows immediately, as in Lemma \ref{lem:validVC}.
The approximation factor analysis depends on the fact that whenever we add $N_i(c) \cap R$ to the vertex cover,
the weight of the added vertices is within an $(1+\eps)$ factor of what the optimal solution pays
to cover the edges in the subgraph of $G^2$ induced by $N_i(c) \cap R$.
The calculations follow the steps in the proof of Lemma \ref{lem:Sapprox}.

 \subsection{\CLIQUE:  $(1+\eps)$-Approximation for  $G^2$-MVC}
\label{sssec:UBcongestedClique}
In the \CLIQUE we obtain faster deterministic and randomized algorithms for $(1+\eps)$-approximation of MVC on $G^2$. 
As one component these algorithm use that learning the set $F$ is much faster as formalized in the next lemma. 

\begin{lemma}[Learning Remaining Graph in the \CLIQUE]\label{lem:learnRemainingGraphClique}
The leader vertex $\ell$ can learn the set $F$ in $O(1/\eps)$ rounds.
\end{lemma}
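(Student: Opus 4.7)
The plan is to invoke the structural property already established in the proof of \Cref{lem:learnRemainingGraph}: after Phase I terminates, every vertex $v \in V$ has at most $1/\eps$ neighbors in $U = V \setminus S$, because otherwise $v$ would still be a candidate center and Phase I would not have halted. Consequently, each vertex contributes at most $1/\eps$ edges to $F$, and the total size of $F$ is $O(n/\eps)$.

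First, each vertex needs to know which of its $G$-neighbors belong to $U$. Since in Phase I every node learns whether it joined $S$, a single round in which each vertex announces its status (in $S$ or in $U$) to each of its $G$-neighbors suffices. After that round, every vertex $v$ knows the (at most $1/\eps$) edges of $F$ that are incident to it.

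Next, we exploit the all-to-all communication of the \CLIQUE model: in every round each vertex can send an $O(\log n)$-bit message directly to the leader $\ell$. Each vertex $v$ simply sends its (at most $1/\eps$) incident edges in $F$ to $\ell$, one per round; since each vertex only needs to send $O(1/\eps)$ such edges and $\ell$ is able to receive one distinct $O(\log n)$-bit message from every vertex in each round, these transmissions happen concurrently with no bandwidth bottleneck at $\ell$. After $O(1/\eps)$ rounds, $\ell$ has received all of $F$.

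The only point that is mildly delicate is the bandwidth at the leader: naively one might worry that $|F| = \Theta(n/\eps)$ requires $\Theta(n/\eps)$ rounds to gather, but this ignores the $\Theta(n)$ incoming bandwidth per round that $\ell$ enjoys in the \CLIQUE; the per-vertex bound of $1/\eps$ edges is what lets us distribute the load uniformly across senders and finish in $O(1/\eps)$ rounds rather than $O(n/\eps)$.
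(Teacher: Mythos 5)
Your proposal is correct and follows essentially the same route as the paper's proof: after Phase I each vertex has at most $1/\eps$ neighbors in $U$, so each vertex forwards its at most $1/\eps$ incident edges of $F$ directly to $\ell$ in parallel, using one \CLIQUE round per edge. The extra observations you make (the one-round exchange of membership status and the remark that the leader's $\Theta(n)$ incoming bandwidth avoids a bottleneck) are correct elaborations of details the paper leaves implicit.
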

\begin{proof}
Just as in the proof of \Cref{lem:learnRemainingGraph} any node in $v\in V$ has at most $1/\eps$ neighbors in $U$ as otherwise $v$ would be processed in the first phase and all of its neighbors would join $S$. We make $v$ responsible for sending its $1/\eps$ incident edges of $F$ to the leader which can be done in parallel for all vertices in $1/\eps$ rounds. 
\end{proof}
\Cref{lem:learnRemainingGraphClique} together with the analysis from \Cref{sssec:UBunweightedVC} immediately implies the following corollary. 
\begin{corollary}[\CLIQUE, deterministic]
For any (also non constant) $\eps>0$ there is a deterministic distributed \CLIQUE algorithm that computes a $(1+\eps)$-approximation to the $G^2$-minimum vertex cover in $O(\eps\cdot n+1/\eps)$ rounds. By setting $\eps=1/\sqrt{n}$ we can compute a $(1+1/\sqrt{n})$-approximation in $O(\sqrt{n})$ rounds, deterministically. 
\end{corollary}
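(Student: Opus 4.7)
The plan is to reuse Algorithm~\ref{alg:VCepsSimplified} essentially verbatim, replacing only the two communication steps of Phase~II with their faster \CLIQUE counterparts. As in the proof of \Cref{thm:G2VC}, if $\eps>1$ we can output all vertices as a trivial $2$-approximation at no communication cost, and otherwise we reduce to the case $\eps=1/\lceil 1/\eps\rceil$ so that $1/\eps$ is a positive integer, without changing asymptotic bounds.

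First, I would run Phase~I unchanged. Because the \CLIQUE model subsumes \CONGEST on the input graph $G$, the two-hop ID based symmetry-breaking used in Phase~I executes in $O(1)$ rounds per iteration, and each iteration removes at least $1/\eps$ vertices from $R$, so Phase~I terminates in $O(\eps\cdot n)$ rounds exactly as before. Second, for Phase~II I would invoke \Cref{lem:learnRemainingGraphClique} to have the elected leader $\ell$ collect the edge set $F$ in $O(1/\eps)$ rounds: each node has at most $1/\eps$ incident edges in $F$ and, unlike in \CONGEST, can send them directly to $\ell$ in parallel rather than through a BFS tree. Then $\ell$ locally reconstructs $H=G^2[U]$ from $F$ via \Cref{lem:computeRemainingGraph}, computes an optimal vertex cover $R^*$ of $H$, and broadcasts the membership decisions in $R^*$ in a single round using the all-to-all channel. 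Summing the two phases yields the claimed $O(\eps\cdot n+1/\eps)$ round complexity. The approximation analysis is inherited directly from \Cref{thm:G2VC}: \Cref{lem:Sapprox} gives $|S|\le(1+\eps)|OPT_S|$, optimality of $R^*$ on $G^2[U]$ gives $|R^*|\le|OPT_U|$, and together $|S\cup R^*|\le(1+\eps)|OPT|$, since both lemmas are purely combinatorial and insensitive to the model.

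For the second part of the statement I would simply plug in $\eps=1/\sqrt{n}$: the approximation factor becomes $(1+1/\sqrt{n})$ and the runtime is $O(\sqrt{n}+\sqrt{n})=O(\sqrt{n})$. There is no substantive obstacle; the only thing to verify carefully is that the runtime bound $O(\eps\cdot n+1/\eps)$ holds uniformly even when $\eps$ is non-constant (in particular as small as $1/\sqrt{n}$), which it does, because Phase~I contributes exactly $\lceil n/\lceil 1/\eps\rceil\rceil$ iterations of $O(1)$ rounds each and Phase~II contributes $O(1/\eps)$ rounds for edge gathering plus $O(1)$ rounds for the final broadcast, both independent of the model's power beyond the all-to-all communication already used.
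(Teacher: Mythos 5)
Your proposal is correct and follows essentially the same route as the paper: Phase~I runs unchanged in $O(\eps\cdot n)$ rounds, \Cref{lem:learnRemainingGraphClique} replaces the BFS-tree gathering so the leader learns $F$ in $O(1/\eps)$ rounds, the solution is broadcast in $O(1)$ rounds, and the approximation analysis is inherited from \Cref{thm:G2VC}. The only difference is that you spell out the $\eps>1$ and integrality reductions explicitly, which the paper leaves implicit here.
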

\begin{proof}
Learning the set $F$ is sufficient to compute the graph $H$ and can be done in $O(1/\eps)$ rounds by \Cref{lem:learnRemainingGraphClique}. Distributing the locally computed solution for the graph $H$ can be done in one round. Thus the runtime is dominated by the $O(\eps\cdot n)$ rounds of the first phase.
\end{proof}

We now show that we can also speed up the first phase of our our algorithm in Section~\ref{sssec:UBunweightedVC}, using randomization. 
 This follows a similar approach used in~\cite{Censor-HillelD18} for approximating spanners and MDS, which in turn uses is a modification of the framework of Jia et al.~\cite{JiaRS02} for approximating minimum dominating sets. While this faster implementation itself works in the \CONGEST model it still does not improve the overall running time in the \CONGEST model. However, combined with \Cref{lem:learnRemainingGraphClique} it allows us to obtain a much faster algorithm for the \CLIQUE model, as given in the following theorem.

\begin{theorem}
\label{theorem:VC-CC}
For any $\eps>0$, there is a distributed \CLIQUE algorithm that computes a $(1+\eps)$-approximation for $G^2$-MVC in $O(\log n+1/\eps)$ rounds. 
\end{theorem}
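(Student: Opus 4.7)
The plan is to keep Phase II of Algorithm \ref{alg:VCepsSimplified} intact and only accelerate Phase I. By \Cref{lem:learnRemainingGraphClique}, Phase II costs $O(1/\eps)$ rounds in \CLIQUE (plus one broadcast round for $R^*$), so it suffices to implement Phase I in $O(\log n)$ rounds while preserving its structural output: an $S$ whose vertices decompose into pairwise-disjoint cliques in $G^2$ of size more than $1/\eps$ each. Whenever this decomposition property is maintained, the correctness of \Cref{lem:validVC} and the approximation analysis of \Cref{lem:Sapprox} carry over unchanged, so the overall ratio remains $(1+\eps)$.

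To accelerate Phase I, I would replace the deterministic $2$-hop ID-based symmetry break with a randomized voting round in the spirit of Jia et al.~\cite{JiaRS02} as adapted by Censor-Hillel and Dory~\cite{Censor-HillelD18}. Each round begins with every $c\in C$ broadcasting $|N(c)\cap R|$ (a single $O(\log n)$-bit value), and I call $c$ \emph{heavy} if this value exceeds $1/\eps$. Each heavy center $c$ then proposes itself with a probability $p_c$ scaling inversely with $|N(c)\cap R|$; every $v\in R$ that receives at least one proposal endorses the proposer of smallest ID; and a proposing $c$ is declared \emph{successful} if it collects more than $1/\eps$ endorsements, in which case $c$ adds the endorsing vertices to $S$ and removes them from $R$. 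One such voting round uses $O(1)$ \CLIQUE rounds, since all communication consists of a constant number of $O(\log n)$-bit messages per vertex.

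Two invariants drive the analysis. First, by the smallest-ID endorsement rule each $R$-vertex joins at most one successful center's addition in a given round, so successful centers contribute pairwise-disjoint cliques in $G^2$, each of size more than $1/\eps$; combined across rounds this preserves exactly the clique decomposition of $S$ used in \Cref{lem:Sapprox}. Second, a standard potential argument on $\Phi=|\{v\in R:v\text{ has a heavy neighbor in }C\}|$ shows that $\mathbb{E}[\Phi']\le(1-\Omega(1))\,\Phi$ per voting round, so after $O(\log n)$ rounds no heavy center remains with high probability, which is precisely the exit condition for Phase I.

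The main obstacle is choosing the proposal probabilities $p_c$ so that both halves of the analysis go through at once: $p_c$ must be large enough that a constant fraction of $R$-vertices incident to a heavy center receive at least one proposal (so that $\Phi$ drops geometrically), yet small enough that a constant fraction of proposing centers retain more than $1/\eps$ endorsements after losses to smaller-ID competitors (so that they are actually successful and the $(1+\eps)$-approximation per added clique is not lost). The Jia--Rajaraman--Suel scaling $p_c=\Theta(1/|N(c)\cap R|)$ achieves both simultaneously, but verifying it in our setting requires a concentration argument over the ``stolen endorsements'' incident to each heavy $c$, essentially identical to the one in \cite{Censor-HillelD18}. Combining the $O(\log n)$-round Phase I with the $O(1/\eps)$-round Phase II gives the claimed bound.
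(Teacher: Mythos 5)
Your overall architecture matches the paper's proof (keep Phase II via \Cref{lem:learnRemainingGraphClique}, randomize Phase I, argue that disjoint voter-sets form cliques of size $>1/\eps$ so that \Cref{lem:Sapprox} still applies), but the specific randomized step you propose has a genuine gap. With proposal probability $p_c=\Theta(1/|N(c)\cap R|)$, a heavy center of degree $d$ proposes only once every $\Theta(d)$ rounds in expectation. Take $G$ a star on $n$ vertices with constant $\eps$: there is a single heavy center $c$ with $d_R(c)=n-1$, it proposes with probability $\Theta(1/n)$ per round, and until it does nothing is added to $S$ and no center becomes non-heavy. Your potential $\Phi=|\{v\in R: v \text{ has a heavy neighbor in } C\}|$ therefore stays at $n-1$ for $\Theta(n)$ expected rounds, so the claimed geometric decay and the $O(\log n)$ bound both fail. (As an aside, the $1/d$ scaling is not the Jia--Rajaraman--Suel rule: their join probability is inversely proportional to the \emph{support} of the elements a candidate covers, i.e., the number of competing candidates per element, not to the candidate's own degree; in the star that rule would fire with probability $1$.)

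The paper avoids this by having \emph{every} candidate participate in \emph{every} round: each candidate draws a random rank $r_c\in[n^4]$, every vertex of $R$ with a candidate neighbor votes for its highest-ranked candidate neighbor, and a candidate is successful if it receives at least $d_R(c)/8$ votes (with the candidacy threshold raised to $d_R(c)>8/\eps+2$ so that $d_R(c)/8>1/\eps$ and the per-clique charging of \Cref{lem:Sapprox} survives). Progress is then measured by $\Phi_i=\sum_{c\in C}d_R(c)$ rather than by a count of vertices, and the constant-factor drop is obtained via the ``top pair'' argument (\Cref{claim:conditional-success}): for the half of $c$'s $R$-neighbors with the largest number of candidate neighbors, conditioning on such a vertex voting for $c$ makes each lower-support neighbor vote for $c$ with probability at least $1/2$, so $c$ is successful with probability at least $1/3$. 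If you want to keep a sampling-based selection instead, you would need to replace $p_c$ by a support-based probability and redo the concentration argument; as written, the proposal does not establish the $O(\log n)$ bound.
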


\begin{proof}
We use the same notation of our algorithm in Section~\ref{sssec:UBunweightedVC}: The set $S$ contains vertices in the cover, the set $R$ denotes the remaining vertices, and $C$ is the set of candidates. Whenever the degree of a vertex $c\in C$ in $R$ drops below the threshold $8/\eps+2$ it leaves $C$, that is, any vertex $c \in C$ for which $|N(c)\cap R| \leq 8/\eps+2$ is removed from $C$. For simplicity of notation, denote $|N(c)\cap R|$ by $d_R(c)$.

The algorithm consists of $O(\log n)$ phases and in each phase some vertices leave $C$, some vertices are added to $S$ and removed from $R$. 
In each phase, each candidate, that is, each vertex in $C$, informs its neighbors that it is a candidate. Then, every vertex in $R$ \emph{votes} for one of its candidate neighbors and informs all of them about its vote---we will soon explain the details of the voting scheme. Each candidate $c$ which gets at least $d_R(c)/8$  votes is \emph{successful}, i.e., its neighbors are added into $S$ and are removed from $R$, and the candidate $c$ is removed from $C$. We repeat until there are no more candidates and then resort to having a leader learn the edges in $F$ as in our algorithm in Section~\ref{sssec:UBunweightedVC}. 

We now describe the voting mechanism. Each candidate $c$ chooses a random number $r_c \in [n^4]$ and a voter votes for its candidate neighbor who has the highest random value. A candidate $c$ is \emph{successful} if it gets at least $d_R(c)/8$ votes. 

\paragraph{Correctness:} Lemma~\ref{lem:validVC} holds here too, proving that we cover all edges of $G^2$.

\paragraph{Approximation Factor:} The approximation proved in Lemma~\ref{lem:Sapprox} is maintained because we charge the votes that made the candidate successful only to a single candidate and a candidate is only successful if it got at least $d_R(c)/8\geq (8/\eps+2)/8>1/\eps$  votes.

\paragraph{Runtime:} We analyze progress using a potential function whose value at the beginning of iteration $i$ is $\Phi_i = \sum_{c \in C}{d_R(c)}$. We claim that $\Phi$ decreases by a constant factor in each iteration and hence we have a logarithmic number of iterations until the potential function is smaller than $1$, i.e., the set of candidates $C$ is empty. For each vertex $v \in R$ we denote by $s(v)$ the number of neighbors it has in $C$. For each candidate $c \in C$, we sort its neighbors in $R$ according to their $s$ values, and split them into sets $T(c)$ and $B(c)$ where the $\ceil{d_R(c)/2}$ top values go into $T(c)$ and the $\ceil{d_R(c)/2}$ bottom values go into $B(c)$ (there may be an overlap of one vertex). For every $v \in T(c)$, we say that $(c,v)$ is a \emph{top pair}. We show that if a voter of a top pair votes for a candidate, then the candidate is successful with constant probability.

\begin{claim}
\label{claim:conditional-success}
If $(c,v)$ is a top pair then $Pr[c \text{ is successful} \mid v \text{ votes for } c]\geq 1/3$.
\end{claim}

\begin{proof}
We first claim that if $v_1,v_2 \in N(c)\cap R$ and $s(v_1) \geq s(v_2)$ then $$Pr[v_2 \text{ votes for } c \mid v_1 \text{ votes for } c]\geq 1/2.$$
Let $N_1,N_2$ and $N_{1,2}$ be the number of candidates that are neighbors of $v_1$ but not of $v_2$, the number of candidates that are neighbors of $v_2$ but not of $v_1$, and the number of candidates that are neighbors of both $v_1$ and $v_2$, respectively. Then,
\begin{align*}
Pr[v_2 \text{ votes for } c \mid v_1 \text{ votes for } c] & = \frac{Pr[v_1 \text{ and } v_2 \text{ vote for } c]}{Pr[v_1 \text{ votes for } c]} = \frac{1/(N_1+N_2+N_{1,2})}{1/(N_1+N_{1,2})}  \\ 
& = \frac{N_1+N_{1,2}}{N_1+N_2+N_{1,2}} \geq 1/2, 
\end{align*}
where the last inequality is because $N_1 \geq N_2$, since $s(v_1) \geq s(v_2)$.

Now, let $(c,v)$ be a top pair and suppose that $v$ votes for $c$. Then each $u \in B(c)$ votes for $c$ w.p. at least $1/2$ because $s(v) \geq s(u)$ for each such $u$. Let $x$ be the number of vertices in $B(c)$ that \emph{do not} vote for $c$. We have that $E[x] \leq |B(c)|/2$. By Markov's inequality, $Pr[x > 3 \cdot B(c)|/4] \leq 2/3$. Thus, w.p. at least $1/3$, at least $|B(c)|/4$ vertices in $B(c)$ vote for $c$. This is at least $d_R(c)/8$ and hence $c$ is successful.
\end{proof}

We can now show that in expectation, $\Phi$ decreases by a constant factor in each iteration. A Chernoff bound then implies that we need only $O(\log n)$ iterations w.h.p. We then need to learn the remaining graph after we are done, which, by \Cref{lem:computeRemainingGraph} and a proof along the same lines as the one for \Cref{lem:learnRemainingGraphClique}---note the slightly different threshold of $8\eps+1$  for being a candidate--- can be done in $O(1/\eps)$ rounds.
This gives a total of $O(\log n + 1/\eps)$ rounds in the \CLIQUE model.

Thus, it remains to show that the expected decrease in $\Phi$ is a constant fraction of it. Recall that we define $\Phi_i = \sum_{c \in C}{d_R(c)}$. If we count this according to the vertices in $R$, we get that this equals $\sum_{v \in R}{s(v)}$. If a vertex $v$ votes for a successful candidate $c$, then $\Phi$ decreases by at least $s(v)$. We can associate this decrease with the pair $(c,v)$ because $v$ votes for a single candidate. We then have:
\begin{align*}
E[\Phi_{i-1} - \Phi_i] &\geq \sum_{(c,v)}{Pr[v \text{ votes for } c \text{ and } c \text{ is successful}]\cdot s(v)}\\
&\geq \sum_{\text{top pairs } (c,v)}{Pr[v \text{ votes for } c]\cdot Pr[c \text{ is successful} \mid v \text{ votes for } c] \cdot s(v)}\\
&\geq \sum_{\text{top pairs } (c,v)}{(1/s(v)) \cdot (1/3) \cdot s(v)} = 1/3 \cdot |\{\text{top pairs } (c,v) \} |\\
&\geq 1/6 \cdot \Phi_{i-1},
\end{align*}
where the last inequality follows since at least half of the pairs are top pairs.
\end{proof}

\section{Centralized $G^2$-Minimum Vertex Cover}\label{sec:cent-alg}
In this section we present a polynomial time centralized algorithm that gives an $\alpha$-approximation to unweighted MVC, for a constant $\alpha < 2$. 
In \Cref{sec:centHardness} we show that the problem of computing an exact $G^2$-MVC is NP-hard and that one cannot even get a FPTAS unless $P=NP$. 
Here,  we formally  show the following theorem.

\begin{theorem}\label{theorem:VCcent}
There is a centralized polynomial time algorithm that computes an $\alpha$-approximation to $G^2$-Minimum Vertex Cover, for a constant $\alpha < 2$.
\end{theorem}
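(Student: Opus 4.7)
The plan is to follow the local-ratio framework of \cite{Bar-YehudaE83}, building the $G^2$-vertex cover in three disjoint stages that exploit the fact (central to the \CONGEST algorithm of \Cref{sssec:UBunweightedVC}) that every $G$-neighborhood $N(w)$ is a clique in $G^2$, and then combining the three stages via a charging argument rather than by taking the worst per-stage factor.

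First I would peel off \emph{large} cliques coming from $G$-neighborhoods: while there is some vertex $w$ whose still-uncovered $G$-neighborhood has size at least $k$ (for a constant $k$ to be fixed at the end), place all of $N(w)$ into the cover $S_1$ and delete it. Because $N(w)$ is a clique in $G^2$, any optimal $G^2$-vertex cover must include at least $|N(w)|-1$ of these vertices, so this phase produces a $\bigl(1+\tfrac{1}{k-1}\bigr)$-approximation on the part $S_1$. After this phase every surviving vertex has at most $k-1$ still-uncovered $G$-neighbors, which provides the sparsity we need later.

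Second, on the residual graph $G^2[V\setminus S_1]$, I would run a triangle-removal phase: repeatedly pick any triangle in $G^2$, add all three of its vertices to the cover $S_2$, and delete them. Every triangle costs us $3$ while $OPT$ must take at least $2$ of its vertices, giving a $\tfrac{3}{2}$-approximation on $S_2$. Once no triangle remains, the residual graph $H$ is $G^2$-triangle-free. On $H$, following Gavril's construction as described in \cite{GareyJ79}, I would compute a maximal matching $M$ in $G^2$ and put both endpoints of every edge of $M$ into the cover $S_3$. This is only a $2$-approximation on $S_3$---the "sloppy" part.

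The main obstacle is that combining the three local ratios $\bigl(1+\tfrac{1}{k-1}\bigr)$, $\tfrac{3}{2}$, and $2$ naively yields only a $2$-approximation. The key step will be a structural partition lemma that bounds $|S_3|$ by a constant fraction of $|OPT\cap(S_1\cup S_2)|$, not of $|OPT\cap S_3|$. Two $G^2$-specific facts drive the charging: (i) every $G^2$-edge $\{u,v\}$ in $H$ is witnessed by some $w$ with $u,v\in N(w)$, and after the first phase every such $N(w)$ contains at most $k-1$ uncovered vertices, so the edges of $H$ are organized into small $G$-neighborhood cliques; (ii) because $H$ is $G^2$-triangle-free, the presence of a matched edge $\{u,v\}\in M$ severely restricts the size of the common $G$-neighborhood witnessing it, since any third vertex there would produce a triangle. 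Exploiting these, I would charge each edge of $M$ to vertices that $OPT$ was already forced to pay for in the clique and triangle phases, so that $|S_3|\le c\,|OPT\cap(S_1\cup S_2)|$ for a suitable constant $c$. Balancing the three contributions $|S_1|\le \bigl(1+\tfrac{1}{k-1}\bigr)|OPT_1|$, $|S_2|\le \tfrac{3}{2}|OPT_2|$, and $|S_3|\le c\,(|OPT_1|+|OPT_2|)$ and optimizing the threshold $k$ yields a single constant $\alpha<2$, with the careful choice producing the claimed $\alpha=5/3$.
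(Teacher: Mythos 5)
Your overall architecture (a triangle-removal phase at ratio $3/2$, a final $2$-approximate matching phase on the residue, and a charging step that bounds the last phase by what OPT already paid earlier, rather than taking the worst per-phase factor) is the same as the paper's, and your fact (i) --- that every blue edge of the triangle-free residue has a distinct witness among the already-removed vertices --- is exactly part (3) of Lemma~\ref{lemma:VC1}. However, there is a genuine gap: the key charging claim $|S_3|\le c\,|OPT\cap(S_1\cup S_2)|$ is false as stated. Take $G$ to be a perfect matching on $n$ vertices, so that $G^2=G$. No vertex has $k\ge 2$ uncovered $G$-neighbors and $G^2$ is triangle-free, so your first two phases do nothing, $S_1=S_2=\emptyset$, and the maximal-matching phase outputs all $n$ vertices against $OPT=n/2$: the ratio is exactly $2$ and $|S_3|=n$ while $|OPT\cap(S_1\cup S_2)|=0$. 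More generally, after triangle removal the residual graph can contain many vertices of small $G^2$-degree (isolated red edges, stars whose leaves are joined to the center through distinct already-removed witnesses, and so on), and these force vertices into $S_3$ that cannot be charged to anything OPT paid in the earlier phases. Note also that your witness claim (i) only holds for edges of $G^2-G$; edges of $G$ itself need no common neighbor, which is precisely why they must be controlled separately (they form a matching in the triangle-free residue).

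The missing ingredient is the paper's intermediate phase that eliminates every residual vertex of $G^2$-degree at most $3$ by an explicit case analysis (degree $1$: take the neighbor, paying $1$ against $1$; degree $2$: take $y_1,y_2$ and a further neighbor $z$ of $y_1$, paying $3$ against $2$; degree $3$: take $y_1,y_2,y_3,z_1,z_2$, paying $5$ against $3$), at ratio $5/3$ on that part. Only after this does a charging argument go through: in the remaining graph $R'$ every vertex has $G^2$-degree at least $4$ and the red edges form a matching (two adjacent red edges close a triangle), so every remaining vertex is incident to at least $3$ blue edges; combined with the witness bound $s_1\ge |E^{\mathrm{blue}}_{R'}|$ this yields $|V_{R'}|\le \tfrac{2}{3}|E^{\mathrm{blue}}_{R'}|\le \tfrac{2}{3}s_1$, i.e.\ $s_3\le \tfrac{2}{3}s_1$ (Lemma~\ref{lemma:VC2}), which is what replaces your charging lemma and, together with $opt_3\ge \tfrac12 s_3$, gives $\alpha=5/3$. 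Your large-clique peeling phase is harmless but unnecessary in the centralized setting; the paper does not use it, and it does not help with the low-degree instances above. To repair your proof you would need to add a low-degree elimination phase of this kind (or an equivalent mechanism) before the final matching step.
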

\textbf{High Level View on Algorithm:}
The algorithm consists of three parts, in each of which we find an approximate solution to part of the remaining graph, until it is empty. 
The high-level goal is to find a set of nodes $U$ for which the size of an optimal solution $|OPT_U|$ can be well approximated and is larger by some positive fraction compared to the size of the rest of the nodes $U'$, which allows us to find only a sloppy approximation for the cover of $U'$.  

In the algorithm, we maintain that $V'$ and $E'$ are the remaining sets of vertices and edges, respectively. Initially, these are $V$ and $E(G^2)$. We denote by $S$ the cover that we obtain, initially empty. During the algorithm, whenever we say that we take a node into $S$, we also mean that it is removed from $V'$ and all edges with at least one endpoint in $S$ are removed from $E'$. Whenever there is a node with degree $0$, it is removed from $V'$.

In the first part, the algorithm loops until there are no more triangles in $E'$: Sequentially, as long as there is a triangle in $E'$ we add all of its three vertices to $S$, remove the vertices from $V'$ and we remove all edges touching the triangle from $E'$, i.e., we remove all edges with at least one endpoint being one of the triangle's vertices. 
In the second part, that we detail on later, we remove further vertices from $V'$ and edges from $E'$ such that the remaining part has minimum degree $4$. Then, in the third part of the algorithm we compute a $2$-approximation on the remaining vertex cover instance (e.g. by computing a maximal matching and adding both endpoints of the matched edges to the cover). 
For detailed pseudocode we refer to Algorithm \ref{alg:VC}.

\IncMargin{0.5em}
\RestyleAlgo{boxruled}
\begin{algorithm} \caption{An $\alpha$-approximation for Minimum $G^2$-Vertex Cover} \label{alg:VC}
$V' = V$, $E' = E(G^2)$ \tcp*{current set of vertices and edges}
$S = \emptyset$ \tcp*{vertices in the cover}
$V_1 = \emptyset$, $V_2 = \emptyset$, $V_3 = \emptyset$ \tcp*{part-1, part-2, part-3 vertices in the cover}

\DontPrintSemicolon\tcp*{part-1}
\While{there is a triangle in $(V',E')$ }{\label{alg:VC1} 
  Take all three nodes of the triangle into $S$ and into $V_1$, delete them from $V'$  and their incident edges from $E'$\\
}
{}\tcp*{part-2}
\While{there is a node $x \in V'$ with $\deg_{E'}(x) \leq 3$}
{\label{alg:VC2}
  If there is a node $x$ with $\deg_{E'}(x)=1$ then its neighbor is taken into $S$ and into $V_2$\\
  else If there is a node $x$ with $\deg_{E'}(x)=2$, denote its neighbors by $y_1,y_2$. Since there is no node $u$ with $\deg_{E'}(u)=1$, it holds that $y_1$ has a neighbor $z\neq x$. Take $z,y_1,y_2$ into $S$ and into $V_2$\\
  else If there is a node $x$ with $\deg_{E'}(x)=3$, denote its neighbors by $y_1,y_2,y_3$. Since there is no node $u$ with $\deg_{E'}(u)<3$, there are two nodes $z_1\neq z_2$ such that $z_1$ is a neighbor of $y_1$ and $z_2$ is a neighbor of $y_2$, and $z_1, z_2 \neq x,y_1,y_2,y_3$ because there are no triangles. We take $y_1,y_2,y_3,z_1,z_2$ into $S$ and into $V_2$\\

In all three cases nodes added to $S$ are removed from $V'$ and their incident edges are removed from $E'$.
}
\tcp*{part-3}
Find a 2-approximation on $(V',E')$ and take its nodes into $S$ and into $V_3$.\label{alg:VC3}\\

\Return $S$
\end{algorithm}
\DecMargin{0.5em}

Let $V_1$ be the vertices added to $S$ in the first part, let $V_2$ be the set of vertices added to $S$ in the second part and let $V_3$ be the set of vertices added to $S$ in the third part. For, for $i=1,2,3$ let $W_i$ denote the set of vertices that leave $V'$ in phase $i$. The set $W_i$ contains $V_i$ and it also contains all vertices that leave $V'$ because their degree in the remaining graph reached $0$.  
Further, for $i=1,2,3$ we denote $s_i=|V_i|$. 
We first show some crucial properties that hold after the first part. We call every edge in $G$ a \emph{red edge}, and every edge in $G^2-G$ is called a \emph{blue edge}. For a subset  $F$ of edges of $G^2$, we denote by $F^{\mathrm{red}}$ and $F^{\mathrm{blue}}$ the red and blue edges in $F$, respectively.  Let $R=(V',E')=(V_R,E_R)$ be the remaining graph after the first part.
\begin{lemma}\label[lemma]{lemma:VC1}
The following properties hold for $R$.
(1) There are no triangles in $R$.
(2) $E^{\mathrm{red}}_R$ forms a matching.
(3) $s_1 \geq E^{\mathrm{blue}}_R$.

\end{lemma}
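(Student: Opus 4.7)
Part~(1) is immediate: the outer while-loop in part~1 terminates precisely when $(V',E')$ contains no triangle, so by definition $R$ is triangle-free.

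For Part~(2), I will argue by contradiction. Suppose two distinct red edges in $R$ share an endpoint $u$, say $\{u,v_1\},\{u,v_2\}\in E^{\mathrm{red}}_R$. Then $v_1,v_2$ are at $G$-distance at most $2$ via $u$, so $\{v_1,v_2\}\in E(G^2)$. Since $u,v_1,v_2\in V'$ and edges leave $E'$ only when an endpoint joins $S$, all three of $\{u,v_1\},\{u,v_2\},\{v_1,v_2\}$ lie in $E'$, giving a triangle in $R$ and contradicting Part~(1).

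For Part~(3), the plan is to build an injection $\varphi:E^{\mathrm{blue}}_R \to V_1$. Given a blue edge $\{u,v\}\in E^{\mathrm{blue}}_R$, pick a common $G$-neighbor $w$ of $u$ and $v$ (which exists since $\{u,v\}\in E(G^2)\setminus E(G)$) and set $\varphi(\{u,v\})=w$. The first task is to argue that $w\in V_1$. We cannot have $w\in V'$: otherwise $\{u,w\},\{v,w\}\in E(G)\subseteq E(G^2)$ together with $\{u,v\}$ would all remain in $E'$ (all three of $u,v,w$ lying in $V'$), and $\{u,v,w\}$ would be a triangle in $R$. So $w\notin V'$. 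The delicate step is to rule out $w\in W_1\setminus V_1$, i.e., the case that $w$ was removed because it became isolated rather than by joining $S$. For this, note that $\{u,w\}\in E(G)\subseteq E(G^2)$ starts in $E'$ and can be deleted from $E'$ only when one of its endpoints enters $S$. Since $u\in V'$ at the end of part~1, $u$ never enters $S$ during part~1, so $\{u,w\}$ is deleted only at the moment $w$ enters $S$. In particular $w$ cannot become isolated before joining $S$, forcing $w\in V_1$.

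It remains to show that $\varphi$ is injective, and this is where the combinatorial heart lies. I claim $|N_G(w)\cap V'|\le 2$ for every $w\in V_1$: if $w$ had three distinct $G$-neighbors $u_1,u_2,u_3\in V'$, then every pair $\{u_i,u_j\}$ is at $G$-distance at most $2$ via $w$, hence lies in $E(G^2)\cap E'$ (both endpoints being in $V'$), so $\{u_1,u_2,u_3\}$ forms a triangle in $R$ and contradicts Part~(1). Consequently each $w\in V_1$ can serve as $\varphi$-image for at most one blue edge (the unique pair in $N_G(w)\cap V'$, and only if that pair is blue in $R$). Hence $\varphi$ is injective and $|E^{\mathrm{blue}}_R|\le |V_1|=s_1$. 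The main obstacle in the whole argument is this careful tracking of the middle vertex $w$; once $w\in V_1$ is established and the two-neighbor bound is in hand, all three statements collapse to short ``triangle in $R$'' contradictions.
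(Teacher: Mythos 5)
Your proof is correct and follows essentially the same route as the paper: parts (1) and (2) are the identical triangle-in-$R$ contradictions, and part (3) is the same witness/injection argument (a common $G$-neighbor of each blue edge must lie in $V_1$, and no vertex of $V_1$ can witness two blue edges because it would have three $G$-neighbors in $V'$ and hence create a triangle). Your treatment is in fact slightly more careful than the paper's, which asserts $V_R=V\setminus V_1$ and thereby elides the case you explicitly rule out, namely that the middle vertex $w$ left $V'$ by becoming isolated rather than by joining $S$.
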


\begin{proof}
As an edge of $G^2$ can only be removed from $E'$ if at least one of its endpoints is added to $S$ we obtain that the remaining graph $R$ equals $G^2[V_R]$, that is, it contains all edges induced by vertices in $V_R=V'$.  We need this property to prove all three parts. 

\begin{enumerate}
\item The graph $R$ clearly has no triangles, as otherwise the loop started in Line~\ref{alg:VC1} is not finished. 
\item 
The set $E^{\mathrm{red}}_R$ forms a matching, as otherwise it has two adjacent edges $\{x,y\}, \{y,z\}$ with $x,y,z\in V_R$, but since these are red edges this implies that the edge $\{x,z\}$ is in $E_R=G^2[V_R]$ and so the triangle $\{x,y\}, \{y,z\}, \{z,x\}$ is contained in $R$, which is a contradiction.
\item We claim that for every edge $e=\{x,y\} \in E^{\mathrm{blue}}_R$, there is at least one vertex $v \in V_1$ that forms a triangle with $e$. This is because there must be such a vertex in $V$, and if it is not in $V_1$ it has to still be in $V_R=V\setminus V_1$ but this implies the triangle $\{x,y,z\}$ in $R$. Moreover, it holds that if $v$ forms a triangle with $e=\{x,y\} \in E^{\mathrm{blue}}_R$ then it does not form a triangle with any other edge $e' \in E^{\mathrm{blue}}_R$: Assume it did and denote one of the endpoints of $e'$ that is different from $x$ and $y$ by $x'$. Then $x,y$ and $x'$ are neighbors of $v$ in $G$ and $x,y,x'\in V_R$ from which we can deduce that the edges $e$, $\{x,x'\}$ and $\{x',y\}$ are all part of $R$ and form a triangle, a contradiction. 
Therefore, we have that $s_1 \geq E^{\mathrm{blue}}_R$. \qedhere
\end{enumerate}
\end{proof}

We also keep track of the approximation factor we have so far. Let $OPT_1$ be an optimal cover for the edges of $G^2$ induced by $W_1$ and let $opt_1 = |OPT_1|$.
The triangles taken into $V_1$ are vertex disjoint and $OPT_1$ must take at least two nodes of every triangle while we add all three vertices of the triangle. This implies $opt_1 \geq (2/3) s_1$.
Next, we show that the properties for $R$ also hold after part 2. Let $R'=(V',E')=(V_{R'},E_{R'})$ be the remaining graph after Line~\ref{alg:VC2}. 

\begin{lemma}\label[lemma]{lemma:VC2}
The following properties hold for $R'$.
(1) There are no triangles in $R'$.
(2) $E^{\mathrm{red}}_{R'}$ forms a matching.
(3) $s_1 \geq E^{\mathrm{blue}}_{R'}$.
(4) $s_1 \geq (3/2)|V_{R'}|$.

\end{lemma}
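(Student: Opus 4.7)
The plan is to obtain items (1)--(3) almost for free by noting that part~2 of the algorithm only \emph{removes} vertices from $V'$ (and correspondingly their incident edges from $E'$); in particular, no new vertices, no new edges, and no new adjacencies are created between the end of part~1 and the end of part~2. Consequently, any triangle in $R'$ would also have been a triangle in $R$, and any two adjacent red edges in $R'$ would also have been adjacent in $R$, so Lemma~\ref{lemma:VC1}(1)(2) transfer verbatim to $R'$. Since $E^{\mathrm{blue}}_{R'} \subseteq E^{\mathrm{blue}}_R$, the inequality $s_1 \geq |E^{\mathrm{blue}}_R| \geq |E^{\mathrm{blue}}_{R'}|$ supplied by Lemma~\ref{lemma:VC1}(3) gives (3).

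For (4), the plan is a short double-counting argument driven by the termination condition of the loop in Line~\ref{alg:VC2}: when that loop exits, every remaining vertex has $E'$-degree at least $4$ (the convention that degree-$0$ vertices are removed from $V'$ takes care of stragglers created by the deletions), so $2|E_{R'}| \geq 4|V_{R'}|$, i.e., $|E_{R'}| \geq 2|V_{R'}|$. Next, I would split $E_{R'}$ into its red and blue parts and bound each using items (2) and (3): by (2), $|E^{\mathrm{red}}_{R'}| \leq |V_{R'}|/2$ since a matching on $V_{R'}$ has at most $|V_{R'}|/2$ edges, and by (3), $|E^{\mathrm{blue}}_{R'}| \leq s_1$. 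Combining,
\[
2|V_{R'}| \;\leq\; |E_{R'}| \;=\; |E^{\mathrm{red}}_{R'}| + |E^{\mathrm{blue}}_{R'}| \;\leq\; \tfrac{1}{2}|V_{R'}| + s_1,
\]
which rearranges to $s_1 \geq \tfrac{3}{2}|V_{R'}|$, as desired.

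I do not foresee any serious obstacle: the only point requiring a moment of care is checking the termination condition for part~2, namely that after the loop truly \emph{every} surviving vertex has degree $\geq 4$ (including vertices whose degree dropped during part~2 due to neighbor deletions rather than by being explicitly picked). The stated convention that a degree-$0$ vertex is removed from $V'$ suffices, but one should verify that vertices whose degree drops to $1$, $2$, or $3$ during part~2 do trigger another iteration of the while-loop; this is immediate from the loop condition ``there is a node $x \in V'$ with $\deg_{E'}(x) \leq 3$''. With that observation in hand, the proof is essentially mechanical, with the real content residing entirely in (4).
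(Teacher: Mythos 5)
Your proposal is correct and follows essentially the same route as the paper: items (1)--(3) by monotonicity of the properties from Lemma~\ref{lemma:VC1} under vertex/edge deletion, and item (4) by combining the minimum-degree-$4$ guarantee with the matching bound on red edges to force $|E^{\mathrm{blue}}_{R'}| \geq \tfrac{3}{2}|V_{R'}|$ and then invoking (3). The paper phrases the counting per vertex (each vertex has blue degree at least $3$) rather than aggregating over $|E_{R'}|$, but the computation is identical.
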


\begin{proof}
First note that we again have that $R'$ equals $G^2[V_{R'}]$ as an edge of $G^2$ can only be removed from $E'$ if at least one of its endpoints is added to $S$.
Since we only remove nodes and edges, the graph $R'$ retains the three properties of $R$ from Lemma~\ref{lemma:VC1} because these are monotone properties. That is, $R'$ has no triangles, $E^{\mathrm{red}}_{R'}$ forms a matching, and $s_1 \geq E^{\mathrm{blue}}_{R'}$, where $E^{\mathrm{blue}}_{R'}$ is the set of blue edges in $R'$. 

The graph $R'$ has the additional property that for every $v \in R'$ it holds that $\deg_{E_{R'}}(v) \geq 4$, and since $E^{\mathrm{red}}_{R'}$  forms a matching, it holds that for every $v \in R'$ we have $\deg_{E^{\mathrm{blue}}_{R'}}(v) \geq 3$.  This gives that  $s_1 \geq E^{\mathrm{blue}}_{R'} \geq (1/2)\sum_{v\in R'}{\deg_{E^{\mathrm{blue}}_{R'}}(v)} \geq (1/2) \cdot 3|V_{R'}| = (3/2)|V_{R'}|$.
\end{proof}

We also keep track of the approximation factor we have so far. Let $OPT_2$ be an optimal cover for the edges of $G^2$ induced by $W_2$ and let $opt_2 = |OPT_2|$.
For a node $x$ with $\deg_{E'}(x)=1$ there is a single vertex taken into $V_2$ and $OPT_2$ must also take a vertex to cover that edge. For a node $x$ with $\deg_{E'}(x)=2$ there are 3 vertices taken into $V_2$ and $OPT_2$ must take at least 2 nodes to cover the vertex-disjoint edges $\{z,y_1\},\{x,y_2\}$. Finally, for a node $x$ with $\deg_{E'}(x)=3$ there are 5 nodes taken to $V_2$, and $OPT_2$ must take 3 nodes to cover the vertex-disjoint edges $\{y_1,z_1\},\{y_2,z_2\},\{x,y_2\}$. The latter dominates the ratio, giving that $opt_2 \geq (3/5) s_2$

Let $OPT_3$ be an optimal cover for the edges of $G^2$ induced by $W_3$ (these are the edges $E_{R'}$) and let $opt_3 = |OPT_3|$. Let $s_3 = |V_3|$. Because $V_3$ is a 2-approximation for $R'$, we immediately get that $opt_3 \geq (1/2) s_3$. These three inequalities are encapsulated in the following lemma.

\begin{lemma}\label[lemma]{lemma:OPTs}
It holds that $opt_1 \geq (2/3) s_1$, $opt_2 \geq (3/5) s_2$, and $opt_3 \geq (1/2) s_3$.
\end{lemma}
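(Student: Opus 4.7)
The plan is to establish the three inequalities independently by analyzing each phase and comparing the vertices added to $S$ with the edges that any optimal cover must pay for, restricted to the set $W_i$ of vertices removed in that phase.

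For the first bound $opt_1 \geq (2/3)\,s_1$: Phase 1 adds vertices in vertex-disjoint triangular groups. Each iteration picks a triangle $\{a,b,c\}$ entirely inside the current $(V',E')$, contributes $3$ vertices to $V_1$, and (crucially) the triangles chosen in distinct iterations use disjoint vertex sets since every triangle's vertices leave $V'$ after it is processed. For each such triangle the three edges $\{a,b\},\{b,c\},\{a,c\}$ all lie in $G^2[W_1]$, and any vertex cover of these three edges must contain at least two of $\{a,b,c\}$. Summing over the $s_1/3$ triangles gives $opt_1 \geq 2 \cdot (s_1/3) = (2/3)\,s_1$.

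For the second bound $opt_2 \geq (3/5)\,s_2$: Phase 2 is a union of iterations, each triggered by a low-degree vertex $x$. I will exhibit, in each iteration, a set of vertex-disjoint edges inside $G^2[W_2]$ whose cardinality is at least $3/5$ times the number of vertices added to $V_2$ in that iteration, and argue that edges from distinct iterations are themselves vertex-disjoint since all vertices handled in an iteration are removed from $V'$ afterwards. Case $\deg=1$: $1$ vertex added, $1$ required edge; ratio $1/1$. Case $\deg=2$: $3$ vertices added, and the edges $\{z,y_1\},\{x,y_2\}$ are vertex-disjoint and lie in $G^2[W_2]$; ratio $2/3$. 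Case $\deg=3$: $5$ vertices added, and the edges $\{y_1,z_1\},\{y_2,z_2\},\{x,y_3\}$ are vertex-disjoint and all lie in $G^2[W_2]$; ratio $3/5$. Taking the worst case $3/5$ and summing over iterations yields $opt_2 \geq (3/5)\,s_2$.

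For the third bound $opt_3 \geq (1/2)\,s_3$: Phase 3 just runs any standard $2$-approximation (e.g.\ maximal matching) on the remaining instance $R' = (V_{R'},E_{R'}) = G^2[W_3]$, so by definition $s_3 \leq 2\,opt_3$ where $opt_3$ is the optimum for this induced subinstance.

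The only subtlety I anticipate is making the case analysis of phase 2 airtight: one must check that the auxiliary vertices $z, z_1, z_2$ actually exist and are distinct from the neighbors of $x$, which uses both the minimum-degree condition (since the while-loop had not removed their endpoints yet) and the triangle-freeness established in \Cref{lemma:VC2}(1); and one must check that the edges used as witnesses in one iteration do not share endpoints with edges used as witnesses in later iterations, which follows because every witness edge lives entirely inside $W_2$ at the time of its iteration and those vertices are then deleted from $V'$. Once these points are verified, the three bounds combine immediately.
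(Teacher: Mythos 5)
Your proposal is correct and follows essentially the same route as the paper, which proves the three bounds in the inline text preceding the lemma: vertex-disjoint triangles charged $2$ out of $3$ for phase~1, the per-iteration case analysis with vertex-disjoint witness edges giving ratios $1/1$, $2/3$, $3/5$ for phase~2, and the definition of a $2$-approximation for phase~3. A minor note: in the degree-$3$ case your witness edge $\{x,y_3\}$ is the right choice (the paper writes $\{x,y_2\}$, which would not be vertex-disjoint from $\{y_2,z_2\}$ and is evidently a typo), and your attention to the disjointness of witnesses across iterations and to the existence of $z,z_1,z_2$ matches the justifications in the paper's pseudocode and in the surrounding discussion.
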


We next show that the computed set actually is a vertex cover.
\begin{lemma}\label[lemma]{lem:SvertexCover}
At the end of the algorithm the set $S=V_1\cup V_2\cup V_3$ is a vertex cover of $G^2$.
\end{lemma}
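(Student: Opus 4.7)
The plan is to maintain the following invariant throughout the algorithm's execution: for every edge $e \in E(G^2)$, either $e \in E'$ at the current moment, or at least one endpoint of $e$ already belongs to $S$. At initialization this is immediate, since $E' = E(G^2)$ and $S = \emptyset$, so every edge trivially satisfies the first disjunct.

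Next I would verify that every modification the algorithm makes to $E'$ preserves the invariant. The only way an edge can leave $E'$ is by being incident to a vertex that was just inserted into $S$: this is explicitly what happens in Line~\ref{alg:VC1} (the three triangle vertices are added to $S$ and their incident edges are deleted from $E'$) and in each of the three cases of Line~\ref{alg:VC2} (one, three, or five specified vertices are added to $S$ and their incident edges are deleted). In either phase, an edge leaves $E'$ only when the endpoint being added to $S$ covers it, so the invariant is restored. The auxiliary pruning rule — removing a node from $V'$ once its degree in $E'$ drops to $0$ — removes no edges from $E'$ and therefore is harmless to the invariant.

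Finally, Line~\ref{alg:VC3} computes a vertex cover $V_3$ of the residual graph $R' = (V_{R'}, E_{R'})$ via, e.g., a maximal-matching $2$-approximation; by the very definition of a vertex cover, every edge of $E_{R'}$ has at least one endpoint in $V_3 \subseteq S$. Combining this with the invariant after phase~2, every edge of $G^2$ is either still present in $E_{R'}$ (and hence covered by $V_3$) or was already covered by $V_1 \cup V_2$ earlier. Therefore $S = V_1 \cup V_2 \cup V_3$ is a vertex cover of $G^2$, which is what the lemma asserts.

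The main (and only) obstacle is bookkeeping: one must make sure no edge of $G^2$ is ever silently dropped from $E'$ without a covering endpoint entering $S$. This reduces to a small case analysis over the three branches of Line~\ref{alg:VC2} and the isolated-vertex pruning rule, all of which are mechanical to check; no structural property of $G^2$ beyond the algorithm's own explicit removal rule is needed.
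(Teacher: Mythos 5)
Your proof is correct and follows essentially the same route as the paper: both rest on the observation that an edge is removed from $E'$ only when an endpoint enters $S$, so any edge not covered by $V_1\cup V_2$ survives into the residual graph and is covered by $V_3$. Your version merely makes the invariant and the per-line bookkeeping explicit.
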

\begin{proof} We only remove an edge from $E'$ if at least one of its endpoints is added to $S$. 
Thus any edge that is not covered by a vertex in $V_1$ or $V_2$ is still contained in $E'$ after phase 2 and any such edge is then covered by $V_3$.
\end{proof}

We are now ready to prove Theorem~\ref{theorem:VCcent}.

\begin{proof}(Proof of Theorem~\ref{theorem:VCcent})
\Cref{lem:SvertexCover} shows that the returned set $S=V_1\cup V_2\cup V_3$ is a vertex cover and the runtime of the algorithm is polynomial. 
We now bound the approximation factor. Let $OPT$ be an optimal solution for $G^{2}$, and let $opt = |OPT|$ and let $s = |S|$. It holds that $opt \geq opt_1 + opt_2 + opt_3$ because these are optimal solutions for the vertex-disjoint sets of edges $W_1$, $W_2$ and $W_3$. Since $s=s_1 + s_2 + s_3$, the approximation factor is $\alpha = s/opt \leq (s_1 + s_2 + s_3)/(opt_1 + opt_2 + opt_3)$. We bound the value of $\alpha$ as follows:

From Lemma~\ref{lemma:VC2}, we know that  $s_1 \geq (3/2)|V_{R'}| \geq (3/2)s_3$. Denoting $c = s_1/s_3$, we have that $c \geq 3/2$. We now have $s = s_1 + s_2 + s_3 = s_2 + (c+1)s_3$. By Lemma~\ref{lemma:OPTs} and using $s_1=c\cdot s_3$, we have
\begin{align*}
  opt &\geq  opt_1 + opt_2 + opt_3 \geq (2/3) s_1 + (3/5) s_2 + (1/2) s_3 \\
      &=  (3/5) s_2 + ((2/3)\cdot c+(1/2))s_3 = (3/5)s_2 + ((4c+3)/6))s_3~.
\end{align*}

We claim that $(4c+3)/6 \geq (3/5)(c+1)$, and thus $opt\geq (3/5)s_2 + (3/5)(c+1)s_3 =  (3/5) (s_2 + (c+1)s_3) = (3/5)s$, which proves that $\alpha \leq 5/3$.
For $(4c+3)/6 \geq (3/5)(c+1)$ to hold, we need $5(4c+3) \geq 6\cdot3(c+1)$, that is, we need $20c +15 \geq 18c+18$, which is equivalent to $2c \geq 3$. This holds since our bound for $c$ is exactly $c \geq 3/2$, by Lemma~\ref{lemma:VC2}.
\end{proof}

By plugging in the result of \Cref{theorem:VCcent} in the second phase of the algorithm of \Cref{thm:G2VC} or \Cref{theorem:VC-CC} we obtain the following corollary.
\begin{corollary}There exists a deterministic 
\CONGEST\ algorithm in which nodes only use polynomial computations and that computes a $5/3$-approximation for $G^2$-MVC in $O(n)$ rounds.
There exists a randomized 
\CLIQUE\ algorithm in which nodes only use polynomial computations and that computes a $5/3$-approximation for $G^2$-MVC in $O(\log n)$ rounds.

\end{corollary}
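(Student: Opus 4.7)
The plan is to modify Algorithm~\ref{alg:VCepsSimplified} (for the \CONGEST bound) and its randomized counterpart from Theorem~\ref{theorem:VC-CC} (for the \CLIQUE bound) by replacing the only super-polynomial step -- the leader's exact MVC computation on $H=G^2[U]$ in line~\ref{alg:line:bruteforce} -- with the polynomial-time $5/3$-approximation of Theorem~\ref{theorem:VCcent}. The approximation factor will then be governed by the larger of two quantities: the factor $(1+\eps)$ paid on the edges covered by $S$ in Phase~I, and the factor $5/3$ paid on the remainder. Since these two parts of the optimum are disjoint, the two factors compose to the \emph{worse} of the two rather than to their product; hence it suffices to run Phase~I with any constant $\eps \le 2/3$ for which Lemma~\ref{lem:Sapprox} applies.

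Concretely, I would fix $\eps = 1/2$ (so $1/\eps = 2 \in \NN$, as required by Lemma~\ref{lem:Sapprox}). Running Phase~I of Algorithm~\ref{alg:VCepsSimplified} with this value produces the set $S$, and Lemma~\ref{lem:Sapprox} yields
\[
|S| \le \tfrac{3}{2}\,|OPT_S|, \qquad \text{where } OPT_S = OPT \cap S
\]
for any fixed optimum $G^2$-vertex cover $OPT$. The leader then learns $F$ (by Lemma~\ref{lem:learnRemainingGraph} in \CONGEST, or Lemma~\ref{lem:learnRemainingGraphClique} in the \CLIQUE) and reconstructs $H=G^2[U]$ internally via Lemma~\ref{lem:computeRemainingGraph}. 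Instead of solving MVC exactly on $H$, it now invokes the algorithm of Theorem~\ref{theorem:VCcent}, obtaining a $5/3$-approximate cover $R^\ast$ of $H$ with $|R^\ast| \le \tfrac{5}{3}\,|OPT_U|$, where $OPT_U = OPT\cap U$ is a vertex cover of $H$ (since vertices in $OPT_S$ cover no edge of $H$). Combining,
\[
|S \cup R^\ast| \le \tfrac{3}{2}|OPT_S| + \tfrac{5}{3}|OPT_U| \le \tfrac{5}{3}\,(|OPT_S|+|OPT_U|) \le \tfrac{5}{3}\,|OPT|.
\]
Validity of $S\cup R^\ast$ as a $G^2$-vertex cover is unchanged from Lemma~\ref{lem:validVC}, since every edge of $G^2$ is either incident to $S$ or lies entirely inside $H$ and is therefore covered by $R^\ast$.

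For the runtime, in the \CONGEST model Phase~I uses $O(n/\eps) = O(n)$ rounds, the leader learns $F$ in $O(n/\eps) = O(n)$ rounds, the local $5/3$-approximation is computed in polynomial time inside the leader, and broadcasting the $O(n)$-bit description of $R^\ast$ along a BFS tree takes $O(n)$ rounds; the total is $O(n)$. In the \CLIQUE model, the randomized Phase~I from Theorem~\ref{theorem:VC-CC} runs in $O(\log n + 1/\eps) = O(\log n)$ rounds, the leader learns $F$ in $O(1/\eps)=O(1)$ rounds via Lemma~\ref{lem:learnRemainingGraphClique}, the local computation is again polynomial, and the resulting assignment can be broadcast in a single round, giving $O(\log n)$ in total.

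There is no real obstacle here: the proof is essentially a black-box composition of Theorem~\ref{thm:G2VC} / Theorem~\ref{theorem:VC-CC} with Theorem~\ref{theorem:VCcent}. The only point needing care is the choice of $\eps$, which must be a constant at most $2/3$ with $1/\eps\in\NN$ so that the Phase~I guarantee $|S|\le (1+\eps)|OPT_S|$ is already no worse than $5/3$; then the composed approximation ratio collapses to $\max\{1+\eps,\,5/3\} = 5/3$ rather than multiplying, and the local computation at the leader becomes polynomial as required.
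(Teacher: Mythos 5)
Your proposal is correct and matches the paper's own proof: the paper likewise runs Phase~I with $\eps=1/2$, has the leader apply the centralized $5/3$-approximation of Theorem~\ref{theorem:VCcent} to the remaining graph, and observes that the overall factor is $\max\{3/2,\,5/3\}=5/3$. Your write-up just spells out the disjointness argument ($OPT_S$ vs.\ $OPT_U$) and the round counts in more detail.
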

\begin{proof}
We run the first phase of the algorithm from \Cref{thm:G2VC} (\CONGEST) or \Cref{theorem:VC-CC} (\CLIQUE) with $\eps=1/2$. Then, in the second phase we learn the remaining graph at a leader vertex which locally uses \Cref{theorem:VCcent} to compute a $5/3$-approximation for the remaining graph. The approximation factor is the maximum of $1+\eps=3/2$ and $5/3$.
\end{proof}

\section{Distributed $G^2$-Minimum Vertex Cover (Lower Bounds)}\label{app:distVC}

In \Cref{subsec:aloceBob} we present the \emph{Alice-Bob lower bound framework} developed in \cite{PelegR00} that we use to obtain quadratic and near-quadratic lower bounds for $G^2$-MVC and $G^2$-MDS (Section \ref{sec:MDSLB}). Then, in Sections \ref{sec:org8d0f7a0} and \ref{sec:orgcb69afc} we use the framework to prove near-quadratic lower bounds for $G^2$-MWVC and $G^2$-MVC. In \Cref{sec:VC-apx-limitations} we show limitations of the framework and in \Cref{sec:VChardness} we show our conditional lower bound for $G^2$-MVC.
\subsection{Reduction from Communication Complexity: The Alice Bob Framework}
\label{subsec:aloceBob}
To prove our lower bounds
we use the known framework of reductions from 2-party communication problems or reduce to lower bounds that have been proven with this framework. This framework was pioneered by Peleg and Rubinovich~\cite{PelegR00}, and has been used extensively since then to obtain lower bounds for bandwidth restricted models (see, e.g.,~\cite{AbboudCHK16,Dassarmaetal12,CzumajK18,FischerGKO18,FrischknechtHW12,Censor-HillelKP18,PanduranganPS16,Elkin04}).
The novelty in our proofs lies in the constructions of the graph families that give our reductions. We first recall the framework itself, as follows.

The 2-party communication setting consists of two players, Alice and Bob, who are given two input strings, $x,y\in\set{0,1}^K$ respectively,
and need to evaluate some given function  $f:\set{0,1}^K\times\set{0,1}^K\to\set{\mathrm{true},\mathrm{false}}$ on their inputs.
The maximal number of bits, over all inputs, exchanged in a protocol $\pi$ that computes
$f$ is the \textit{communication complexity of $\pi$} and is denoted $CC(\pi)$. The \textit{communication complexity of the function $f$} is the minimum of $CC(\pi)$ over all deterministic protocols $\pi$ that compute $f$ and is denoted $CC(f)$. In a similar manner, for randomized protocols, the randomized communication complexity of $f$ is denoted $CC^R(f)$.
In the \textit{set disjointness} problem, the problem is to compute a boolean function 
$DISJ_K$ defined as $DISJ_K(x, y) = \mathrm{false}$ if and only if there is an index $0\leq i\leq K-1$ such that $x_i=y_i=1$. It is well known that that $CC(DISJ_K), CC^R(DISJ_K)$ are both $\Theta(K)$ (see, e.g.,~\cite{KushilevitzN:book96}).

\begin{definition} (Family of Lower Bound Graphs~\cite{Censor-HillelKP17})\label[definition]{def:family}
Fix an integer $K$, a function $f:\set{0,1}^K\times\set{0,1}^K\to\set{\mathrm{true},\mathrm{false}}$ and a graph predicate $P$. A family of graphs
\[ \left\{G_{x,y}=(V,E_{x,y}) \mid x,y \in \set{0,1}^K\right\}\] with a partition $V=V_A\dot\cup V_B$ is said to be a family of \emph{lower bound graphs for the \CONGEST\ model w.r.t. $f$ and $P$} if the following properties hold:
\begin{enumerate}
  \item\label{ItemInLBGraphs: va}
	  Only the existence or the weight of edges in $V_A\times V_A$ may depend on $x$;
  \item\label{ItemInLBGraphs: vb}
	  Only the existence or the weight of edges in $V_B\times V_B$ may depend on $y$;
  \item\label{ItemInLBGraphs: pandf}$G_{x,y}$ satisfies the predicate $P$ iff $f(x,y)=\mathrm{true}$.
\end{enumerate}
\end{definition}

\begin{theorem}(\cite{Censor-HillelKP17})\label[theorem]{thm: general lb framework}
Fix a function $f:\set{0,1}^K\times\set{0,1}^K\to\set{\mathrm{true},\mathrm{false}}$ and a graph predicate $P$. If there is a family $\{G_{x,y}\}$ of lower bound graphs for the \CONGEST\ model w.r.t.~$f$ and $P$ with cut edge set $C = E(V_A, V_B)$, then any deterministic algorithm for deciding $P$ in the \CONGEST\ model requires $\Omega (CC(f)/|C|\log n)$ rounds, and any randomized algorithm for deciding $P$ in the \CONGEST\ model requires $\Omega (CC^R(f)/|C|\log n)$ rounds.
\end{theorem}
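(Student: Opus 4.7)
\textbf{Proof plan for \Cref{thm: general lb framework}.} The plan is the standard simulation-style reduction from two-party communication complexity to \CONGEST, following the template of Peleg and Rubinovich. Suppose for contradiction that there is a \CONGEST\ algorithm $\mathcal{A}$ that decides $P$ in $T$ rounds. Given inputs $x, y \in \{0,1\}^K$, Alice and Bob jointly simulate the execution of $\mathcal{A}$ on $G_{x,y}$: Alice simulates the vertices in $V_A$ and Bob simulates the vertices in $V_B$. The partition and the definition of the lower bound graph family ensure that Alice has all information she needs to simulate the internal state and message-sending logic of nodes in $V_A$: by Property~\ref{ItemInLBGraphs: va}, the $V_A$-internal edges (and weights) are determined by $x$, and cut edges (whose existence/weights depend on neither $x$ nor $y$) are known to both parties by the fixed construction. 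A symmetric statement holds for Bob.

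The second step is to argue about the cost of the round-by-round simulation. At the beginning of each simulated round $r = 1, \dots, T$, Alice knows the full state of every node in $V_A$ and Bob knows the full state of every node in $V_B$. For Alice to compute the updated state of a node $u \in V_A$ she additionally needs to know the messages sent to $u$ across cut edges, i.e., messages sent by nodes in $V_B$ on edges in $C$; Bob holds precisely this information and sends it to Alice. Symmetrically, Alice sends to Bob the messages sent across cut edges from $V_A$ to $V_B$. Each cut edge carries $O(\log n)$ bits per round in each direction, so one simulated round costs $O(|C| \log n)$ bits of Alice-Bob communication, and the whole simulation costs $O(T \cdot |C| \log n)$ bits.

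Finally, by Property~\ref{ItemInLBGraphs: pandf}, once $\mathcal{A}$ terminates some node in the simulated network can determine whether $G_{x,y}$ satisfies $P$, and this answer equals $f(x,y)$. Say this deciding node lies in $V_A$: then Alice knows the answer and sends one extra bit to Bob; otherwise Bob sends one bit to Alice. This completes a deterministic protocol $\pi$ for $f$ with $CC(\pi) = O(T \cdot |C| \log n)$, hence $CC(f) = O(T \cdot |C| \log n)$, which rearranges to $T = \Omega\bigl(CC(f)/(|C|\log n)\bigr)$. The same argument applied to a randomized $\mathcal{A}$ yields a randomized protocol and gives the analogous bound with $CC^R(f)$.

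The routine part is the simulation bookkeeping; the only conceptually delicate point is verifying that Properties~\ref{ItemInLBGraphs: va}--\ref{ItemInLBGraphs: vb} really do give each party all the local information needed to simulate its side without any additional communication (beyond the cut-edge messages), which is precisely why the definition of a lower bound graph family is phrased so as to localize the dependence on $x$ and $y$ to the two halves of the partition. No other ingredients are needed.
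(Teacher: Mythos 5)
Your proof is correct and is exactly the standard simulation argument from the cited work of Censor-Hillel et al.\ (the paper itself states this theorem without proof, deferring to that reference): partitioned simulation, $O(|C|\log n)$ bits of Alice--Bob communication per simulated round to forward cut-edge messages, and the resulting protocol for $f$ yields the stated round lower bound. The one detail worth making fully explicit is the output convention for distributed decision (e.g., all nodes accept iff $P$ holds), so that each party can report its side's verdict with $O(1)$ extra bits, but this is exactly the minor bookkeeping you already gesture at.
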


\subsection{\CONGEST: Quadratic Lower Bound for Exact $G^2$-MWVC (Warmup)}
\label{sec:org8d0f7a0}
We show an \(\Omega(n^2)\) lower bound for exact computation of $G^2$-MWVC. We later make it apply also for $G^2$-MVC.

\begin{theorem}\label[theorem]{thm:MWVC-lower-bound-main-theorem}
  Any distributed algorithm in the \CONGEST\ model which given an input graph \(G\), computes the minimum weighted vertex cover of \(G^{2}\) requires $\tilde{\Omega}(n^{2})$ rounds.
\end{theorem}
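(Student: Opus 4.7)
The plan is to invoke the Alice--Bob framework of \Cref{thm: general lb framework} with a reduction from set-disjointness $DISJ_K$ of size $K = \tilde{\Theta}(n^2)$. We build a family of lower-bound graphs $\{G_{x,y}\}$ on $N = \Theta(n)$ vertices, partitioned as $V = V_A \cup V_B$ with a fixed cut $C = E(V_A, V_B)$ of polylogarithmic size, so that the predicate ``$\mathrm{MWVC}(G^2_{x,y}) \le t$'' for a carefully chosen threshold $t$ is equivalent to $DISJ_K(x,y) = \mathrm{true}$. Applying \Cref{thm: general lb framework} then yields the claimed $\tilde\Omega(CC(DISJ_K)/(|C|\log N)) = \tilde\Omega(n^2)$ lower bound.

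Our starting point is the $\tilde\Omega(n^2)$ lower-bound construction for exact $G$-MVC of Censor-Hillel, Khoury, and Paz \cite{Censor-HillelKP17}, in which Alice's bits of $x$ are encoded by the presence/absence of edges inside $V_A$, Bob's bits of $y$ similarly inside $V_B$, and a small fixed cut synchronizes the two sides so that the MVC value changes by at least one whenever some coordinate has $x_i = y_i = 1$. Note that properties \ref{ItemInLBGraphs: va} and \ref{ItemInLBGraphs: vb} of \Cref{def:family} are automatic when we lift this to $G^2$-MWVC: the definition constrains edges of the input graph $G$ that Alice and Bob actually build, and $G^2$ only enters through the predicate $P$. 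The nontrivial work is to preserve property \ref{ItemInLBGraphs: pandf}, i.e.\ to show that the MWVC value of $G^2_{x,y}$ still exhibits an integer gap between the disjoint and non-disjoint cases despite the many new edges introduced by squaring.

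To control these new edges I would modify the CHKP construction in two ways. First, insert a short buffer (e.g.\ a pendant vertex) between each of Alice's bit gadgets and every cut vertex, and symmetrically for Bob; this guarantees that 2-paths $a_i - b_j - a_k$ through the fixed cut do not create new $G^2$-edges on bit-gadget vertices, so the only $x$-sensitive $G^2$-edges inside $V_A$ are ones already present in $G[V_A]$ or safely constructed through Alice's own subgraph. Second, place weights on the bit gadgets that dominate the total baseline cost of covering every fixed 2-path edge of $G^2$, compute this baseline explicitly (it depends only on the fixed cut and the buffers), and absorb it into the threshold $t$. The main obstacle is then verifying that every new 2-path edge of $G^2$ is either covered ``for free'' by a vertex already in the optimum for a $G$-reason, or contributes a cost independent of $(x,y)$: the buffering step prevents cross-side 2-paths from landing on bit-gadget vertices, while the dominant weights forbid the optimum from ``skipping'' a bit in order to save on some 2-path elsewhere, so each coincident $x_i = y_i = 1$ still forces exactly one extra unit of weight into the optimum, exactly as in the CHKP analysis.
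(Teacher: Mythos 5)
Your overall plan---invoke \Cref{thm: general lb framework} with the set-disjointness instance of size $k^2$ embedded in the Censor-Hillel--Khoury--Paz graph, keep the cut polylogarithmic, and argue an integer gap in the cover weight---is the right frame, and your observation that properties (1) and (2) of \Cref{def:family} carry over automatically is correct. But the construction you propose does not preserve property (3), and the missing idea is exactly the one the paper's proof is built around. You keep the input-dependent edges between $A_1$ and $A_2$ (and between $B_1$ and $B_2$) as ordinary edges of the input graph and then reason about the square of that graph. In the square, however, $a_1^i$ and $a_2^j$ become adjacent whenever they have \emph{any} common neighbor---for instance any $a_1^{i'}$ with $x_{i'j}=0$, which exists for essentially every input. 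So whether $\{a_1^i,a_2^j\}$ is an edge of the squared graph no longer reflects the bit $x_{ij}$, and the gap mechanism of the CHKP analysis (a coordinate with $x_{ij}=y_{ij}=1$ lets one vertex of each clique be excluded from the cover) collapses. Your two fixes do not touch this: the pendant buffers only suppress two-paths \emph{through the cut}, and the heavy weights sit on the bit gadgets, while the damage occurs among the unweighted row/clique vertices entirely inside Alice's (resp.\ Bob's) side. Similarly, the $4$-cycle bit gadgets become $4$-cliques in the square, changing their local cover cost, which your baseline-absorption step does not account for because that cost interacts with which row vertices are excluded.

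The paper's proof instead builds a \emph{new} input graph $H_{x,y}$ in which every edge $e=\{u,v\}$ of $G_{x,y}$ is deleted and replaced by a two-path through a fresh vertex of weight $0$; then $\{u,v\}$ reappears as an edge of $H_{x,y}^2$, and every \emph{new} edge of $H_{x,y}^2$ has a weight-$0$ endpoint, so all the spurious square edges are covered for free and the weight of an optimal cover of $H_{x,y}^2$ equals that of $G_{x,y}$ exactly (\Cref{lemma:sizeMWVC}). The remaining difficulty---which your proposal also never has to face, but only because it skips the subdivision---is that naively subdividing the $\Theta(k^2)$ input-dependent edges blows the vertex count up to $\Theta(k^2)$ and degrades the bound to $\tilde\Omega(n)$; the paper handles this by \emph{sharing} one zero-weight subdivision vertex $p_a^i$ per row vertex $a_1^i$ among all of its edges into $A_2$. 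To repair your argument you would need both ingredients: subdivide (not buffer) every edge whose presence depends on $x$ or $y$, and share the subdivision vertices so that $|V(H_{x,y})|=\tilde O(k)$.
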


To prove this lower bound, we use the framework of reductions from 2-party communication problems as stated in Theorem~\ref{thm: general lb framework}.

A naive attempt is to try and use the vertex cover lower bound graph \(G_{x,y}\) from~\cite{Censor-HillelKP17} and replace each edge with a vertex of weight \(0\), 
in order to get a new lower bound graph \(H_{x,y}\) with the same size of solution for $H^2$ as in $G$. 
The issue is that in the graph \(H_{x,y}\), the number of vertices is \(O(m)\) where \(m = O(n^{2})\) is the number of edges in \(G_{x,y}\). This is a major issue if we want super linear lower bounds as a quadratic lower bound in \(G_{x,y}\) only gives a linear lower bound for \(H^{2}_{x,y}\). So instead, we modify the lower bound graph construction of~\cite{Censor-HillelKP17} in a subtle manner to show an \(\Omega(n^2)\) lower bound for computing exact MWVC in \(G^2\).

\textbf{The \(G\)-MVC lower bound graph family \(G_{x,y}\) from~\cite{Censor-HillelKP17}: } See~\Cref{fig:censor-hillel-mvc-lb-graph} for an illustration of \(G_{x,y}\). At a high level, the lower bound graph of~\cite{Censor-HillelKP17} has four cliques \(A_1, A_2, B_1, B_2\) of size \(k\) which are called the \textit{row vertices} and \(2 \log_2 k\) \(4\)-cycles which are called \textit{bit-gadgets}. There are \(\log_{2} k\) \(4\)-cycles for the row vertices \(A_{1}, B_{1}\) and the other \(\log_{2} k\) \(4\)-cycles are for the row vertices \(A_{2}, B_{2}\)
The \(i^{th}\) bit gadget for \(A_{1}, B_{1}\) is a \(4\)-cycle with vertices \(t_{A_{1}}^{i}, f_{A_{1}}^{i}, t_{B_{1}}^{i}, f_{B_{1}}^{i}\). The vertices in \(a_{1}^{i} \in A_{1}\) are connected to the bit gadget vertices \(f_{A_{1}}^j, t_{A_{1}}^j\) depending on the binary representation of \(i-1\).
Specifically, $a_1^i$ is connected to $t_{A_{1}}^{j}$ if the $j^{th}$ bit of the binary representation of $i-1$ is $1$ and it is connected to $f_{A_{1}}^{j}$ otherwise.
For example, the vertex \(a_{1}^{1}\) is connected to all the \(f_{A_{1}}\) vertices. The connections for other row vertices in  \(A_{2}, B_{1}, B_{2}\) to the corresponding bit gadget vertices are similar.

An edge between vertices \(a_{1}^{i} \in A_1\) and \(a_{2}^{j} \in A_2\) is added iff \(x_{ij} = 0\) in the set disjointness input \(x \in {\{0,1\}}^{k^{2}}\). Similarly, an edge between vertices \(b_{1}^{i} \in B_1\) and \(b_{2}^{j} \in B_2\) is added iff \(y_{ij} = 0\) in the set disjointness input \(y \in {\{0,1\}}^{k^{2}}\).

\begin{figure}
\begin{center}
\includegraphics[width=.9\linewidth]{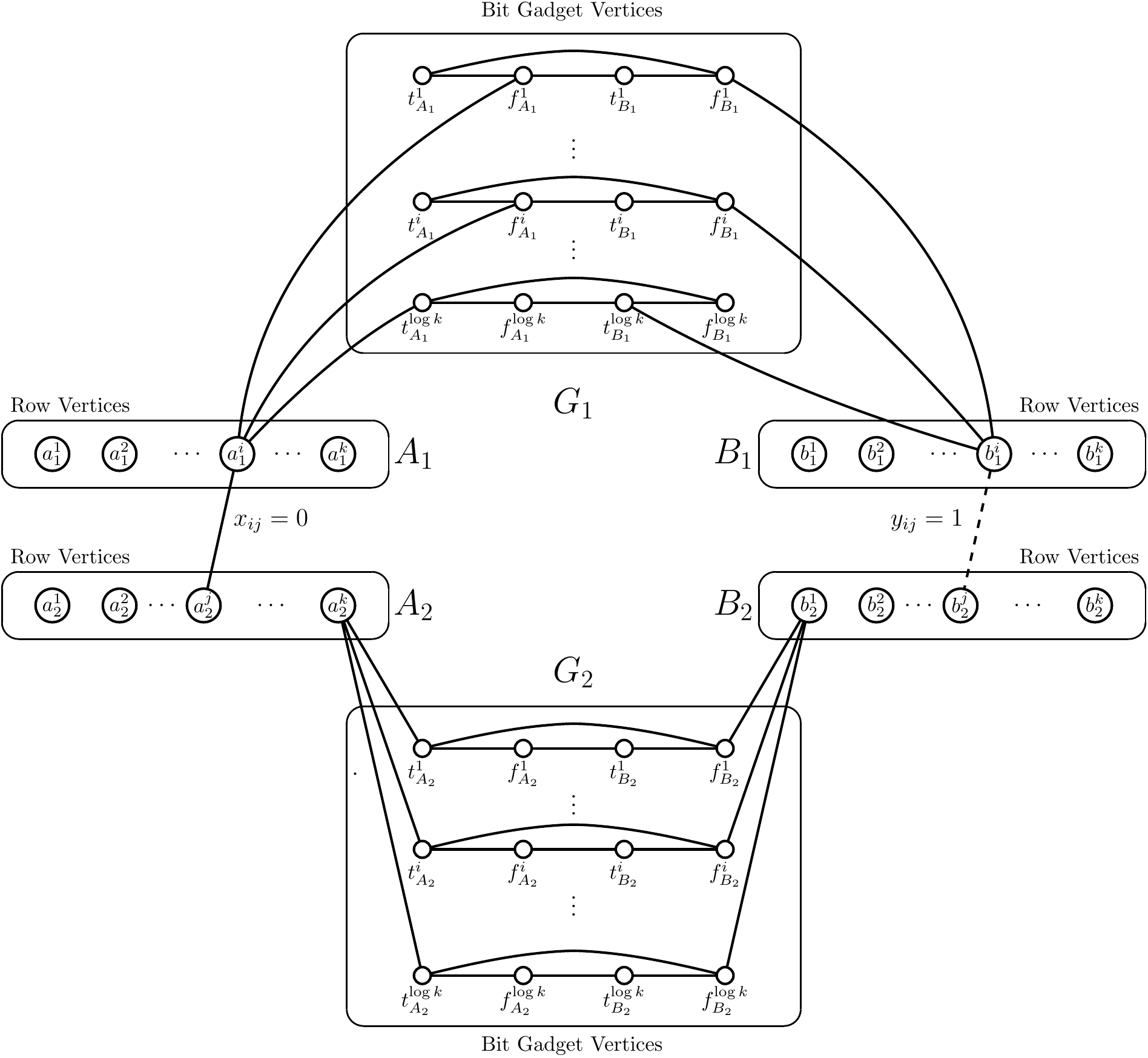}
\end{center}
\caption{\label[figure]{fig:censor-hillel-mvc-lb-graph}This figure shows the lower bound graph \(G_{x,y}\) that appears in~\cite{Censor-HillelKP17} used to show a quadratic lower bound for computing exact MVC in the \CONGEST\ model. We use this graph as the basis for our vertex cover lower bounds.}
\end{figure}

\textbf{\(G^{2}\)-MWVC lower bound graph family \(H_{x,y}\): } In order to construct our lower bound graph \(H_{x,y}\), we first start with \(G_{x,y}\). We take the edges incident on the \(2 \log_2 k\) bit-gadget vertices and replace each edge \(e\) with a \emph{path gadget} \(P_{e}\) which is a single vertex \(p_{e}\) having weight \(0\), which connected to both endpoints of \(e\).
Note that up to this point, we have added \(O(k \log k)\) vertices of weight \(0\), so we do not have too many vertices in \(H_{x,y}\). But we cannot replace each edge between the cliques \(A_1, A_2\) and cliques \(B_1, B_2\) by path gadgets because doing so might introduce \(O(k^2)\) vertices in the worst case.

To overcome this issue, we have the cliques \emph{share} their path gadgets. In particular, we connect a new zero weight vertex \(p_{a}^{i}\) to the vertex \(a_{1}^{i} \in A_1\) for each \(1 \le i \le k\) and for every edge between \(a_1^{i} \in A_1\) and \(a_2^{j} \in A_2\) in \(G_{x,y}\), we add a corresponding edge between \(p_{a}^{i}\) and \(a_{2}^{j}\).  We do the same for the row vertices \(B_{1}, B_{2}\) by connecting a new zero weight vertex \(p_{b}^{i}\) to the vertex \(b_{1}^{i} \in B_{1}\) for each \(1 \le i \le k\). Therefore, the number of vertices in \(H_{x,y}\) is still \(O(k \log k)\). See \Cref{fig:MWVC-lower-bound-path-gadgets} for an illustration. Note that all vertices in \(H_{x,y}\) that come from \(G_{x,y}\) have weight 1 and all the other vertices have weight \(0\).

\begin{figure}
\begin{center}
\includegraphics[width=.83\linewidth]{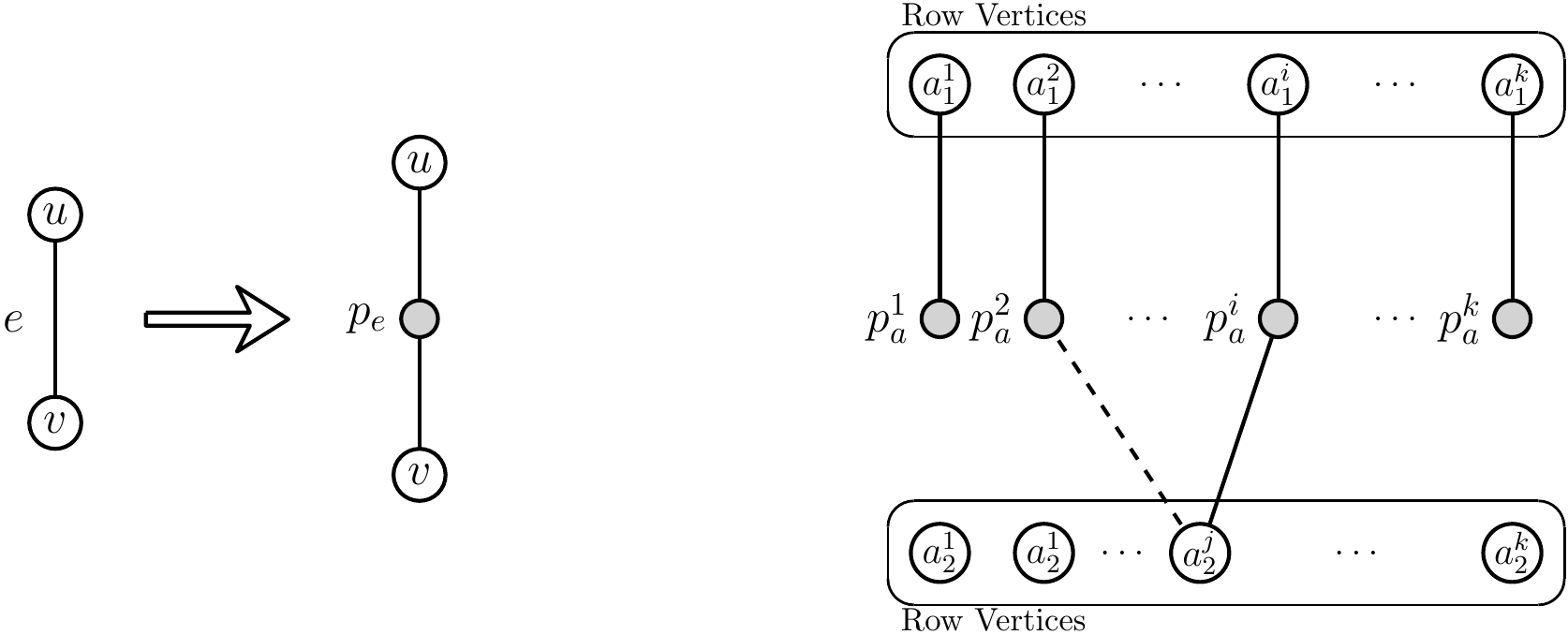}
\end{center}
\caption{\label[figure]{fig:MWVC-lower-bound-path-gadgets}This figure shows how an edge \(e\) is replaced by a path gadget \(P_{e}\) on the left, and on the right it shows how the row vertices in \(A_{1}\) and \(A_{2}\) share their path gadgets. We show two examples of how we add edges between \(p_{a}^{i}\) and \(a_{2}^{j}\) depending on whether the edge \(\{a_{1}^{i}, a_{2}^{j}\}\) exists in \(G_{x,y}\) or not. The gadget sharing for \(B_{1}\) and \(B_{2}\) is similar. We only show some edges of \(H_{x,y}\) for clarity.}
\end{figure}

We formally state our reduction between the lower bound graphs in the following lemma.

\begin{lemma}\label[lemma]{lemma:sizeMWVC}
The graph \(H_{x,y}^2\) has a vertex cover of weight \(W\) if and only if the graph \(G_{x,y}\) has a vertex cover of weight \(W\).
\end{lemma}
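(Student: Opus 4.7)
My plan is to prove both directions by a weight-preserving transformation of covers, based on the following key structural observation: for any two vertices $u, v \in V(G_{x,y})$ (i.e., vertices inherited from $G_{x,y}$, not path-gadget vertices), $u$ and $v$ are adjacent in $H_{x,y}^2$ if and only if $\{u,v\} \in E(G_{x,y})$.

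To establish this observation, I would use that the only direct edges of $H_{x,y}$ between two non-path-gadget vertices are the within-clique edges of $A_1, A_2, B_1, B_2$, since every other edge of $G_{x,y}$ (bit-gadget edges and cross-clique edges) was replaced by a path through a gadget. Hence every 2-path in $H_{x,y}$ between non-path-gadget vertices $u, v$ must pass through some path-gadget vertex $p$. I would then enumerate cases: a simple path-gadget $p_e$ has exactly two non-path-gadget neighbors, namely the endpoints of $e$, so the induced adjacency in $H_{x,y}^2$ is precisely $e \in E(G_{x,y})$. A shared path-gadget $p_a^i$ has non-path-gadget neighbors $a_1^i$ together with $\{a_2^j : \{a_1^i,a_2^j\}\in E(G_{x,y})\}$; the 2-paths through $p_a^i$ are thus either $a_1^i$–$p_a^i$–$a_2^j$ (matching an edge of $G_{x,y}$) or $a_2^j$–$p_a^i$–$a_2^{j'}$, which is covered by the fact that $A_2$ is a clique in $G_{x,y}$. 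The analogous case holds for $p_b^i$. Every within-clique direct edge in $H_{x,y}$ is also a within-clique edge of $G_{x,y}$, completing the correspondence.

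For the forward direction, given a VC $C$ of $G_{x,y}$ of weight $W$, I would take $C' := C \cup P$, where $P$ is the set of all path-gadget vertices of $H_{x,y}$. Since every vertex of $P$ has weight $0$, $C'$ has weight $W$. Any edge of $H_{x,y}^2$ incident to a path-gadget vertex is covered because $P \subseteq C'$; any edge of $H_{x,y}^2$ between two non-path-gadget vertices corresponds, by the structural observation, to an edge of $G_{x,y}$, which is covered by $C \subseteq C'$. For the reverse direction, given a VC $C'$ of $H_{x,y}^2$ of weight $W$, I would set $C := C' \cap V(G_{x,y})$; since the discarded path-gadget vertices have weight $0$, $C$ also has weight $W$. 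For every edge $\{u,v\} \in E(G_{x,y})$, the structural observation gives $\{u,v\} \in E(H_{x,y}^2)$, so $u \in C'$ or $v \in C'$, and since both are in $V(G_{x,y})$, this means $u \in C$ or $v \in C$. Thus $C$ is a VC of $G_{x,y}$ of weight $W$.

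The main obstacle is the structural observation itself, specifically the analysis of shared path-gadgets. These gadgets are what avoid the vertex blowup, but they also create new 2-paths between non-adjacent looking pairs (e.g., $a_2^j$–$p_a^i$–$a_2^{j'}$ where $i \neq j, j'$), and one must verify that each such spurious-looking adjacency still corresponds to a genuine edge of $G_{x,y}$ — here this is rescued by the clique structure of $A_2$ and $B_2$. Once this observation is in hand, both directions follow immediately from the fact that all added vertices carry weight $0$.
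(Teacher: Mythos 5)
Your proof is correct and follows essentially the same route as the paper's: include all the weight-$0$ gadget vertices in one direction, restrict to the original vertices in the other, and observe that the induced adjacency of $H_{x,y}^2$ on $V(G_{x,y})$ is exactly $E(G_{x,y})$. Your explicit case analysis of the shared gadgets (in particular, that the spurious $2$-paths $a_2^j$--$p_a^i$--$a_2^{j'}$ are absorbed by the clique structure of $A_2$) is a point the paper's one-line claim ``these are exactly the edges in $G_{x,y}$'' leaves implicit, so your write-up is, if anything, more careful.
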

\begin{proof}
For the forward direction, we can include all the zero weight vertices in any vertex cover of \(H_{x,y}^2\) without affecting the weight. The edges that need to be covered are between pairs of vertices that are connected by some path gadget, plus the edges of the cliques \(A_{1}, A_{2}, B_{1}, B_{2}\). These are exactly the edges in \(G_{x,y}\) and therefore, the non-zero weight vertices in a vertex cover of \(H_{x,y}^2\) form a valid vertex cover of \(G_{x,y}\). For the reverse direction, notice that a vertex cover of \(G_{x,y}\) along with all zero-weight vertices of \(H_{x,y}\) covers all the edges in \(H_{x,y}^2\).
\end{proof}

\begin{proof}[Proof of \Cref{thm:MWVC-lower-bound-main-theorem}]
  Censor-Hillel et al.~\cite{Censor-HillelKP17} show that the MVC lower bound graph \(G_{x,y}\) is a family of lower bound graphs for the \CONGEST\ model wrt the set-disjointness function \(f=DISJ_{k^{2}}\) and the predicate \(P_{G}\) which asks whether the graph \(G\) has a vertex cover of size \(W = 4(k - 1) + 4\log k\). The vertices of \(G_{x,y}\) are partitioned into \(V_{A} = A_{1} \cup A_{2} \cup \{f_{S}^{i}, t_{S}^{i} \mid 1 \le i \le \log_{2} k, S \in \{A_{1}, A_{2}\}\}\) and \(V_{B} = V(G_{x,y}) \setminus V_{B}\), with the cut size being \(|E(V_{A}, V_{B})| = O(\log k)\).

  By \Cref{lemma:sizeMWVC}, we know that \(G_{x,y}\) satisfies the predicate \(P_{G}\) iff the graph \(H_{x,y}\) satisfies the predicate \(P_{H}\) which asks whether the input graph has a weighted \(G^{2}\)-vertex cover of weight \(W\). Recall that the number of vertices in \(H_{x,y}\) is \(O(k \log k)\).

  Define \(V_{A}' = V_{A} \cup \{P_{e} | e = \{u, v\} \text{ and } u, v \in V_{A}\}\cup \{p_{a}^{i} \mid 1 \le i \le k\}\) and \(V_{B}' = V(H_{x,y}) \setminus V_{A}'\). With these definitions of \(V_{A}'\) and \(V_{B}'\), the size of the cut of \(H_{x,y}\) is \(|E(V_{A}', V_{B}')| = O(\log k)\).

  The graph \(H_{x,y}\) with partition \(V_{A}', V_{B}'\) is a family of lower bound graphs wrt the set-disjointness function \(f=DISJ_{k^{2}}\) and the predicate \(P_{H}\). Therefore, \Cref{thm: general lb framework} gives an \(\tilde{\Omega}(k^2)\) lower bound for the problem of deciding whether a graph with \(O(k \log k)\) vertices has a vertex cover of weight \(W = 4(k - 1) + 4 \log_{2} k\) as shown in~\cite{Censor-HillelKP17}. Therefore for a graph with \(n\)-vertices we get an \(\tilde{\Omega}(n^{2})\) lower bound which completes the proof of \Cref{thm:MWVC-lower-bound-main-theorem}.
\end{proof}

\subsection{\CONGEST : Quadratic Lower Bound for Exact $G^2$-MVC}\label{sec:orgcb69afc}
In this section we show a quadratic lower bound for exact minimum vertex cover with no weights. 

\begin{theorem}\label[theorem]{thm:MVC-lower-bound-main-theorem}
  Any distributed algorithm in the \CONGEST\ model which given an input graph \(G\), computes the minimum vertex cover of \(G^{2}\) requires \(\tilde{\Omega}(n^{2})\) rounds.
\end{theorem}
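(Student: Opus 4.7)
The plan is to adapt the weighted lower-bound construction $H_{x,y}$ of \Cref{sec:org8d0f7a0} to the unweighted setting by replacing the ``free'' role played there by each weight-$0$ path-gadget vertex with a small structural gadget that forces a predictable number of vertices into \emph{every} minimum vertex cover of the squared graph, independently of the SetDisjointness input $(x, y)$. Starting from $H_{x,y}$, I would construct $H'_{x,y}$ by attaching to every path-gadget vertex $p \in \{p_e\} \cup \{p_a^i\}_{i=1}^{k} \cup \{p_b^i\}_{i=1}^{k}$ two fresh vertices $r_p, s_p$ together with the edges $\{p, r_p\}$ and $\{r_p, s_p\}$; I call $\{r_p, s_p\}$ the \emph{enforcer} attached to $p$. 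Since $H_{x,y}$ has $\Gamma = O(k \log k)$ path-gadget vertices, the enforcer introduces only $O(k \log k)$ additional vertices, keeping $n := |V(H'_{x,y})| = O(k \log k)$.

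The key structural claim to establish is
\[
\mathrm{MVC}\bigl((H'_{x,y})^{2}\bigr) \;=\; \mathrm{MVC}(G_{x,y}) + 2\Gamma.
\]
For the upper bound I would take a minimum vertex cover $C^{\ast}$ of $G_{x,y}$ and output $C^{\ast} \cup \{p, r_p : p \text{ a gadget vertex}\}$. The edges of $G_{x,y}$ are covered by $C^{\ast}$; every $(H'_{x,y})^{2}$-edge incident to some $p$ or $r_p$ is covered by endpoint inclusion; and $s_p$ has only two $(H'_{x,y})^{2}$-neighbors, namely $r_p$ and (at distance $2$ via $r_p$) $p$, both of which are in the cover. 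For the lower bound, observe that $\{p, r_p, s_p\}$ induces a triangle in $(H'_{x,y})^{2}$: the edges $\{p, r_p\}$ and $\{r_p, s_p\}$ already lie in $H'_{x,y}$, while $\{p, s_p\}$ is a distance-$2$ pair via $r_p$. Hence any vertex cover of $(H'_{x,y})^{2}$ contains at least two vertices per gadget, contributing at least $2\Gamma$. Independently, every edge $\{u, v\}$ of $G_{x,y}$ is present in $(H'_{x,y})^{2}$ (as an untouched intra-clique edge of $H'_{x,y}$ if it survived path-gadget replacement, or as a distance-$2$ pair via its path gadget otherwise); since $u, v$ both belong to $V(G_{x,y})$ while no gadget or enforcer vertex is an endpoint of $\{u, v\}$, the restriction of any vertex cover of $(H'_{x,y})^{2}$ to $V(G_{x,y})$ is itself a vertex cover of $G_{x,y}$, of size at least $\mathrm{MVC}(G_{x,y})$. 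Summing these two disjoint contributions yields the matching lower bound.

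To conclude, I would apply \Cref{thm: general lb framework} with the partition obtained by placing every new $r_p, s_p$ on the same side as $p$, extending the Alice--Bob partition of the weighted construction. No enforcer creates a new cross edge, so the cut remains $|E(V_A', V_B')| = O(\log k)$, exactly as in the weighted proof. By the structural claim, $(H'_{x,y})^{2}$ admits a vertex cover of size at most $W + 2\Gamma$ (with $W = 4(k-1) + 4\log k$) if and only if $\mathrm{DISJ}_{k^{2}}(x, y) = \mathrm{true}$, so \Cref{thm: general lb framework} yields an $\tilde{\Omega}(k^{2})$ round lower bound, which is $\tilde{\Omega}(n^{2})$ since $n = O(k \log k)$.

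The main obstacle will be the lower-bound direction of the structural claim. In the weighted proof it was trivially handled by setting gadget weights to $0$, but here one must rule out an optimal cover that saves on the enforcer triangles by forcing extra cover vertices from $V(G_{x,y})$, or vice versa. The clean separation into a per-triangle bound plus a global cover-of-$G_{x,y}$ bound, crucially relying on the fact that every $G_{x,y}$-edge has both of its endpoints outside the gadget/enforcer vertex set, is what makes the two contributions rigorously additive and yields equality rather than merely an inequality.
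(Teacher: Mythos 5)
Your proposal is correct and coincides with the paper's own proof: attaching the two-vertex ``enforcer'' path $p-r_p-s_p$ to each (shared or individual) weight-$0$ gadget vertex $p$ of the weighted construction produces exactly the three-vertex dangling/shared path gadgets $P[1],P[2],P[3]$ of \Cref{sec:orgcb69afc}, with the same vertex count $O(k\log k)$, the same $O(\log k)$ cut, and the same size relation $\mathrm{MVC}(H^2_{x,y})=\mathrm{MVC}(G_{x,y})+2\Gamma$. Your additivity argument (disjoint triangles force $2$ vertices per gadget, while the $G_{x,y}$-edges, having both endpoints outside the gadgets, independently force $\mathrm{MVC}(G_{x,y})$ vertices) is a slightly more direct route to \Cref{lemma:sizeMVC} than the paper's normal-form rearrangement in \Cref{lem:dangling-rearrangement}, but it is the same proof in substance.
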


\textbf{The Dangling Path Gadget: } The lower bound graph construction is inspired by the weighted construction, but here we need to define a new gadget in order to remove the vertex weights. For an edge \(e \in G_{x,y}\), let \(DP_{e}\) be a gadget having \(3\) vertices \(DP_{e}[1], DP_{e}[2], DP_{e}[3]\) connected in a path. The vertex \(DP_{e}[1]\) is connected to both the endpoints of \(e\). We call \(DP_{e}\) a \emph{dangling path gadget} and the path \(DP_{e}[1], DP_{e}[2], DP_{e}[3]\) is called a \emph{dangling path}. We  refer to \(DP_{e}[3]\) as the \emph{leaf} of the dangling path gadget \(DP_{e}\). See~\Cref{fig:MVC-lower-bound-dangling-path-gadget} (left) for an illustration.

\textbf{\(G^{2}\)-MVC lower bound graph family \(H_{x,y}\): } In order to construct our lower bound graph \(H_{x,y}\), we first start with \(G_{x,y}\). We take each edge \(e \in G_{x,y}\) that is incident on the \(2 \log_2 k\) bit-gadget vertices and replace each edge \(e\) with a dangling path gadget \(DP_{e}\). Note that there are $O(k \log k)$ edges incident on bit gadget vertices and therefore we have not introduced too many path gadget vertices.

The remaining edges are those between the row vertices \(A_{1}, B_{1}, A_{2}, B_{2}\).
These can be $O(k^2)$ in number and so we cannot add a dangling path gadget for each edge.
For each row vertex \(a_{1}^{i} \in A_{1}\), we add a \textit{shared path gadget} \(A_{1}^{i}\).
The gadget is similar to the dangling path gadget in that it has \(3\) vertices \(A_{1}^{i}[1], A_{1}^{i}[2], A_{1}^{i}[3]\) connected to form a path.
The vertex \(A_{1}^{i}[1]\) is connected to \(a_{1}^{i}\). We add a similar shared path gadget \(B_{1}^{i}\) for the \(i^{th}\) row vertex \(b_{1}^{i} \in B_{1}\).
For each edge between \(a_{1}^{i}\) and \(a_{2}^{j}\) in \(G_{x,y}\), we add a corresponding edge
between \(A_{1}^{i}[1]\) and \(a_{2}^{j}\) in \(H_{x,y}\). And similarly for each edge between \(b_{1}^{i}\) and \(b_{2}^{j}\) in \(G_{x,y}\), we
add a corresponding edge between \(B_{1}^{i}[1]\) and \(b_{2}^{j}\) in \(H_{x,y}\). See~\Cref{fig:MVC-lower-bound-dangling-path-gadget} (right) for an illustration.
Note that adding shared path gadgets results in only $O(k)$ additional path gadget vertices.

\begin{figure}
\begin{center}
\includegraphics[width=.9\linewidth]{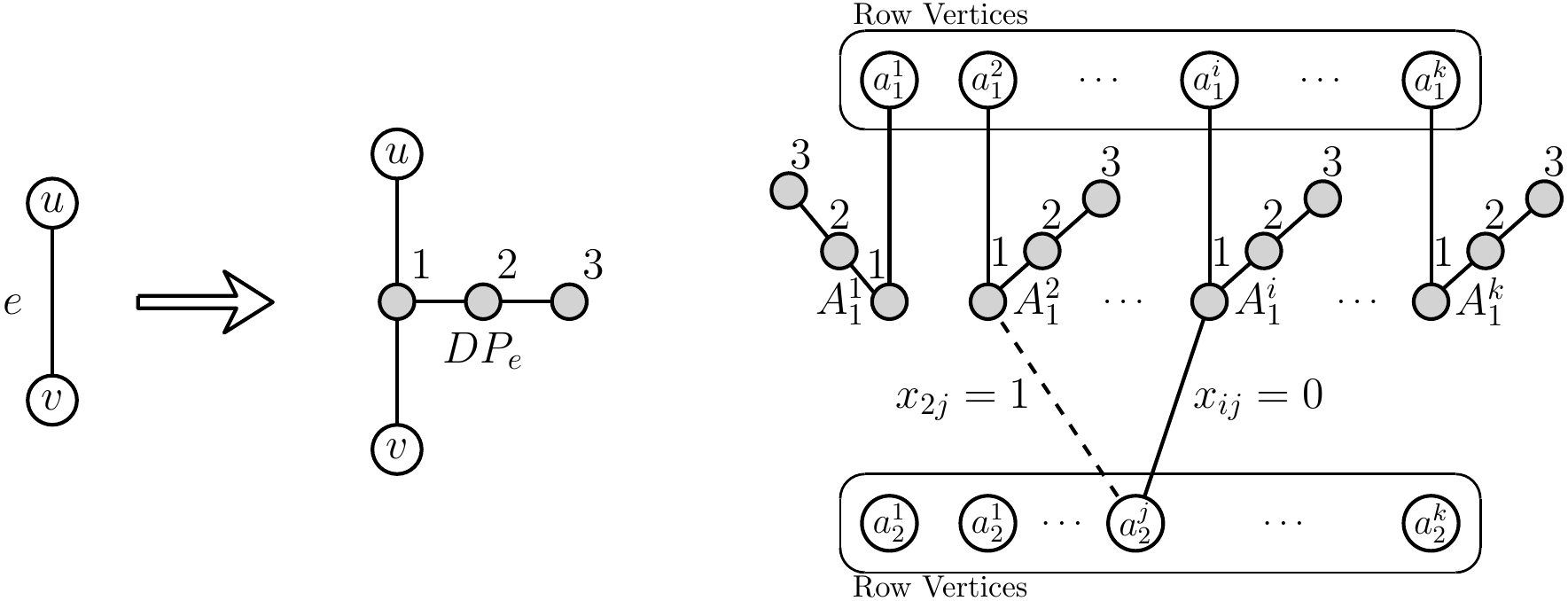}
\end{center}
\caption{\label[figure]{fig:MVC-lower-bound-dangling-path-gadget}This figure shows how an edge \(e\) is replaced by a dangling path gadget \(DP_{e}\) on the left, and on the right it shows how the row vertices in \(A_{1}\) and \(A_{2}\) are connected by shared path gadgets. We show two examples of how we add edges between \(A_{1}^{i}[1]\) and \(a_{2}^{j}\) depending on whether the edge \(\{a_{1}^{i}, a_{2}^{j}\}\) exists in \(G_{x,y}\) or not. The gadget sharing for \(B_{1}\) and \(B_{2}\) is similar. We only show the edges of \(H_{x,y}\) for clarity.}
\end{figure}

\begin{lemma}\label[lemma]{lem:dangling-rearrangement}
Any vertex cover of \(H_{x,y}^2\) of size \(c\) can be modified to a vertex cover of size at most \(c\) that contains no vertex of index \(3\) in any dangling path gadget or shared path gadget. Equivalently, this modified vertex cover contains all vertices in every dangling path gadget except the leaf.
\end{lemma}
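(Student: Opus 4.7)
The plan is to apply a local ``leaf-repair'' operation independently to each dangling path gadget $DP_e$ and each shared path gadget $A_1^i, B_1^i$. Label the three vertices of any such gadget $v_1, v_2, v_3$ so that $v_3$ is the leaf and $v_1 v_2, v_2 v_3 \in E(H_{x,y})$. The key structural observation is that these three vertices form a triangle in $H_{x,y}^2$ (since $v_1$ and $v_3$ are at $H_{x,y}$-distance $2$ through $v_2$), and that the leaf $v_3$ has no other neighbors in $H_{x,y}^2$: its only $H_{x,y}$-neighbor is $v_2$, whose only $H_{x,y}$-neighbors are $v_1$ and $v_3$. Hence the edges of $H_{x,y}^2$ incident to $v_3$ are exactly $\{v_1 v_3, v_2 v_3\}$.

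Given a vertex cover $S$ of $H_{x,y}^2$, I construct $S'$ by applying the following rule to each gadget: if $v_3 \in S$, remove $v_3$ from $S$ and add both $v_1$ and $v_2$ (if not already present); otherwise leave the gadget alone. Validity of $S'$ as a vertex cover is straightforward: since only $v_3$ is removed and its sole $H_{x,y}^2$-edges are to $v_1$ and $v_2$, both endpoints of those edges lie in $S'$ afterwards, and all other edges remain covered because we only added vertices. Because gadgets are vertex-disjoint and every modification is internal to its gadget, running the repair in parallel across all gadgets is well defined.

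The size bound $|S'| \le |S|$ follows from a short case analysis that uses the edge $v_1 v_2 \in E(H_{x,y}^2)$: whenever $v_3 \in S$, the cover $S$ must also contain at least one of $v_1, v_2$ in order to cover this edge, so the only possibilities are $\{v_1, v_3\}$, $\{v_2, v_3\}$, or $\{v_1, v_2, v_3\}$, each of which the rule turns into $\{v_1, v_2\}$ with net size change $0$, $0$, or $-1$ respectively. The ``equivalently'' clause is then immediate: once $v_3 \notin S'$, the two incident $H_{x,y}^2$-edges force both $v_1$ and $v_2$ into $S'$. I do not expect any genuine obstacle; the one point that must be checked carefully is the claim that no $H_{x,y}^2$-edge incident to a leaf escapes its gadget, which is precisely the fact that the middle vertex $v_2$ has $H_{x,y}$-degree $2$ in both kinds of gadget.
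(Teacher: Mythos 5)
Your proposal is correct and follows essentially the same argument as the paper: both rely on the observation that the leaf's only $H_{x,y}^2$-neighbors are the other two gadget vertices, that the three gadget vertices form a triangle in $H_{x,y}^2$ (so any cover already contains at least one of $v_1,v_2$ when $v_3$ is present), and then swap the leaf for the at most one missing non-leaf vertex. No gaps.
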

\begin{proof}
Fix a particular dangling path gadget or shared path gadget \(P\) such that \(P[3]\) is in the vertex cover. Note that \(P[3]\) is only connected to vertices \(P[1]\) and \(P[2]\) since it is more than \(2\)-hops apart from vertices not in \(P\). Moreover, \(P[1], P[2], P[3]\) form a triangle in \(H_{x,y}^2\). Therefore, any vertex cover must have at least \(2\) vertices from this triangle. So if \(P[3]\) is present in the vertex cover, we can remove it and add any other vertex in the triangle that was excluded (there can be at most one such vertex), and we still cover all the edges in \(H^{2}_{x,y}\).

Since \(P[3]\) is not in the vertex cover, \(P[1], P[2]\) have to be in the vertex cover as these three vertices form a triangle. Doing this process for all dangling path gadgets and shared path gadgets gives us the lemma.
\end{proof}

We now state the reduction from our lower bound graph \(H_{x,y}^2\) to the lower bound graph \(G_{x,y}\) of~\cite{Censor-HillelKP17} in the following lemma.

\begin{lemma}\label{lemma:sizeMVC}
The graph \(H_{x,y}^2\) has a minimum vertex cover of size \(W + 2(2k + 4k\log_2 k + 8\log_2 k)\) if and only if the graph \(G_{x, y}\) has a minimum vertex cover of size \(W\).
\end{lemma}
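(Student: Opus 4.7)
The plan is to relate vertex covers of $H_{x,y}^2$ and of $G_{x,y}$ by cleanly separating the contribution of the path-gadget vertices from that of the original vertices inherited from $G_{x,y}$. Let $g = 2k + 4k\log_2 k + 8\log_2 k$ denote the total number of gadgets in $H_{x,y}$: namely $2k$ shared path gadgets (one per row vertex in $A_1\cup B_1$), $4k\log_2 k$ dangling path gadgets (one per row-vertex-to-bit-gadget edge of $G_{x,y}$), and $8\log_2 k$ dangling path gadgets (one per edge inside the $2\log_2 k$ bit-gadget $4$-cycles). I would aim to prove the stronger equality $\mathrm{mvc}(H_{x,y}^2) = \mathrm{mvc}(G_{x,y}) + 2g$, which immediately implies the stated biconditional.

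The core ingredient is a structural claim I would establish first: for any two vertices $u,v\in V(G_{x,y})\subseteq V(H_{x,y})$, the pair $\{u,v\}$ is an edge of $H_{x,y}^2$ if and only if it is an edge of $G_{x,y}$. The ``if'' direction is immediate: intra-clique edges of $A_1,A_2,B_1,B_2$ are preserved in $H_{x,y}$; each bit-gadget-incident edge $e$ is realized as a length-$2$ walk through $DP_e[1]$; and each edge from $a_1^i$ to $a_2^j$ (resp.\ $b_1^i$ to $b_2^j$) is realized through the shared vertex $A_1^i[1]$ (resp.\ $B_1^i[1]$). For the converse, I enumerate the $H_{x,y}$-neighborhood of each original vertex: a gadget vertex $DP_e[1]$ attaches only to the two endpoints of the single edge $e$ it replaced, and $A_1^i[1]$ (resp.\ $B_1^i[1]$) attaches only to $a_1^i$ (resp.\ $b_1^i$) and to those $A_2$- (resp.\ $B_2$-) vertices corresponding to original $G_{x,y}$-edges. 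Hence a gadget vertex can serve as a common $H_{x,y}$-neighbor of two original vertices $u,v$ only when $\{u,v\}$ was already an edge of $G_{x,y}$; any incidental edges this creates between, say, two $A_2$-vertices sharing a shared-gadget neighbor already lie inside the clique $A_2$ and thus in $G_{x,y}$.

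Given the structural claim, the forward direction is routine: starting from an MVC $C$ of $G_{x,y}$ of size $W$, I take $C'=C\cup P$, where $P=\bigcup_{Q}\{Q[1],Q[2]\}$ is the union, over all $g$ gadgets $Q$, of their first two vertices, contributing exactly $2g$ vertices. Every $H_{x,y}^2$-edge with both endpoints in $V(G_{x,y})$ is then covered by $C$ by the structural claim, and every $H_{x,y}^2$-edge incident on a gadget vertex is covered by $P$: $Q[3]$ lies within two $H_{x,y}$-hops only of $Q[1]$ and $Q[2]$, both in $P$, while $Q[1]$ and $Q[2]$ themselves are in $P$. This shows $\mathrm{mvc}(H_{x,y}^2)\leq W+2g$. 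For the reverse direction I take any MVC $C'$ of $H_{x,y}^2$ of size $W+2g$ and invoke \Cref{lem:dangling-rearrangement} to assume without loss of generality that $C'=P\cup C$ with $C\subseteq V(G_{x,y})$ and $|C|=W$. The structural claim shows that $C$ must cover every edge of $G_{x,y}$, since every such edge is an $H_{x,y}^2$-edge with both endpoints outside $P$, so $\mathrm{mvc}(G_{x,y})\leq W$. The matching lower bound $\mathrm{mvc}(H_{x,y}^2)\geq \mathrm{mvc}(G_{x,y})+2g$ is forced by the $g$ vertex-disjoint triangles $Q[1]Q[2]Q[3]$, each demanding at least two vertices of any cover, together with the fact that the restriction to $V(G_{x,y})$ of any cover must, by the structural claim, be a vertex cover of $G_{x,y}$.

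The main obstacle is the ``only if'' half of the structural claim: ruling out \emph{spurious} $H_{x,y}^2$-edges between pairs of original vertices that were not adjacent in $G_{x,y}$, since any such edge would let the restriction of an MVC of $H_{x,y}^2$ fail to cover $G_{x,y}$ and break the reverse direction. The gadgets are designed precisely to prevent this: a $[1]$-vertex of any dangling or shared path gadget attaches only to the original vertices the gadget was built for, and the only additional $H_{x,y}^2$-edges between original vertices that can arise through a common shared-gadget neighbor are already intra-clique edges of $A_2$ or $B_2$. Once this case analysis is carried out cleanly, both directions of the biconditional follow by arithmetic on $W$ and $2g$.
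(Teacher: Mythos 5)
Your proposal is correct and follows essentially the same route as the paper: normalize the cover so each of the $2k+4k\log_2 k+8\log_2 k$ gadgets contributes exactly its $[1]$ and $[2]$ vertices (forced by the vertex-disjoint triangles $Q[1]Q[2]Q[3]$, i.e., \Cref{lem:dangling-rearrangement}), and observe that the remaining vertices must form a minimum vertex cover of $G_{x,y}$. The one genuine refinement is that you isolate and verify the structural claim that $H_{x,y}^2$ induces exactly the edges of $G_{x,y}$ on the original vertices (no spurious edges through gadget vertices), which the paper asserts only implicitly; this is worth making explicit, but it does not change the argument.
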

\begin{proof}
For the forward direction, consider the vertices from the dangling path gadgets in a minimum vertex cover \(C_H\) of \(H_{x,y}^2\). By~\Cref{lem:dangling-rearrangement}, we can assume that for each dangling path gadget and shared path gadget \(P\), the vertices \(P[1],P[2]\) belong to \(C_H\) and the vertex \(P[3]\) does not belong to \(C_{H}\). There are \(2k + 4k\log_2 k + 8\log_2 k\) such gadgets in \(H_{x, y}\). Therefore, \(C_{H}\) contains \(2(2k + 4k\log_2 k + 8\log_2 k)\) vertices which cover all the edges that have a dangling (or shared) path gadget vertex as an endpoint.

The rest of the vertices in \(C_H\) have to cover all the edges formed by pairs of non-gadget vertices that have a dangling (or shared) path gadget connecting them. These are exactly the edges in \(G_{x,y}\) and therefore, the non-gadget vertices in \(C_H\) must form a minimum vertex cover of \(G_{x,y}\), since otherwise we can create a smaller cover of \(H_{x,y}^2\) than \(C_H\) by taking the vertices corresponding to the MVC of \(G_{x,y}\) instead.

For the reverse direction, consider the MVC \(C_G\) of \(G_{x,y}\) having size \(W\). We can take all the \(2(2k + 4k\log_2 k + 8\log_2 k)\) vertices indexed \(1, 2\) from all the dangling path gadgets and shared path gadgets, plus the corresponding \(W\) vertices in \(C_G\) to form a vertex cover of \(H_{x,y}^2\). We cannot form a smaller vertex cover in \(H_{x, y}^2\), because otherwise we could extract a vertex cover of \(G_{x, y}\) that is smaller than \(C_G\) using~\Cref{lem:dangling-rearrangement}, which would contradict the optimality of \(C_G\).
\end{proof}

\begin{proof}[Proof of \Cref{thm:MVC-lower-bound-main-theorem}]
  Censor-Hillel et al.~\cite{Censor-HillelKP17} show that the MVC lower bound graph \(G_{x,y}\) is a family of lower bound graphs for the \CONGEST\ model wrt the set-disjointness function \(f=DISJ_{k^{2}}\) and the predicate \(P_{G}\) which asks whether the graph \(G\) has a vertex cover of size \(W = 4(k - 1) + 4\log k\). The vertices of \(G_{x,y}\) are partitioned into \(V_{A} = A_{1} \cup A_{2} \cup \{f_{S}^{i}, t_{S}^{i} \mid 1 \le i \le \log_{2} k, S \in \{A_{1}, A_{2}\}\}\) and \(V_{B} = V(G_{x,y}) \setminus V_{B}\), with the cut size being \(|E(V_{A}, V_{B})| = O(\log k)\).

  By \Cref{lemma:sizeMVC}, we know that \(G_{x,y}\) satisfies the predicate \(P_{G}\) iff the graph \(H_{x,y}\) satisfies the predicate \(P_{H}\) which asks whether the input graph has a \(G^{2}\)-vertex cover of size \(W + 2(2k + 4k\log_2 k + 8\log_2 k)\). Recall that the number of vertices in \(H_{x,y}\) is \(O(k \log k)\).

  Define \(V_{A}' = V_{A} \cup \{DP_{e} | e = \{u, v\} \text{ and } u, v \in V_{A}\}\cup \{A_{1}^{i} \mid 1 \le i \le k\}\) and \(V_{B}' = V(H_{x,y}) \setminus V_{A}'\). With these definitions of \(V_{A}'\) and \(V_{B}'\), the size of the cut of \(H_{x,y}\) is \(|E(V_{A}', V_{B}')| = O(\log k)\).

  The graph \(H_{x,y}\) with partition \(V_{A}', V_{B}'\) is a family of lower bound graphs wrt the set-disjointness function \(f=DISJ_{k^{2}}\) and the predicate \(P_{H}\). Therefore, \Cref{thm: general lb framework} gives an \(\tilde{\Omega}(k^2)\) lower bound for the problem of deciding whether a graph with \(O(k \log k)\) vertices has a vertex cover of size \(W + 2(2k + 4k\log_2 k + 8\log_2 k)\) where \(W = 4(k - 1) + 4 \log_{2} k\). Therefore for a graph with \(n\)-vertices we get an \(\tilde{\Omega}(n^{2})\) lower bound which completes the proof of \Cref{thm:MVC-lower-bound-main-theorem}.
\end{proof}

\subsection{Limitations of~\Cref{thm: general lb framework}:}\label{sec:VC-apx-limitations}
With the goal of finding how good is the complexity we obtain for a $(1+\eps)$-approximation of MVC in~\Cref{sssec:UBunweightedVC}, we tried to prove a lower bound for it, with respect to $n$ (We mention that the $1/\eps$ term in the complexity is unavoidable, due to a straightforward adaptation of the lower bound given in~\cite{BenBasatKS19}.  The quadratic lower bounds in this paper, as well as in previous papers~\cite{Censor-HillelKP17, BachrachCDELP19}, are all obtained by lower bound graphs of small cuts (logarithmic size). We show that any construction which has a cut of size $o(n)$ cannot give any super-constant lower bound for a $(1+\eps)$-approximation of MVC, for any $\eps = O(1)$. 

\begin{lemma}\label[lemma]{lemma:limitationVC}
Let $P$ be a predicate that implies a $(1+\eps)$-approximation for \(G^{2}\)-MVC. If $\{G_{x,y}\}$ is a family of lower bound graphs with respect to a function $f$ and the predicate $P$, which has a cut $C$ of size $o(n)$, then~\Cref{thm: general lb framework} cannot give a super-constant lower bound for a distributed algorithm for deciding $P$.
\end{lemma}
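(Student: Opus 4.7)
The plan is to upper bound $CC(f)$ by $O(\log n)$; substituting into \Cref{thm: general lb framework} then yields at most an $\Omega(\log n/(|C|\log n)) = \Omega(1/|C|) = O(1)$ round lower bound, which is certainly not super-constant. So the whole task reduces to designing a cheap two-party protocol for $f$.

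I would design the protocol as follows. By \Cref{def:family}, the cut edges $C = E(V_A, V_B)$ depend on neither $x$ nor $y$, so both Alice and Bob know $C$ a priori. Alice knows $G_A := G_{x,y}[V_A]$ completely and locally computes an optimal vertex cover $S_A$ of $G_A^2$; Bob computes $S_B$ analogously for $G_B^2$. Alice transmits the single integer $m_A := |S_A|$ to Bob, using $O(\log n)$ bits. I would then show that Bob can extract a $(1+\eps)$-approximation of $OPT := MVC(G_{x,y}^2)$ from this. Let $V_C$ be the set of endpoints of cut edges, $|V_C| \leq 2|C|$. I claim $S_A \cup S_B \cup V_C$ is a valid vertex cover of $G_{x,y}^2$: any edge of $G_{x,y}^2$ between two vertices of $V_A$ either comes from a $G_A$-path of length at most $2$ (covered by $S_A$) or uses an intermediate vertex in $V_B$, in which case both endpoints are incident to cut edges and hence lie in $V_C$; the case of two vertices in $V_B$ is symmetric; and every cross edge $\{u,v\}$ with $u \in V_A, v \in V_B$ has $u \in V_C$ or $v \in V_C$ depending on which side the intermediate vertex of the witnessing path lies. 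This gives $OPT \leq m_A + m_B + 2|C|$. The matching lower bound $OPT \geq m_A + m_B$ follows by observing that the restriction of any optimal VC of $G_{x,y}^2$ to $V_A$ is a VC of $G_{x,y}^2[V_A] \supseteq G_A^2$, and similarly for $V_B$. Invoking \Cref{lem:structAllNodesVC} componentwise (using the mild standing assumption that the lower-bound family has no large collection of isolated vertices) gives $OPT \geq n/2$, so the ratio between upper and lower bound is at most $1 + 4|C|/n = 1 + o(1) \leq 1 + \eps$ for $n$ sufficiently large, since $|C| = o(n)$ and $\eps = \Omega(1)$.

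Finally, since $P$ implies a $(1+\eps)$-approximation---by which we mean, in the standard framework setting, that $P$ is a threshold predicate separating yes- and no-instances of the family by a factor of $(1+\eps)$ in the MVC value---Bob can determine $P(G_{x,y}) = f(x,y)$ from his approximate value of $OPT$ and send the one-bit answer back to Alice. Total communication is $O(\log n)$, yielding $CC(f) = O(\log n)$. The lemma then follows directly from \Cref{thm: general lb framework}. The main subtlety I anticipate is pinning down precisely how ``$P$ implies a $(1+\eps)$-approximation'' gives the converse direction used in the last step, namely that a $(1+\eps)$-approximation of $OPT$ determines $P$; this is implicit in every use of the framework to lower bound an approximation problem, since otherwise the reduction would not produce a valid approximation hardness result in the first place.
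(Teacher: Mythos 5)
Your proposal is correct and follows essentially the same route as the paper's proof: a two-party protocol in which each player locally computes an optimal cover of its own side, the endpoints of the $o(n)$ cut edges are thrown in to patch the cross-boundary edges of $G_{x,y}^2$, a single $O(\log n)$-bit exchange of cover sizes suffices, and \Cref{lem:structAllNodesVC} gives $OPT \ge n/2$ so the additive $O(|C|)$ overhead is absorbed into a $1+o(1)$ factor, forcing $CC(f)=O(\log n)$. The only cosmetic difference is the order of operations (the paper adds the cut vertices first and then optimally covers what remains on each side, whereas you cover each induced square optimally and then add the cut endpoints), which yields the same bounds.
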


\begin{proof}
The two players construct a graph from the family according to their inputs $x$ and $y$. Each player takes into the cut all of its cut vertices, denoted $C_A$ and $C_B$, respectively, and whatever other vertices from $V_A \setminus C_A$ (respectively, $V_B\setminus C_A$) that form an optimal cover of the edges of $G^2_{x,y}$ that remain after taking the cut vertices $C_A \cup C_B$. The players inform each other about the number of vertices each one took into the cover and conclude the size of the computed cover. This requires exchanging $O(\log{n})$ bits, thus $CC(f) \leq O(\log{n})$.

Our first claim is that the set of selected vertices is indeed a cover. This is because taking all of the cut vertices $C_A \cup C_B$ into the cover promises that any yet uncovered edge of $G^2_{x,y}$ has both endpoints in $V_A \setminus C_A$ or both in $V_B \setminus C_B$, and thus adding any cover on each side gives a cover for $G^2_{x,y}$.

Second, we claim that the computed cover is a $(1+\eps)$-approximation of an optimal solution. The reason is that the computed cover takes an optimal cover of the edges of $G^2_{x,y}$ that have both endpoints in $V_A \setminus C_A$ or both in $V_B \setminus C_B$, and since these are disjoint sets then an optimal solution must take at least this number of vertices. The computed solution then has to account also for the cut vertices $C_A \cup C_B$. However, these are only $o(n)$, while we know from~\Cref{lem:structAllNodesVC} that the size of an optimal solution is at least $n/2$, which gives an approximation factor of $1+o(1)$.

We are now ready to complete the proof. Suppose that $G_{x,y}$ is used with some function $f$ to show a lower bound for a $(1+\eps)$-approximation for \(G^{2}\)-MVC using~\Cref{thm: general lb framework}. Then this lower bound is $\Omega (CC(f)/|C|\log n)$. But $CC(f) \leq O(\log{n})$ and so no super-constant lower bound can be derived with this approach using small cuts.
 \end{proof}

 \subsection{Conditional Hardness for \((1+\epsilon)\)-Approximation}\label{sec:VChardness}
\begin{theorem}
\label{theorem:HtoG}
Let  $\delta, \alpha$ be constants in $(0,1)$. If for every $\eps \in (0,1)$ there is a $(1+\eps)$-approximation algorithm for MVC of $G^2$ that completes in $O(n^\alpha/\eps)$ rounds, then there is a $(1+\delta)$-approximation algorithm for MVC of $G$ that completes in $\tilde{O}(D + n^{(4(1+\alpha)+2\rho)/3})$ rounds, where $\rho=\log(1/\delta)/\log{n}$.
\end{theorem}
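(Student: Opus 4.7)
The plan is to reduce $G$-MVC to $G^2$-MVC by constructing from the input graph $G$ an auxiliary graph $H$, invoking the assumed $(1+\eps)$-approximation for $G^2$-MVC on $H$, and then extracting a $(1+\delta)$-approximate vertex cover of $G$ from the output. The natural candidate is the subdivision construction used for our lower bounds in Section~\ref{sec:orgcb69afc}: for each edge $e = \{u,v\}$ of $G$, introduce a fresh vertex $w_e$ and replace $e$ by the 2-path $u - w_e - v$. Then $u$ and $v$ are at distance $2$ in $H$, so $\{u,v\}$ belongs to $H^2$, and the restriction of any vertex cover of $H^2$ to $V(G)$ is automatically a vertex cover of $G$. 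Each subdivision vertex $w_e$ is hosted at one of its endpoints in $G$, so one round of $H$-communication can be simulated in $O(\Delta(G))$ rounds of $G$-communication (the slowdown being due to bandwidth).

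To control both the vertex count of $H$ and the simulation overhead, we preprocess $G$ by thresholding on degree: every vertex with degree exceeding a parameter $\tau$ is included in the cover directly (at most $2m/\tau$ vertices, gathered via a BFS tree in $\tilde{O}(D)$ rounds), and the subdivision is applied only to the residual subgraph $G'$, which has maximum degree $\tau$ and therefore at most $n\tau/2$ edges. The assumed $(1+\eps)$-approximation run on $H$ takes $O(|V(H)|^{\alpha}/\eps) = O((n\tau)^{\alpha}/\eps)$ rounds on $H$, which translates to $O(\tau \cdot (n\tau)^{\alpha}/\eps)$ rounds on $G$, for a total of $\tilde{O}(D + \tau^{1+\alpha} n^{\alpha}/\eps)$. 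Setting $\tau$ to balance the degree-preprocessing charge $2m/\tau$ against the subdivision-induced blowup, and setting $\eps \sim \delta$, will produce the stated exponent $(4(1+\alpha)+2\rho)/3$; the $n^{2\rho/3} = \delta^{-2/3}$ factor emerges from the three-way balance between $\tau$, $\eps$, and $\delta$.

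The main obstacle is the approximation-factor analysis. The restriction of a $(1+\eps)$-approximate vertex cover of $H^2$ to $V(G)$ is a valid vertex cover of $G$, but its size need not be $(1+\delta) \cdot OPT(G)$, because $H^2$ contains many extra edges --- namely, the incidence edges $\{u,w_e\}$, $\{v,w_e\}$ and, crucially, the line-graph edges $\{w_e, w_f\}$ for pairs of edges $e,f \in E(G')$ sharing an endpoint. Thus $OPT(H^2)$ is inflated by up to $\Theta(|V(H)| - n) = \Theta(n\tau)$ relative to $OPT(G')$, which a naive analysis would charge entirely against $OPT(G)$. Overcoming this will require two ingredients: first, the structural lower bound $OPT(G) \geq n/2$ from Lemma~\ref{lem:structAllNodesVC} (applied to $G$ itself, $r=1$), which allows the additive subdivision charge to be absorbed into a multiplicative $(1+\delta)$ factor once $\tau$ is sufficiently small relative to $\delta$; and second, a gadget-sharing trick analogous to the one in Section~\ref{sec:orgcb69afc} so that the subdivision vertices at each node of $G'$ are amortized across all of that node's incident edges rather than being counted once per edge.

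The hardest part will be making this amortization tight enough to simultaneously produce a $(1+\delta)$ approximation and achieve the exponent $(4(1+\alpha)+2\rho)/3$. It is likely that a single application of the oracle is insufficient and that the analysis must interleave the degree-thresholding preprocessing with one or more iterative refinement passes on the residual graph, so that the ``extra'' vertices contributed by $H^2$ can be charged against $OPT(G)$ on the currently-uncovered portion at each step rather than against $OPT(G)$ in aggregate.
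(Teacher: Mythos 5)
Your reduction goes in the right direction (build $H$ from $G$, call the $G^2$-MVC oracle on $H$, restrict to $V(G)$), but two of your load-bearing steps fail. First, your appeal to Lemma~\ref{lem:structAllNodesVC} with $r=1$ gives nothing: there $\alpha=\lfloor r/2\rfloor+1=1$, so the bound is $n-n/\alpha=0$, and indeed $OPT(G)$ can be $O(1)$ (a star). Without a lower bound on $OPT(G)$ you cannot absorb the additive gadget cost $\Theta(\eps\cdot|V(H)|)$ into a multiplicative $(1+\delta)$ factor, no matter how you amortize. The paper's proof handles exactly this by a case split: it first computes a $2$-approximation $SOL$ in $\polylog$ rounds, sets $\gamma=\log(SOL/2)/\log n$, and when $\gamma<\beta:=(2(1+\alpha)+\rho)/3$ (i.e.\ $OPT$ is small) it abandons the reduction entirely and runs the parametrized $(1+\delta)$-approximation of~\cite{BenBasatKS19} in $O(n^{2\gamma})=\tilde{O}(n^{2\beta})$ rounds; only when $OPT\geq n^{\beta}$ does it invoke the oracle, with $\eps=\delta n^{\beta}/3m$. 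The final exponent $(4(1+\alpha)+2\rho)/3$ comes from balancing these two cases, not from balancing a degree threshold $\tau$ against $\eps$ and $\delta$ as you propose. This fallback branch is a missing idea in your plan, not a detail.

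Second, your gadget is the wrong one, and you correctly identify but do not resolve the resulting problem. With a single subdivision vertex $w_e$, the vertices $\{w_e\}_{e\ni u}$ form a clique in $H^2$ for each $u$, so $OPT(H^2)$ has no clean relation to $OPT(G)$, and your closing paragraphs ("will require a gadget-sharing trick", "likely that a single application of the oracle is insufficient") concede that the approximation analysis is open. The paper instead replaces each edge by a \emph{dangling path} $p_e^1,p_e^2,p_e^3$ with $p_e^1$ adjacent to both endpoints: the three gadget vertices form a private triangle in $H^2$, forcing every cover to spend at least $2$ vertices per gadget, while $p_e^1,p_e^2$ already cover \emph{all} $H^2$-edges incident to the gadget (including the inter-gadget edges you worry about). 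Hence $OPT(H^2)=OPT(G)+2m$ exactly, the $2m$ cancels when you subtract the gadget vertices from the oracle's output, and the error is $\eps(OPT+2m)\leq\delta\,OPT$ by the choice of $\eps$ and the case assumption. No iterative refinement and no degree thresholding are needed; in particular your $O(\Delta)$ simulation overhead is an overestimate (each $G$-edge carries the traffic of exactly one $H$-edge once each gadget is hosted at one endpoint, so the simulation is overhead-free), so the thresholding machinery it motivates is superfluous.
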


\begin{proof}
Let $ALG$ be a $(1+\eps)$-approximation algorithm for MVC on $G^2$ that completes in $O(n^\alpha/\eps)$ rounds. The high-level goal is to deduce an approximate solution for MVC of $G$ given the approximate solution for MVC of $H^2$, for a related graph $H$. An obstacle in doing so is that we will need to run $ALG$ with a value of $\eps$ that depends on the size of the optimal vertex cover of $G$, which we denote by $OPT$, and for this we need $OPT$ to be sufficiently large. To this end, we will first find a very rough approximation for $OPT$, and if it is not sufficiently large then we resort computing a $(1+\delta)$-approximation for it using the parametrized approach of~\cite{BenBasatKS19}.

Formally, we define $\beta=(2(1+\alpha)+\rho)/3$. We run the 2-approximation algorithm for MVC on $G$ given by~\cite{Ben-BasatEKS18}, which takes $O(\log n\log\Delta/\log^2\log\Delta)$ rounds to complete.\footnote{This is the state-of-the-art for a 2-approximation. We note that we could use here any constant approximation algorithm but we omit $\poly\log n$ factors anyhow.} Within another $O(D)$ rounds the nodes learn the size of the given solution, denoted by $SOL$. Let $\gamma = \log{(SOL/2)}/\log n$, implying that $n^{\gamma} = SOL/2$. 

We now consider two cases, depending on whether $\gamma$ is smaller than $\beta$ or is at least $\beta$. If \(\gamma<\beta\) then we run the \((1+\delta)\)-approximation algorithm for MVC of $G$ given by~\cite{BenBasatKS19}, which takes \(O(n^{2\gamma})\) rounds. Because \(\gamma<\beta\), we have that in this case our algorithm completes within $O(\poly\log(n) + D + n^{2\gamma}) = \tilde{O}(D+ n^{2\beta})$ rounds.

Otherwise, \(\gamma\) is at least \(\beta\). In this case we define a graph \(H\) that is obtained from the graph \(G\) by replacing each edge $e =\{w,u\} $ in $G$ with dangling path gadget $DP_e$, as described in Section~\ref{sec:orgcb69afc}. Recall that $DP_e$ is a path on three vertices \(p_{e}^{1}, p_{e}^{2}, p_{e}^{3}\) which is connected by $p_e^1$ to both $u$ and $w$. We simulate an execution of \(ALG\) for MVC of \(H^2\) with $\eps=\delta n^{\beta}/3m = n^{\beta-\rho}/3m$. By our assumption, executing $ALG$ on $H$ completes in $O(n_H^\alpha/\eps)$ rounds, where $n_H = O(m)$ is the number of nodes in $H$ (here $m$ is the number of edges in $G$). For the simulation of $ALG$ on $H$, the nodes of $G$ simply assign each edge to one arbitrary endpoint (say, the one with the smaller identifier) and each node in $G$ simulates itself and the nodes of the gadgets that correspond to the edges that are assigned to it. Since the simulated nodes of each gadget are only connected with a single edge to the other endpoint of the original edge, this simulation incurs no overhead, thus completes in $O(m^\alpha/\eps)$ rounds.

Let $C$ be the set of the original nodes of \(G\) (the non-gadget nodes) that \(ALG\) takes into the cover $C_H$ of \(H^2\) that it produces. Our first claim is that $C$ is a cover of $G$. This follows since for every edge $e=\{u,w\}$ in $G$, it holds that $\{u,w\}$ is in $H^2$, and thus at least one of its endpoints has to be in $C_H$. Further, we claim that $C$ cannot be too large compared with $OPT$. To see this, note that any cover of $H^2$ must take at least 2 nodes of every gadget, and hence $C \leq C_H - 2m$. Moreover, the size of any optimal solution $OPT_H$ for $H^2$ is exactly $OPT_H = OPT + 2m$, because any smaller solution either does not take 2 nodes of every gadget or induces a cover for $G$ that is smaller than $OPT$, either of which is impossible. 

Hence, we have 
$$C \leq C_H - 2m \leq (1+\eps)OPT_H -2m = (1+\eps)(OPT+2m)-2m = OPT(1+\eps(1+2m/OPT)).$$
This means that the approximation factor we get is 
$$1+ \eps(1+2m/OPT)=1 +(\delta n^{\beta}/3m)(1+2m/n^{\gamma}) \leq 1+(\delta n^{\beta}/m)(m/n^\gamma) = 1+\delta(n^{\beta-\gamma}) \leq 1+\delta.$$

The time the simulation takes is \(O(m^\alpha/\eps)= O(m^\alpha \cdot m/(\delta n^\beta)  ) = O( n^{2+2\alpha-\beta+\rho})\).
Thus, the total number of rounds for the algorithm is $\tilde{O}(D+n^{2\beta} + n^{2+2\alpha-\beta+\rho})$. Since $\beta=(2(1+\alpha)+\rho)/3$, we get a number of rounds which is $\tilde{O}(D+n^{(4(1+\alpha)+2\rho)/3})$.
\end{proof}

In particular, Theorem~\ref{theorem:HtoG} tells us that going below $\alpha=1/2$ for small values of $\eps$ would yield a sub-quadratic algorithm for any constant approximation $(1+\delta)$ for $G$, which would answer a major open question in distributed MVC approximation.
\begin{corollary}
\label{corollary:HtoG}
Let  $\delta$ be a constant in $(0,1)$. If for every $\eps \in (0,1)$ there is a $(1+\eps)$-approximation algorithm for MVC of $G^2$ that completes in $o(n^{1/2}/\eps)$ rounds, then there is a $(1+\delta)$-approximation algorithm for MVC of $G$ that completes in $o(n^2)$ rounds.
\end{corollary}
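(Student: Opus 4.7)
Corollary \ref{corollary:HtoG} is essentially the quantitative instantiation of Theorem \ref{theorem:HtoG} at $\alpha = 1/2$, so the plan is to plug directly into that theorem. Fix the target $\delta \in (0,1)$ and set $\rho = \log(1/\delta)/\log n$ as in the statement of Theorem \ref{theorem:HtoG}. The hypothesis says that for every $\eps \in (0,1)$ there is a $(1+\eps)$-approximation algorithm for $G^2$-MVC running in $o(n^{1/2}/\eps)$ rounds; in particular, choosing any constant $\alpha \in (0, 1/2)$, the same algorithm runs in $O(n^{\alpha}/\eps)$ rounds for all $n$ large enough. Hence Theorem \ref{theorem:HtoG} applies with this $\alpha$, producing a $(1+\delta)$-approximation algorithm for $G$-MVC running in $\tilde{O}\!\left(D + n^{(4(1+\alpha)+2\rho)/3}\right)$ rounds.

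It then suffices to verify that the exponent $(4(1+\alpha)+2\rho)/3$ is bounded strictly below $2$ for large $n$. A direct rearrangement shows that $(4(1+\alpha)+2\rho)/3 < 2$ is equivalent to $2\alpha + \rho < 1$. Since $\delta$ is a positive constant, $\rho = \log(1/\delta)/\log n \to 0$ as $n \to \infty$. Therefore, for any fixed $\alpha < 1/2$ the inequality $2\alpha + \rho < 1$ holds for all $n$ sufficiently large. Writing $\alpha = 1/2 - \eta$ for a small constant $\eta > 0$, the exponent simplifies to $2 - (4\eta - 2\rho)/3$, which is $2 - c$ for some positive constant $c$ once $\rho < \eta$. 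Combined with $D \le n$, the total round complexity is $\tilde{O}(n^{2-c}) = o(n^2)$, which is what we want.

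The main conceptual point is the mild asymmetry between the corollary's hypothesis (an asymptotic $o(\cdot)$ bound with the exponent $1/2$) and Theorem \ref{theorem:HtoG}'s input (a worst-case $O(n^\alpha / \eps)$ bound with a specific constant $\alpha$). Because we have slack both in the choice of $\alpha < 1/2$ and in the vanishing correction $\rho \to 0$, this asymmetry is absorbed harmlessly: the exponent in the conclusion depends continuously on $\alpha$ and $\rho$, so any $\alpha$ strictly less than $1/2$ combined with constant $\delta$ suffices. There are no delicate inequalities to handle beyond the elementary rearrangement above, and no deeper obstacle than this bookkeeping.
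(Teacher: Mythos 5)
Your approach is the same as the paper's: the paper offers no separate proof of Corollary~\ref{corollary:HtoG} and treats it as an immediate instantiation of Theorem~\ref{theorem:HtoG} with $\alpha$ just below $1/2$, and your exponent arithmetic ($(4(1+\alpha)+2\rho)/3<2$ iff $2\alpha+\rho<1$, with $\rho\to 0$ for constant $\delta$) is correct. One step deserves a caveat, though: the claim that an $o(n^{1/2}/\eps)$-round algorithm ``runs in $O(n^{\alpha}/\eps)$ rounds for some constant $\alpha<1/2$'' is false as a literal implication --- for instance, $n^{1/2}/(\eps\log n)$ is $o(n^{1/2}/\eps)$ but is not $O(n^{\alpha}/\eps)$ for any constant $\alpha<1/2$. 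Under that reading of the hypothesis the black-box invocation of Theorem~\ref{theorem:HtoG} does not apply (and plugging $\alpha=1/2$ exactly only yields $\tilde{O}(n^2)$, not $o(n^2)$). To prove the corollary as literally stated one has to reopen the proof of Theorem~\ref{theorem:HtoG} and choose the case threshold adaptively: writing the hypothesis as $f(n)\,n^{1/2}/\eps$ with $f\to 0$ and setting $n^{\beta}=n\sqrt{f(n^2)}$ makes the parametrized branch cost $O(n^{2}f(n^2))=o(n^2)$ and the simulation branch cost $O(n^{2}\sqrt{f(n^2)})=o(n^2)$. This subtlety is inherited from the paper's own informal phrasing (``going below $\alpha=1/2$''), so your write-up matches the intended argument; just be aware that the constant-$\alpha$ shortcut needs either the stronger ``polynomially below $n^{1/2}$'' hypothesis or the adaptive choice of $\beta$ above.
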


Corollary~\ref{corollary:HtoG} says that there is still a gap between our $O(n/\eps)$-round algorithm of Section~\ref{sssec:UBunweightedVC} and an algorithm that would imply a non-trivial runtime for $G$. We mention again that the $1/\eps$ term in the complexity is unavoidable, due to a straightforward adaptation of the lower bound given in~\cite{BenBasatKS19}. 

\section{Distributed $G^2$-Minimum Dominating Set}
\label{sec:MDSupper}
\subsection{An $O(\log n)$-Approximation for $G^2$-MDS (Randomized)}\label{sec:MDSUpperBound}

\begin{theorem}
\label{theorem:MDSUpperBound}
  There is a randomized distributed \CONGEST\ model algorithm, which given an input graph \(G\) computes an \(O(\log \Delta)\)-approximate solution to the MDS problem on \(G^{2}\) in \(\text{poly}\log n\) rounds.
\end{theorem}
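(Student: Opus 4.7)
The plan is to simulate the $O(\log \Delta)$-approximation algorithm of Censor-Hillel and Dory~\cite{Censor-HillelD18} for MDS on $G$ directly on the graph $G^2$, using the original graph $G$ as the communication network. Recall that this algorithm is a distributed implementation of the classical greedy algorithm: it proceeds in $O(\log n \log \Delta)$ rounds, and within each round, every still-undominated vertex $v$ uses an (approximate) value $\tilde{d}(v)$ of the number of undominated vertices in its (closed) neighborhood in order to decide, via a local voting procedure, whether to join the dominating set. When we transplant this scheme to $G^2$, the role of a vertex's neighborhood is played by its $2$-hop neighborhood in $G$, and the single obstacle is that obtaining $\tilde{d}(v) = |N^2(v) \cap U|$, where $U$ is the current set of undominated vertices, is a non-trivial task under \CONGEST\ bandwidth.

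The first step of the plan is to design a randomized $2$-neighborhood size-estimation subroutine that runs in $O(\log n)$ rounds and returns, at every vertex $v$, a value $\tilde{d}(v)$ that is a $(1\pm 1/10)$-approximation to $|N^2(v) \cap U|$ with high probability. I would use the standard min-hash / Cohen-style distinct-element sketch: each $u \in U$ draws an independent uniform $h(u) \in [0,1]$ and each vertex $v$ computes
\[
m(v) \;=\; \min_{u \in N^2(v) \cap U} h(u)
\]
by two aggregation steps in $G$, namely each vertex first forwards to its neighbors the minimum hash seen in its $1$-hop closed neighborhood, and then takes a second minimum; this needs only $O(1)$ \CONGEST\ rounds per hash, each message being a single $O(\log n)$-bit value. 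Repeating with $O(\log n)$ independent hashes (in parallel, since they pipeline naturally) and averaging in the usual way yields the desired estimator with high probability. The same $2$-hop aggregation primitive is then used to propagate all other pieces of information the base algorithm needs, e.g.\ to inform a vertex that one of its $G^2$-neighbors has just joined the dominating set (so it should move out of $U$), and to let a vertex learn the maximum $\tilde{d}$-value over its own $2$-neighborhood.

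The second step is to plug these approximate values into the Censor-Hillel--Dory algorithm and argue that the approximation guarantee degrades only by a constant factor. The analysis of that algorithm is robust in the following sense: it only uses $\tilde{d}$ to group vertices into $O(\log \Delta)$ ``density classes'' and to argue that within a class the number of still-undominated vertices halves in expectation in a constant number of rounds. A $(1\pm 1/10)$ multiplicative error merges adjacent classes into at most twice as many classes and changes the halving constants by another constant factor, so the algorithm still terminates after $O(\log n \cdot \log \Delta)$ simulated rounds with an $O(\log \Delta)$-approximate dominating set of $G^2$, w.h.p. Multiplying by the $O(\log n)$ overhead per simulated round coming from the size-estimation sketch gives the claimed $\polylog n$ round bound.

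The main obstacle I expect is precisely formalizing step two: one has to verify that every place where the original algorithm uses the exact neighborhood size (both in the sampling probabilities of the voting procedure and in the potential-function argument that bounds the approximation factor) remains correct when the exact quantities are replaced by $(1\pm 1/10)$-multiplicative estimates. Everything else -- the sketch computation, the propagation of ``newly dominated'' flags, and the pipelining of $O(\log n)$ independent hash functions under \CONGEST -- is a straightforward $2$-hop aggregation that uses $O(\log n)$-bit messages and contributes only a $\polylog n$ overall overhead.
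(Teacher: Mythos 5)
Your proposal matches the paper's proof: both simulate the Censor-Hillel--Dory algorithm on $G^2$ over the network $G$, with the key ingredient being an $O(\log n)$-round randomized sketch that estimates $|N_2(v)\cap U|$ via two rounds of $1$-hop minimum aggregation over $O(\log n)$ independent random values, followed by the observation that the approximation analysis is robust to $(1\pm\eps)$ multiplicative estimates of the densities and vote counts. The only (immaterial) difference is that you use uniform min-hash values where the paper uses exponential random variables \`a la Mosk-Aoyama--Shah; these are equivalent cardinality estimators.
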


We simulate the algorithm proposed in~\cite{Censor-HillelD18} for approximating MDS in \(G\). The algorithm guarantees an \(O(\log \Delta)\) approximation in \(O(\log n \log \Delta)\) rounds\footnote{If one is careful with constants in the analysis of~\cite{Censor-HillelD18}, the approximation factor can be shown to be \(8 H_{k}\) where \(H_{k}\) is the \(k^{th}\) harmonic number and \(k \le \Delta^{2}\) is the maximum number of vertices that can be dominated by a single vertex.}. Their algorithm for \(G\) has the following steps in each round:
\begin{enumerate}
\item  Each vertex $v$ computes its rounded density $\rho_v$, where $\rho_{v}$ is the number of uncovered vertices that $v$ covers rounded up to the closest power of $2$. Vertex $v$ sends this value to its 2-hop neighbors in $G$. Here $C_{v}$ is the number of uncovered vertices that $v$ covers.
\item  Each vertex $v$ such that $\rho_v \geq \rho_u$ for each $u$ in its 2-neighborhood is a \textit{candidate}. Vertex $v$ informs its neighbors that it is a candidate.
\item  Each candidate $v$ chooses a random number $r_v \in \{1, \dots, n^4\}$ and sends it to its neighbors.
\item  Each uncovered vertex that is covered by at least one of the candidates, votes for the first candidate that covers it according to the order of the values $r_v$. If there is more than one candidate with the same minimum value, it votes for the one with the minimum ID.
\item  If $v$ receives at least $|C_v|/8$ votes from vertices it covers then it is added to the dominating set.
\item  All vertices that are covered output 0, and $v$ outputs 1 if and only if it was added to the dominating set in the previous step.
\end{enumerate}

We wish to simulate this algorithm on the graph \(G^{2}\) while the network is still \(G\). This poses some interesting congestion problems when \(v\) tries to estimate the number of uncovered \(2\)-hop neighbors of each vertex in \(G\) and the number of votes that it receives from its \(2\)-hop neighbors. The following lemma allows us to get this estimate quickly in a randomized manner.

\begin{lemma}\label[lemma]{lem:sampling-estimation}
  Let \(U \subseteq V\) be an arbitrary set of vertices. If each vertex knows whether or not it belongs to \(U\), it is possible to get an estimate \(\tilde{d_{v}}\) of the quantity \(d_{v} = |N_{2}(v) \cap U|\) for all vertices \(v\) such that with high probabilty
  \[d_{v}(1-\eps) \le \tilde{d_{v}} \le (1+\eps)d_{v}\]
  for a constant \(\eps \in (1,1/4)\), in \(O(\log n)\) rounds in the \CONGEST\ model.
\end{lemma}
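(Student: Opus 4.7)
The plan is to use multi-scale Bernoulli sampling combined with two-hop OR aggregation. Fix $L = \lceil \log_2 n \rceil + 1$ scales and $T = \Theta(\log n /\epsilon^2)$ trials per scale. For each pair $(i,t)$ with $i \in \{0,\dots,L\}$ and $t \in [T]$, every $u \in U$ independently draws a Bernoulli bit $X_u^{(i,t)}$ with success probability $p_i := 2^{-i}$; all other vertices set their bit to $0$. The goal is for every $v$ to compute the indicator $b_v^{(i,t)} := \mathbb{1}\bigl[\exists\, u \in N_2(v)\cap U \text{ with } X_u^{(i,t)}=1\bigr]$, since this equals $1$ with probability $q_v^{(i)} := 1-(1-p_i)^{d_v}$, from which $d_v$ can be recovered by inversion.

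I would compute all $b_v^{(i,t)}$ using two-hop OR propagation. Because OR is associative, each bit can be propagated in two rounds: every vertex forwards the OR of its own bit with those received from its neighbors. Collectively there are $LT = O(\log^2 n / \epsilon^2)$ bits to deliver across each edge, which does not fit in one \CONGEST{} message, so I would pipeline them: partition the $(i,t)$ pairs into $O(\log n/\epsilon^2)$ batches of $\Theta(\log n)$ bits each, and for each batch run the two-round OR aggregation. The total round count is $O(\log n/\epsilon^2) = O(\log n)$ for constant $\epsilon$, matching the statement.

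Next, each $v$ forms the empirical estimate $\hat q_v^{(i)} := \tfrac{1}{T}\sum_{t} b_v^{(i,t)}$ of $q_v^{(i)}$. By a Chernoff bound, $|\hat q_v^{(i)} - q_v^{(i)}| \le \epsilon/10$ with probability $1 - n^{-c}$ for a chosen constant $c$, and a union bound over the $n L$ pairs $(v,i)$ preserves this guarantee simultaneously. Observe that $q_v^{(i)}$ is monotone decreasing in $i$: $q_v^{(0)} = 1$ (when $d_v\ge 1$) and $q_v^{(L)}$ is $o(1/n)$. Since adjacent scales satisfy $q_v^{(i+1)} \ge q_v^{(i)}/2 - O((p_i d_v)^2)$, there is some scale $i_v^\star$ at which $q_v^{(i_v^\star)} \in [1/8, 3/4]$. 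I would have $v$ pick the smallest $i$ with $\hat q_v^{(i)} \le 1/2$; the concentration guarantee forces this choice to land in the ``good'' range where $q_v^{(i_v^\star)}$ is bounded away from $0$ and $1$. Then $v$ outputs
\[
\tilde d_v \;:=\; \frac{\ln(1-\hat q_v^{(i_v^\star)})}{\ln(1-p_{i_v^\star})}.
\]
A first-order expansion of $\ln(1-\cdot)$, valid because the argument is bounded away from $1$, converts the $\epsilon/10$ additive error in $\hat q_v^{(i_v^\star)}$ into a multiplicative $(1\pm\epsilon)$ error in $\tilde d_v$, as required.

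The main obstacle is bandwidth accounting rather than concentration: one must verify that the pipelined two-hop OR across $O(\log^2 n/\epsilon^2)$ bits per edge truly finishes in $O(\log n)$ rounds, and that the batching does not interact badly with the two-hop depth. This is resolved by noting that OR-aggregation for disjoint batches is independent, so batches can be interleaved round-by-round; each batch still needs only two ``wavefront'' rounds (send, then send the partial OR received), and pipelining these waves across $O(\log n/\epsilon^2)$ batches gives the claimed complexity. A secondary subtlety is ensuring that the choice of $i_v^\star$ is well-defined for every $v$: if $d_v = 0$, every $\hat q_v^{(i)}$ equals $0$ and we set $\tilde d_v = 0$; if $d_v$ is very large, $\hat q_v^{(0)}$ is close to $1$ and the smallest-$i$ rule still lands at a useful scale. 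Neither case breaks the analysis.
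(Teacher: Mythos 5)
Your proof is correct, but it takes a genuinely different route from the paper. The paper implements the exponential-minimum sketch of Mosk-Aoyama and Shah: each $u\in U$ draws $r=O(\log n)$ exponential variables with mean $1$, a two-round min-aggregation computes the coordinate-wise minimum over $N_2(v)$, and $d_v$ is recovered as the reciprocal of the sample mean, with concentration via Cram\'er's theorem; it must additionally argue that $O(\log n)$ bits of precision per transmitted real suffice and that small degrees can be handled by broadcasting edges directly. You instead use multi-scale Bernoulli sampling with two-hop OR-aggregation (a Flajolet--Martin--style distinct-elements estimator): the key common ingredient is that both min and OR are idempotent, so the 2-hop aggregate is computable in two pipelined rounds per trial without congestion. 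Your version buys you single-bit messages (no precision analysis needed) and a scale-selection rule that handles small $d_v$ automatically, at the cost of $O(\log^2 n/\eps^2)$ total bits per edge, which pipelines to the same $O(\log n)$ round bound for constant $\eps$; the paper's version uses fewer random variables but pays the same $O(\log^2 n)$ bits because each needs $O(\log n)$-bit precision. One detail you should tighten: after restricting $q_v^{(i_v^\star)}$ to $[1/8,3/4]$, an additive error of $\eps/10$ in $\hat q$ translates through $\ln(1-\cdot)$ into a multiplicative error of roughly $\tfrac{5\,(\eps/10)}{\ln(8/7)}\approx 3.7\eps$, so the Chernoff accuracy parameter must be set to $c\,\eps$ for a smaller constant $c$ (e.g.\ $\eps/40$); this only changes $T$ by a constant factor and does not affect the round complexity. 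Your heuristic inequality $q_v^{(i+1)}\ge q_v^{(i)}/2-O((p_i d_v)^2)$ is also better replaced by the direct argument that $\hat q_v^{(i_v^\star-1)}>1/2$ forces $p_{i_v^\star} d_v=\Omega(1)$ and hence $q_v^{(i_v^\star)}\ge 1-e^{-\Omega(1)}$, but the structure of your scale-selection argument is sound.
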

\begin{proof}

  We use a simplified version of the estimation algorithm provided in~\cite{Mosk-AoyamaSPODC2006}. In order to estimate \(\sum_{i=1}^{k}{y_{i}}\), the algorithm generates \(r\) independent random \(W_{1}^{i}, \dots, W_{r}^{i}\) such that each \(W_{j}^{i}\) is distributed according to the exponential distribution with mean \(1/y_{i}\) for each \(1 \le i \le k\).

  The algorithm exploits the following property of exponential random variables: for each \(1 \le j \le r\) the random variable \(\tilde{W_{j}} = \min_{1 \le i \le k}{W_{j}^{i}}\) is distributed exponentially with mean \(1/\tilde{y}\) where \(\tilde{y} = \sum_{i=1}^{k}{y_{i}}\). Therefore, the quantity we want to estimate is the reciprocal of the expectation of \(\tilde{W_{j}}\). And the \(r\) independent samples give us concentration around this expectation as shown in the following lemma which is a consequence of Cram\'er's Theorem (\cite{DemboZ2010}, pp. 30, 35).

  \begin{lemma}\label{lem:Mosk-Aoyama-Shah-Estimation}
    Let \(\tilde{W_{1}}, \tilde{W_{2}}, \dots, \tilde{W_{r}}\) be iid exponential random variables with mean \(\lambda\). Let \(\tilde{W} = \frac{1}{r}\sum_{j=1}^{r}{\tilde{W_{j}}}\). Then for any \(\eps \in (0, 1/2)\):
    \(Pr\left( \left| \tilde{W} - \lambda \right| \ge \eps \lambda\right) \le exp(-\eps^{2}r/3)\)
  \end{lemma}

  Therefore, if we use \(r = \log n\), then we get that with high probability, \((1-\eps) 1/\tilde{y} \le \tilde{W} \le (1+\eps)1/\tilde{y}\) which also implies \((1-\eps) \tilde{y} \le 1/\tilde{W} \le (1+2\eps)\tilde{y}\) for \(\eps \in (0,1/2)\).

  Now we have every vertex in \(v \in U\) hold \(y_{v} = 1\). Every vertex \(v \in U\) generates \(r = O(\log n)\) iid exponential random variables \(W_{1}^{v}, \dots, W_{r}^{v}\) with mean \(1\), and broadcasts each random variable to its neighbors in \(r\) rounds of \CONGEST. Once a vertex \(v \in V\) receives \(W_{j}^{u}\) for all neighbors \(u \in N_{1}(v)\), it sends \(\overline{W}_{j}^{v}\min_{u \in N_{1}(v)}{W_{j}^{u}}\) to all its neighbors. Now once a vertex \(v \in V\), receives \(\overline{W}_{j}^{u}\) for all neighbors \(u \in N_{1}(v)\), it calculates \(\tilde{W_{j}} = \min_{u \in N_{1}(v)}{\overline{W}_{j}^{u}} = \min_{u \in N_{2}(v)}{W_{j}^{u}}\). The estimate that each \(v \in V\) outputs is \(\tilde{d_{v}} = \frac{r}{\sum_{j=1}^{r}{\tilde{W_{j}}}}\) where \(1/\tilde{d_{v}}\) is distributed exponentially with mean \(\lambda = 1/d_{v}\). Therefore, the statement of the lemma follows from~\Cref{lem:Mosk-Aoyama-Shah-Estimation}.

  For each vertex \(v \in V\) that computes some estimate \(\tilde{d_{v}}\), the value \(d_{v}\) that \(v\) is estimating lies in the set \(\{1, \dots, \Delta^{2}\}\). We can assume that \(d_{v} \ge c \log n\) for any arbitrarily large constant \(c\), by having vertices with degree \(< c \log n\) broadcast all their edges in \(O(\log n)\) rounds. Therefore, \(O(\log n)\) bits of precision suffice to get the correct estimate with high probability since rounding will only affect the final estimate by an additive \(r = \log n\) factor.
\end{proof}

Using~\Cref{lem:sampling-estimation} with \(U\) being the set of uncovered vertices, each vertex \(v\) can calculate its rounded density \(\tilde{\rho_{v}}\) of step 1 in \(O(\log n)\) rounds. In step 2, each vertex just needs the maximum rounded density in its \(4\)-hop neighborhood in \(G\) in order to mark itself as a candidate. For steps 3 and 4, it suffices that each uncovered vertex \(u\) know the ID of the vertex having minimum rank in their \(2\)-hop neighborhood (where rank ties are broken by smallest ID) in order to know which candidate \(u\) is voting for. For step 5, we wish to estimate the number of votes and \(|C_{v}|\)\footnote{The algorithm in~\cite{Censor-HillelD18} uses the exact value of \(|C_{v}|\). But it suffices to use a good estimate, since it doesn't affect the approximation factor and only changes the running time by a constant.} for each candidate \(v\). Estimating \(|C_v|\) can be done using~\Cref{lem:sampling-estimation} the same way we estimated \(\rho_{v}\). Estimating the number of votes is a bit different. Note that the candidates form a partition of the uncovered vertices, therefore we can apply \Cref{lem:sampling-estimation} for each candidate in parallel which allows the candidates to estimate the number of votes that they have received. Note that when performing this estimation, a vertex might receive vote estimates for many different candidates that it needs to forward and it will send the estimate only to the candidate it corresponds to instead of broadcasting it like in the proof of~\Cref{lem:sampling-estimation}. This allows us to simulate step 5 in \(O(\log n)\) rounds. For step 6, it suffices that each uncovered node \(u\) know the smallest ID vertex in its \(2\)-hop neighborhood that joins the dominating set.

\section{Distributed $G^2$-Minimum Dominating Set (Lower Bounds)}
\label{sec:MDSLB}
In this section we show \(\tilde{\Omega}(n^2)\) lower bounds in the \CONGEST~model for solving the (unweighted) $G^2$-MDS problem. In \Cref{app:ExactMDSLB} we show a lower bound for solving the problem exactly and  in \Cref{app:ApproximateMDSLB} and \Cref{app:ApproximateMDSLB-unweighted} we show lower bounds for computing constant approximations.

\subsection{Quadratic Lower bound for Exact $G^2$-MDS}\label{app:ExactMDSLB}
In this section we will show an \(\tilde{\Omega}(n^2)\) lower bound for solving the (unweighted) $G^2$-MDS problem exactly in  \CONGEST, that is, we formally prove the following theorem. 
\begin{theorem}
  \label[theorem]{thm:exact-mds-lower-bound-main-theorem}
	Any \CONGEST\ algorithm requires \(\tilde{\Omega}(n^{2})\) rounds for solving (unweighted) $G^2$-MDS exactly.
\end{theorem}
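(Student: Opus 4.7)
The plan is to mirror the strategy of \Cref{thm:MVC-lower-bound-main-theorem}: start from the known $\tilde\Omega(n^2)$ lower-bound graph family $G_{x,y}$ for exact $G$-MDS in \CONGEST\ (referenced in the introduction as being due to the line of work culminating in \cite{BachrachCDELP19}, and itself a Censor-Hillel--Khoury--Paz-style construction equipped with partition $V=V_A\dot\cup V_B$ and cut of size $O(\log k)$), and convert it into a family $H_{x,y}$ for exact $G^2$-MDS by replacing edges with carefully chosen \emph{path gadgets}. The overall scheme in the Alice--Bob framework of \Cref{subsec:aloceBob} is then: $f=DISJ_{k^2}$, the predicate $P_H$ is ``$H_{x,y}^2$ has a dominating set of size $W+\Psi$'' for some explicit offset $\Psi$ depending only on the number of gadgets, and the partition of $H_{x,y}$ extends the partition of $G_{x,y}$ by assigning each gadget to whichever side owns its endpoints (with the shared gadgets for the cross cliques assigned deterministically). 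As in \Cref{sec:orgcb69afc}, the cut remains $O(\log k)$, so $\tilde\Omega(k^2)$ set-disjointness bits force $\tilde\Omega(n^2)$ rounds since $|V(H_{x,y})|=O(k\log k)$.

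The first step is the gadget design. For each edge $e=\{u,v\}$ of $G_{x,y}$ incident to a bit-gadget vertex (there are only $O(k\log k)$ such edges), I introduce a private dangling gadget that simultaneously (i) realises $\{u,v\}$ as a $2$-hop connection in $H$ so that it appears in $H^2$, and (ii) forces, in every \emph{minimum} $G^2$-dominating set, a fixed and known number of its vertices. A natural candidate is a path $p_e^1,p_e^2,p_e^3,p_e^4$ with $p_e^1$ adjacent to both $u$ and $v$; in $H^2$ the leaf $p_e^4$ has a small, gadget-local closed neighborhood, which by a rearrangement lemma (analogous to \Cref{lem:dangling-rearrangement}) forces an internal vertex of the gadget into any optimal solution and rules out any ``savings'' obtained by dominating across gadgets. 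For the $\Theta(k^2)$ potential edges between the cliques $A_1,A_2$ and $B_1,B_2$ (which carry the set-disjointness input and cannot be padded individually), I use the gadget-sharing trick of \Cref{sec:orgcb69afc}: attach a single dangling path gadget $Q_a^i$ to each $a_1^i\in A_1$ and then, for every $(i,j)$ with $x_{ij}=0$, add the edge $\{Q_a^i[1],a_2^j\}$; symmetrically for $B_1,B_2$. This keeps $|V(H_{x,y})|=O(k\log k)$ and preserves that only edges in $V_A\times V_A$ depend on $x$ and only edges in $V_B\times V_B$ depend on $y$.

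The second step is the correspondence lemma, which is the heart of the argument and the main obstacle. I need to show that $H_{x,y}^2$ has a dominating set of size exactly $W+\Psi$ if and only if $G_{x,y}$ has a dominating set of size $W$, where $\Psi$ is the total ``gadget tax'' (one mandatory vertex per private gadget plus one per shared gadget). The forward direction uses a rearrangement argument: given any minimum $G^2$-dominating set $D_H$, move all gadget-local choices to a canonical internal vertex of each gadget (justified because the leaf of each dangling path has only a constant-size closed neighborhood in $H^2$, all of it inside the gadget), and then argue the non-gadget vertices of $D_H$ dominate $V(G_{x,y})$ in $G_{x,y}$ itself. Here I must be careful that a gadget vertex cannot ``accidentally'' dominate a row vertex in $H^2$ in a way that lets the non-gadget part miss vertices of $G_{x,y}$; the rearrangement swaps any such gadget vertex for an equivalent non-gadget dominator, possibly replacing the canonical internal choice with a different internal vertex without changing the total size. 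The reverse direction is direct: take an optimal DS of $G_{x,y}$ and add the canonical internal vertex of each (private or shared) gadget; I verify this dominates all of $H_{x,y}^2$, including the leaves and the vertices $Q_a^i[1],Q_b^i[1]$ (dominated by their attached row vertex in $H^2$).

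The third step is to invoke \Cref{thm: general lb framework} with the extended partition: the cut of $H_{x,y}$ equals the cut of $G_{x,y}$ plus the $O(1)$ gadget edges crossing for each bit-gadget edge that used to cross, which is still $O(\log k)$. Combined with $CC(DISJ_{k^2})=\Theta(k^2)$ and $n=|V(H_{x,y})|=\Theta(k\log k)$, this yields the claimed $\tilde\Omega(n^2)$ round lower bound. The principal difficulty I anticipate is the correspondence lemma: unlike MVC, where the covering constraint is edge-local, domination is vertex-local and an unchecked gadget vertex can dominate many original vertices in $H^2$, so the gadget must be tuned (in the number of internal vertices and in which of them connect to the original endpoints) so that no such ``free'' domination is possible and the additive offset $\Psi$ is well defined. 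If the simplest $4$-vertex dangling path does not give a clean bijection, the same proof plan goes through with a slightly larger fixed-size gadget (e.g., attaching two pendants to an internal vertex) chosen precisely to pin down the optimal intra-gadget contribution.
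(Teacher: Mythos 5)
Your high-level plan coincides with the paper's: start from the Bachrach et al.\ lower-bound family $G_{x,y}$ for exact $G$-MDS, replace edges by dangling path gadgets, share gadgets across the $\Theta(k^2)$ input-dependent row--row edges to keep $|V(H_{x,y})|=O(k\log k)$, prove a correspondence lemma with an additive gadget offset, and invoke \Cref{thm: general lb framework}. However, the specific sharing scheme you propose — a single gadget $Q_a^i$ hanging off $a_1^i$ with input-dependent edges $\{Q_a^i[1],a_2^j\}$ going \emph{directly} to the opposite row vertex — is the MVC-style scheme of \Cref{sec:orgcb69afc}, and it breaks for domination. With your wiring, any two vertices $a_2^j,a_2^{j'}$ that share a neighbor $Q_a^i[1]$ become adjacent in $H^2_{x,y}$, a relation that does not exist in $G_{x,y}$ and that depends on $x$. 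Concretely, if $x_{ij}=0$ but there exist $i',j'$ with $x_{ij'}=x_{i'j'}=x_{i'j}=1$, then the single vertex $a_2^{j'}$ dominates both $a_1^i$ (via $Q_a^i[1]$) and $a_2^j$ (via $Q_a^{i'}[1]$) in $H^2_{x,y}$. Since the Bachrach argument hinges on the fact that the two leftover row vertices $a_1^i,a_2^j$ can be dominated by one vertex iff $x_{ij}=1$, this spurious transitive domination can produce a dominating set of size $W+\Psi$ in $H^2_{x,y}$ even when $G_{x,y}$ has no dominating set of size $W$, killing the ``only if'' direction of your correspondence lemma. You correctly flag that domination is vertex-local and that gadget vertices may dominate too much, but your proposed remedy (enlarging the private gadget or adding pendants) tunes the wrong knob: the failure is in the topology of the \emph{sharing}, not in the internal structure of a single gadget.

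The paper's fix is to attach a shared path gadget to \emph{every} row vertex in all four rows and to route each input-dependent edge between the two gadget heads, i.e.\ $\{A_1^i[1],A_2^j[1]\}$ rather than $\{A_1^i[1],a_2^j\}$. Then any two distinct row vertices are at distance at least $3$ in $H_{x,y}$, so no row vertex dominates another in $H^2_{x,y}$; instead the gadget head $A_1^i[1]$ dominates exactly the same $A_2$-row vertices in $H^2_{x,y}$ that $a_1^i$ dominates in $G_{x,y}$, while $a_1^i$ retains only its bit-gadget domination. The correspondence lemma (\Cref{lem:MDSsizesRelate}) then requires an exchange argument you do not have: one shows a minimum dominating set of $H^2_{x,y}$ never contains both $a_1^i$ and $A_1^i[1]$ (using the fact, inherited from \cite{BachrachCDELP19}, that bit-gadget vertices are dominated by bit-gadget vertices), and translates between the two graphs by swapping each row vertex with its gadget head. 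Your rearrangement step that only canonicalizes the interior of each gadget is not sufficient to establish the bijection of optimal sizes.
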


We now give an outline for the proof of \Cref{thm:exact-mds-lower-bound-main-theorem}. The formal proof follows at the end of the section.

\smallskip

\textbf{Proof Outline of \Cref{thm:exact-mds-lower-bound-main-theorem}: }
Bachrach et al.~\cite{BachrachCDELP19} provided a family of lower bound graphs \(G_{x,y}\) that shows that solving exact $G$-MDS needs near quadratic time in the \CONGEST\ model. To prove \Cref{thm:exact-mds-lower-bound-main-theorem} we construct a graph family $H_{x,y}$ such that the size of the MDS in \(G_{x,y}\) is closely related to the size of an exact MDS in \(H^2_{x,y}\) (cf. \Cref{lem:MDSsizesRelate}). Then, one can solve $G$-MDS on $G_{x,y}$ via creating the graph $H_{x,y}$ and then solving MDS on $H^2_{x,y}$. Thus a lower bound for $G$-MDS translates into a lower bound for the $G^2$-MDS problem. The main difficulty is in having asymptotically the same number of vertices in $H$ graphs while keeping the graph $H$ simulatable in the communication network $G$. If $H$ has drastically more vertices that $G$ the lower bounds results for $G^2$-MDS would be very far from being quadratic.

We start with the lower bound graph family used by~\cite{BachrachCDELP19} to show that
$G$-MDS requires $\tilde{\Omega}(n^2)$ rounds to be solved exactly.

\begin{figure}
\begin{center}
\includegraphics[width=.9\linewidth]{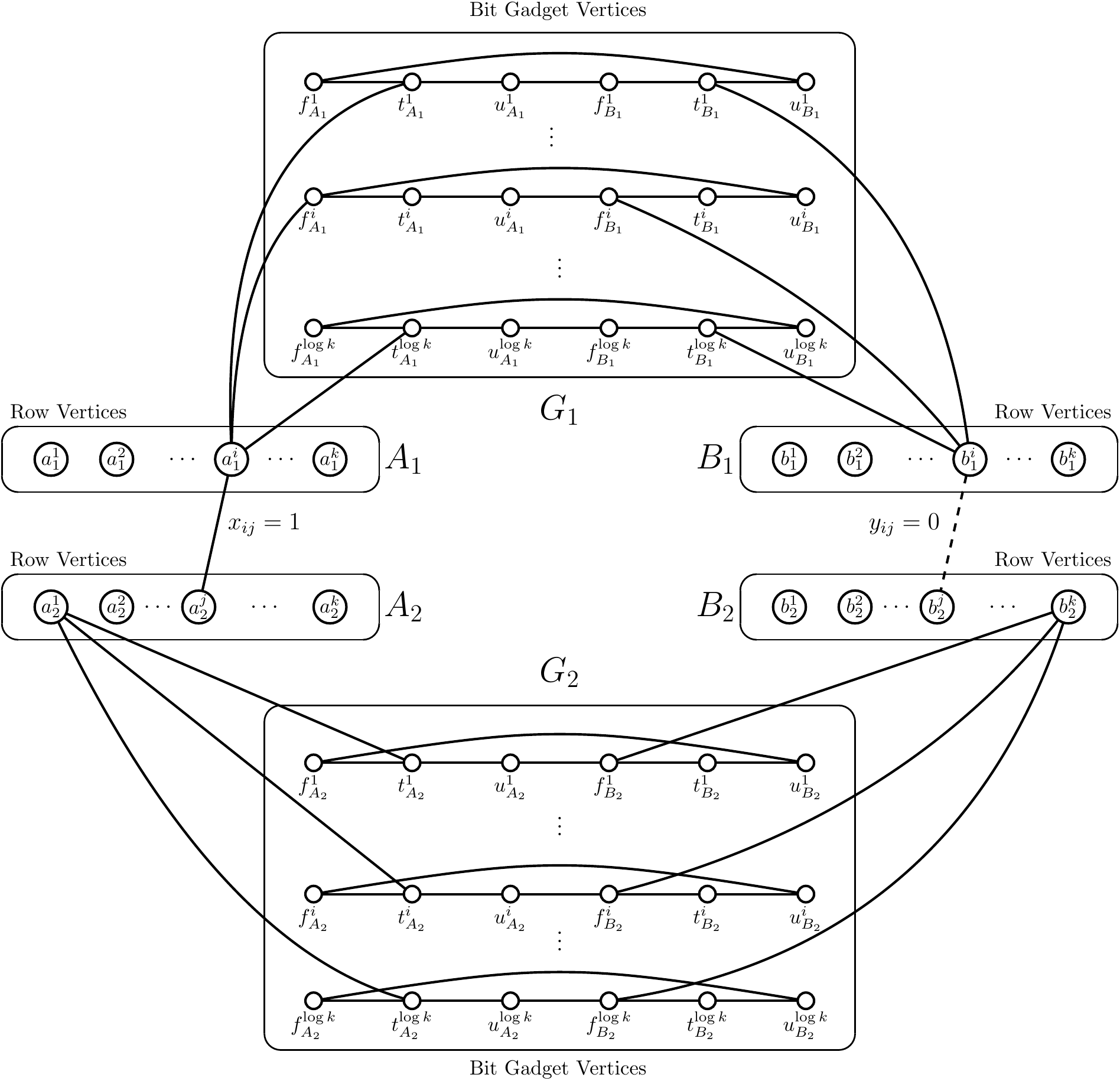}
\end{center}
\caption{\label[figure]{fig:Bachrach-MDS-lb-graph}This figure shows the lower bound graph \(G_{x,y}\) that appears in \cite{BachrachCDELP19} used to show a quadratic lower bound for computing exact MDS in the \CONGEST\ model. We use this graph as the basis for our MDS lower bounds}
\end{figure}

\textbf{$G$-MDS lower bound graph family \(G_{x,y}\) from~\cite{BachrachCDELP19}: }
For any $k$ that is a power of 2 and for each pair of bit vectors $x, y \in \{0, 1\}^{k^2}$, there is a graph, denoted $G_{x, y}$, in this family.
See \Cref{fig:Bachrach-MDS-lb-graph} for an illustration of $G_{x,y}$.
This lower bound graph has four sets of row vertices \(A_{1}, B_{1}, A_{2}, B_{2}\) each of which contain \(k\) vertices.
Moreover there are two sets of \textit{bit gadgets}, each set containing \(\log_{2} k\) bit gadgets, one set for \(A_{1}, B_{1}\) and the other set for \(A_{2}, B_{2}\).
The \(i^{th}\) bit gadget for \(A_{1}, B_{1}\) is a \(6\)-cycle with vertices \(f_{A_{1}}^{i}, t_{A_{1}}^{i}, u_{A_{1}}^{i}, f_{B_{1}}^{i}, t_{B_{1}}^{i}, u_{B_{1}}^{i}\).
The vertices in \(a_{1}^{i} \in A_{1}\) are connected to the bit gadget vertices \(f_{A_{1}}^j, t_{A_{1}}^j\) depending on the binary representation of \(i-1\).
Specifically, $a_1^i$ is connected to the complement of the binary representation of $i-1$.
For example, the vertex \(a_{1}^{1}\) is connected to all the \(t_{A_{1}}\) vertices. The connections for other row vertices are similar.
All of these edges are fixed, i.e., independent of $x$ and $y$. Additionally, $x$ determines edges between $A_1$ and $A_2$, whereas $y$ determines edges between
$B_1$ and $B_2$. Specifically, index the $k^2$ bits in $x$ as $x_{i, j}$, $1 \le i, j \le k$. Connect vertex $i \in A_1$ and vertex $j \in A_2$ iff $x_{i,j} = 1$.
The edges between $B_1$ and $B_2$ are similarly determined by the bit vector $y$.
One can check that $G_{x, y}$ has $4k + 12\log_2 k$ vertices, $4k\cdot \log_2 k$ fixed edges, and $O(k^2)$ variable edges (i.e., edges determined
by $x$ and $y$).

\cite{BachrachCDELP19} now defines a vertex partition $(V_A, V_B)$ of $G_{x, y}$, where $V_A = A_1 \cup A_2 \cup \{t_{A_1}^i, f_{A_1}^i, u_{A_1}^i \mid 1 \le i \le \log_2 k\} \cup \{t_{A_2}^i, f_{A_2}^i, u_{A_2}^i \mid 1 \le i \le \log_2 k\}$ and $V_B$ is the set of remaining vertices.
Basically, the vertices in left half of \Cref{fig:Bachrach-MDS-lb-graph} are assigned to $V_A$ and those in the right half are assigned to $V_B$.
Now consider two players Alice and Bob and suppose $V_A$ (and incident edges) are provided to Alice and $V_B$ and incident edges are provided
to Bob. Bachrach et al.~\cite{BachrachCDELP19} show that the construction of $G_{x, y}$ is such that $G_{x, y}$ has a dominating set of size at most $4\log_2 k + 2$ iff
 $DISJ_{k^{2}}(x,y) = \mathrm{false}$ for the bit vectors $x$ and $y$. Since the two-party communication complexity of set disjointness for bit vectors of size $k^2$ is
$\Omega(k^2)$, Alice and Bob need to communicate $\Omega(k^2)$ bits to determine if $G_{x, y}$ has a dominating set of size at most $4\log_2 k + 2$.
The number of edges in the cut between Alice and Bob is $O(\log k)$, implying that if our goal was to determine if $G_{x, y}$ has a
dominating set of size at most $4\log_2 k + 2$ in the \CONGEST\ model, then $\Omega(k^2)$ bits would have to flow over $O(\log k)$ edges,
leading to a $\tilde{\Omega}(k^2)$ lower bound on the number of rounds.

We introduce a \emph{dangling path gadget} which we insert into edges in the graph $G_{x,y}$ to obtain a
graph $H_{x,y}$. The goal is to show that one can solve MDS on $G_{x,y}$ by solving $G^2$-MDS on $H_{x,y}$.
Note that the motivation for introducing a dangling path gadget into edge $e$ of $G_{x, y}$ is to ensure that
$H^2_{x, y}$ has all the edges of $G_{x, y}$ and we can compute a minimum dominating set of $G_{x, y}$ by computing
a minimum dominating set of $H^{2}_{x, y}$ and exchanging/removing vertices that
cover the gadgets.
Here is a more precise description of the dangling path gadget.

\textbf{The dangling path gadget \(DP_{e}\):} We propose adding the following dangling path gadget \(DP_{e}\) replacing each edge \(e = (u, v)\) in \(G_{x,y}\).
The gadget has \(5\) vertices which we denote by \(DP_e[1]\), \(DP_e[2]\), \(DP_e[3]\), \(DP_e[4]\), \(DP_e[5]\).
Then edge $e$ is deleted, the vertex \(DP_e[1]\) is connected to \(u\) and \(v\), and
there is a path connecting \(DP_e[1], DP_e[2], DP_e[3], DP_e[4], DP_e[5]\).
See Figure~\ref{fig:Exact-MDS-dangling-path-gadgets} (left) for an illustration.
\begin{figure}
\begin{center}
\includegraphics[width=.83\linewidth]{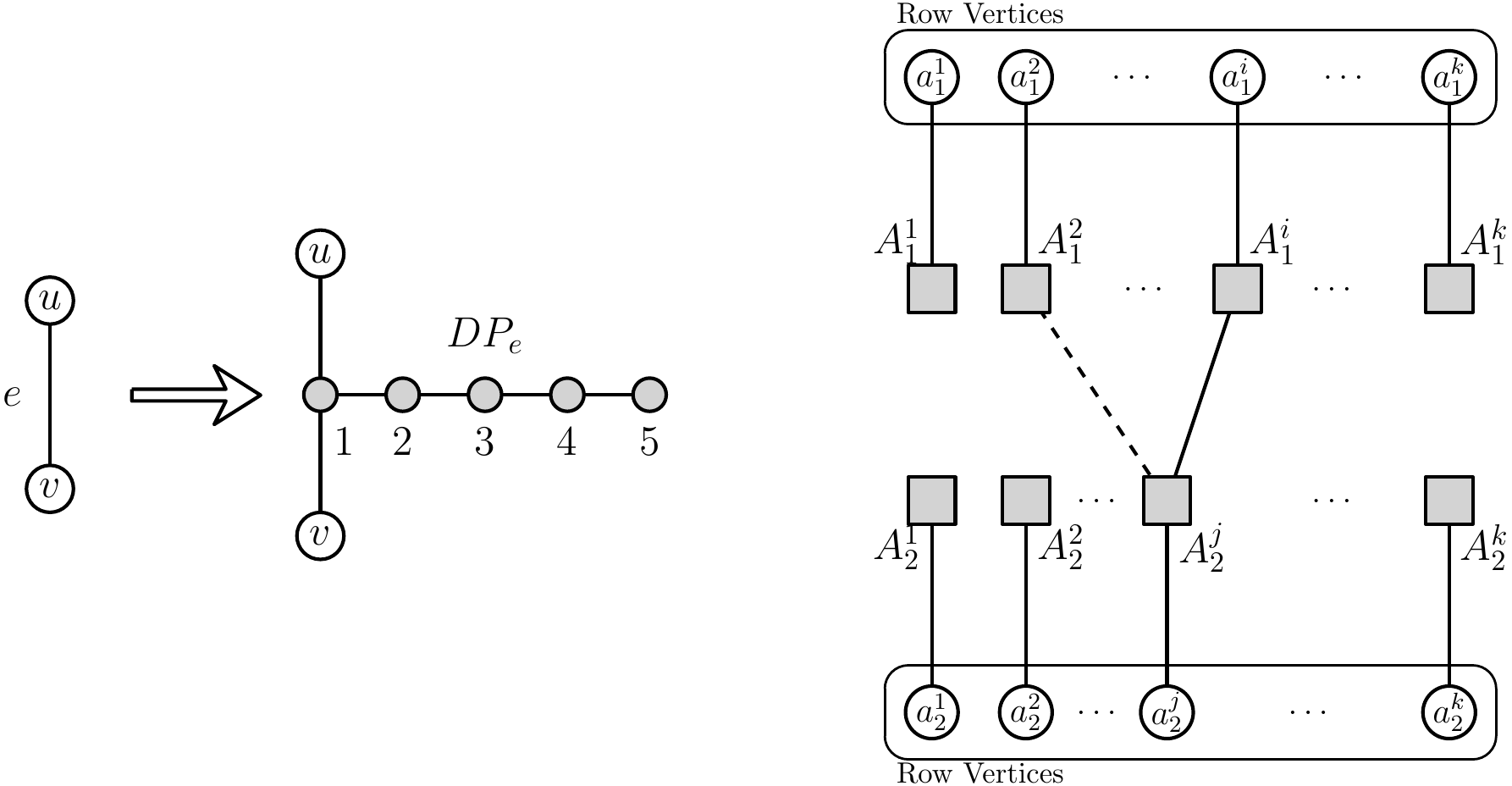}
\end{center}
\caption{\label[figure]{fig:Exact-MDS-dangling-path-gadgets}This figure shows how an edge \(e\) is replaced by a dangling path gadget \(DP_{e}\) on the left, and on the right it shows how the row vertices in \(A_{1}\) and \(A_{2}\) have shared path gadgets attached. We show two examples of how we add edges between \(A_{1}^{i}[1]\) and \(A_{2}^{j}[1]\) depending on whether the edge \(\{a_{1}^{i}, a_{2}^{j}\}\) exists in \(G_{x,y}\) or not. We only show some edges of \(H_{x,y}\) for clarity.}
\end{figure}

There is one main challenge posed by this approach, which we now describe along with a description of how we get around it.
\begin{quote}
\textbf{Challenge 1:}
Introducing a dangling path gadget into each edge of the graph increases the number of vertices of the graph quite significantly.
\end{quote}

\noindent
If $G_{x, y}$ is a graph with $n$ vertices and $m$ edges, then $H_{x, y}$ has $\Omega(m)$ vertices which could be $\Omega(n^2)$ vertices in the worst case.
This ``blow up'' in the number of vertices means
the $\tilde{\Omega}(n^2)$ lower bound for $G$-MDS would only translate into a $\tilde{\Omega}(n)$ lower bound for $G^2$-MDS.
To deal with this challenge, we introduce the idea of \textit{sharing} dangling path gadgets. Informally speaking, this simply means that instead of each edge having a separate dangling path gadget, a lot of the edges in $G_{x, y}$ will share dangling path gadgets. A precise version of the construction of a lower bound graph family, that uses this idea, is described below.

\textbf{$G^{2}$-MDS lower bound graph family $H_{x,y}$:} We first replace each edge having at least one bit gadget vertex as
an end point, by a \(5\)-vertex dangling path gadget.
Note that there are $O(k \log k)$ edges incident on bit gadget vertices and therefore we have not introduced too many path gadget vertices.
The remaining edges are those between the row vertices \(A_{1}, B_{1}, A_{2}, B_{2}\).
These can be $O(k^2)$ in number and so we have to be careful in introducing path gadget vertices.
For each row vertex \(a_{1}^{i} \in A_{1}\), we add a \textit{shared path gadget} \(A_{1}^{i}\).
The gadget is similar to the dangling path gadget in that it has \(5\) vertices
\(A_{1}^{i}[1], A_{1}^{i}[2], A_{1}^{i}[3], A_{1}^{i}[4], A_{1}^{i}[5]\) connected to form a path.
The vertex \(A_{1}^{i}[1]\) is connected to \(a_{1}^{i}\). We add similar shared path gadgets
\(A_{2}^{i}, B_{1}^{i}, B_{2}^{i}\) for the \(i^{th}\) row vertex in \(A_{2}, B_{1}\), and \(B_{2}\) respectively.
For each edge between \(a_{1}^{i}\) and \(a_{2}^{j}\) in \(G_{x,y}\), we add a corresponding edge
between \(A_{1}^{i}[1]\) and \(A_{2}^{j}[1]\) in \(H_{x,y}\). Similarly for each edge between \(b_{1}^{i}\) and \(b_{2}^{j}\) in \(G_{x,y}\), we
add a corresponding edge between \(B_{1}^{i}[1]\) and \(B_{2}^{j}[1]\) in \(H_{x,y}\). See~\Cref{fig:Exact-MDS-dangling-path-gadgets} for an illustration.
Note that this sharing of path gadgets results in only $O(k)$ additional path gadget vertices.

We now show, in a sequence of three lemmas that any MDS on $H^2_{x,y}$ can be put into a normal form (Lemmas \ref{lem:middle-dangling-gadget-in-mds} and \ref{lem:no-other-dangling-gadget-in-mds}). Afterwards we use this normal form to show how the size of an MDS of $G_{x,y}$ is related to the size of an MDS on $H^2_{x,y}$ (\Cref{lem:MDSsizesRelate}).

\begin{lemma}
\label[lemma]{lem:middle-dangling-gadget-in-mds}
Any MDS of $H^2_{x,y}$ can be transformed into an equal size MDS such that
\begin{enumerate}
\item the vertex \(DP_{e}[3]\) of each dangling path gadget \(DP_{e}\) is in the MDS of \(H^2_{x,y}\),
\item \(S[3]\) of the shared path gadgets \(S \in \{A_{1}^{i}, A_{2}^{i}, B_{1}^{i}, B_{2}^{i}\}_{i=1}^{k}\) belongs to the MDS of \(H^{2}_{x,y}\).
\end{enumerate}
\end{lemma}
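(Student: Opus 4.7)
The plan is a simple exchange argument applied independently to every dangling and every shared path gadget. I will describe the operation on a single gadget $S$ (of either type); since distinct gadgets use disjoint vertex sets at indices $3,4,5$, the global transformation is the composition of these independent operations.

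First I would write down the closed $H^2$-neighborhoods of $S[3], S[4], S[5]$. In $H$ these five vertices form an induced path with $S[5]$ a leaf of degree $1$, and $S[3]$ has no $H$-edges to anything outside $\{S[2],S[4]\}$; in particular the many neighbors of $S[1]$ that exist outside the gadget (the endpoints $u,v$ of the original edge in the dangling case, or the row vertex plus the partner $S'[1]$'s in the shared case) are already at distance $3$ from $S[3]$ in $H$. Hence
\begin{align*}
N_{H^2}[S[3]] &= \{S[1], S[2], S[3], S[4], S[5]\}, \\
N_{H^2}[S[4]] &= \{S[2], S[3], S[4], S[5]\}, \\
N_{H^2}[S[5]] &= \{S[3], S[4], S[5]\}.
\end{align*}
From this I read off two facts: (i) the only vertices within $H^2$-distance $1$ of $S[5]$ are $S[3],S[4],S[5]$ themselves, so every dominating set of $H^2_{x,y}$ must contain at least one of them; and (ii) $N_{H^2}[S[3]]$ is a superset of both $N_{H^2}[S[4]]$ and $N_{H^2}[S[5]]$, so $S[3]$ dominates in $H^2$ everything that $S[4]$ or $S[5]$ dominates.

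Given an MDS $D$ of $H^2_{x,y}$, I would process each gadget $S$ as follows. Let $X_S = D \cap \{S[3],S[4],S[5]\}$, which is non-empty by (i). If $S[3] \in D$ there is nothing to do; otherwise pick any $v \in X_S \subseteq \{S[4],S[5]\}$ and replace $D$ by $D' = (D \setminus \{v\}) \cup \{S[3]\}$. By (ii) every vertex dominated by $v$ under $D$ is still dominated by $S[3]$ under $D'$, and no other dominations are disturbed, so $D'$ is a dominating set with $|D'| = |D|$, hence again an MDS. Performing this swap in turn for every dangling path gadget $DP_e$ and every shared path gadget $S \in \{A_1^i, A_2^i, B_1^i, B_2^i\}_{1 \le i \le k}$ yields the desired equal-size MDS containing every $S[3]$. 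There is no real obstacle; the only thing one has to verify carefully is the closed-neighborhood computation above, which is where the choice of a length-$5$ path gadget (rather than something shorter) is used.
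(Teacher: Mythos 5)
Your proof is correct and follows essentially the same exchange argument as the paper: the vertex $P[5]$ forces one of $P[3],P[4],P[5]$ into any dominating set, and since $N_{H^2}[P[3]]$ contains $N_{H^2}[P[4]]$ and $N_{H^2}[P[5]]$, swapping in $P[3]$ preserves domination and size. Your explicit computation of the closed $H^2$-neighborhoods is a slightly more careful presentation of the step the paper states only implicitly.
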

\begin{proof}
  Consider a dangling path gedget or a shared path gadget \(P\) in \(H^{2}_{x,y}\). In order to cover \(P[5]\), at least one of \(P[3], P[4], P[5]\) has to be in the dominating set \(S_{H}\) of \(H^{2}_{x,y}\). If either of \(P[4]\) or \(P[5]\) (or both) is in the dominating set \(S_{H}\), we can create a new dominating set \(S'_{H}\) by removing them and adding \(P[3]\) (if it is not already present). The dominating set \(S'_{H}\) still covers all the vertices in \(H^2_{x,y}\) and has size at most the size of \(S_{H}\). Doing this exercise for all dangling path gadgets and shared path gadgets gives us the lemma.
\end{proof}

\begin{lemma}
\label[lemma]{lem:no-other-dangling-gadget-in-mds}
Any MDS of $H^2_{x,y}$ can be transformed into an equal size MDS such that
\begin{enumerate}
\item 
for any dangling path gadget \(DP_{e}\) such that \(e\) is incident on a bit gadget vertex in \(G_{x,y}\) no gadget vertices other than \(DP_{e}[3]\) belongs to the MDS
\item the vertices \(S[2], S[4], S[5]\) of a shared path gadget \(S\) do not belong to the MDS of \(H^{2}_{x,y}\) for all \(S \in \{A_{1}^{i}, A_{2}^{i}, B_{1}^{i}, B_{2}^{i}\}_{i=1}^{k}\).
\end{enumerate}
\end{lemma}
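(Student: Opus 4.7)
I would start from an MDS $D$ in the form guaranteed by Lemma~\ref{lem:middle-dangling-gadget-in-mds}, so that $DP_e[3]\in D$ and $S[3]\in D$ for every dangling path gadget $DP_e$ and every shared path gadget $S$, and then transform $D$ into an equal-or-smaller MDS satisfying both conclusions by processing the two kinds of gadgets in turn. The key structural observation, obtained by inspecting two-hop $H$-neighborhoods, is that for a dangling gadget $DP_e$ with endpoints $u,v$ (with $u$ the bit-gadget vertex) the vertex $DP_e[3]$ already dominates all five gadget vertices, $DP_e[4]$ and $DP_e[5]$ dominate no vertex outside the gadget, $DP_e[2]$ adds only $\{u,v\}$ outside the gadget, and $DP_e[1]$ adds only $\{u,v\}\cup N_H(u)\cup N_H(v)$---and crucially every vertex in $N_H(u)\cup N_H(v)$ is the ``$[1]$-vertex'' of some other dangling or shared path gadget and is therefore already dominated by that gadget's $[3]$-vertex, which lies in $D$. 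The analogous statement for a shared gadget $S$ attached to a row vertex $v$ is that $S[4],S[5]$ dominate nothing outside the gadget, while $S[2]$ adds only $v$ (the opposite-side $[1]$-vertices reached through $S[1]$ being already covered by their own $[3]$-vertices in $D$).

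Given this observation, Phase~1 would handle dangling gadgets: for each $DP_e$ I would remove $\{DP_e[1],DP_e[2],DP_e[4],DP_e[5]\}\cap D$ from $D$ and, if that removal excises $DP_e[1]$ or $DP_e[2]$ and $u\notin D$, insert $u$. The updated set still dominates $H^2_{x,y}$ because the only previously-covered vertices whose dominators could all disappear are $u$ and $v$---the gadget interior stays covered by $DP_e[3]$, and $N_H(u)\cup N_H(v)$ stays covered by the $[3]$-vertices of the other gadgets---and $u$ (original or just inserted) dominates both $u$ and $v$ in $H^2$, since $u,v$ are at $H$-distance~$2$ via $DP_e[1]$ and $N_H(u)$ lies within $u$'s one-hop $H$-neighborhood. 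Each step removes at least one vertex and inserts at most one, and $u$ is inserted at most once across all edges incident to it, so $|D|$ does not grow. Phase~2 would handle shared gadgets analogously: remove $\{S[2],S[4],S[5]\}\cap D$ and, if $S[2]$ was removed and $S[1]\notin D$, insert $S[1]$; then $v$ is dominated by its $H$-neighbor $S[1]$, and the size count is non-increasing for the same reason.

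The main obstacle I expect is the structural observation itself: one has to carefully enumerate the two-hop $H$-neighborhood of each interior gadget vertex and verify that every external vertex it could dominate is either one of the two endpoints attached to that gadget or else a $[1]$-vertex whose own $[3]$-vertex lies in $D$. A secondary subtlety is that removing $S[2]$ in Phase~2 might have undominated the row vertex $v$ had its only original dominator been some $DP_{e'}[i]$ for an edge $e'$ of $G_{x,y}$ incident to $v$; this is handled because Phase~1 would have already inserted the bit-gadget endpoint of $e'$ into $D$, and that endpoint dominates $v$ at $H$-distance~$2$.
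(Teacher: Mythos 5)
Your proof is correct and follows essentially the same route as the paper's: both rest on the two-hop enumeration showing that $DP_e[4],DP_e[5]$ (resp.\ $S[4],S[5]$) dominate nothing beyond what $DP_e[3]$ already covers, that $DP_e[2]$ and $DP_e[1]$ add only the endpoints $u,v$ plus $[1]$-vertices of other gadgets that are already covered by their own $[3]$-vertices via \Cref{lem:middle-dangling-gadget-in-mds}, and both then perform the same size-non-increasing exchange. The only cosmetic difference is your choice of replacement vertex for $S[2]$ (the gadget vertex $S[1]$ rather than the attached row vertex), which the lemma statement equally permits.
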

\begin{proof}
\begin{enumerate}
\item 
Consider a dangling path gadget \(DP_{e}\) for the edge $e = \{u, v\}$ in $G_{x,y}$. 
For ease of exposition, we rename the vertices of \(DP_{e}\) to be \(p, q, r, s, t\), with $p$ connected to $u$ and $v$ in $H_{x, y}$. 
By \Cref{lem:middle-dangling-gadget-in-mds}, \(r\) belongs to the MDS and covers all vertices \(p, q, r, s, t\) in the gadget. 
Note that if \(s\) and \(t\) are in the MDS, they can just be removed, because they do not cover any more vertices than \(r\) in \(H^2_{x,y}\). 
The vertex \(q\) covers just the vertices \(u, v, p, q, r, s\) in \(H^2_{x,y}\) and thus the only vertices \(q\) covers other than \(r\) are \(u\) and \(v\). 
Therefore, if \(q\) is in a minimum dominating set \(S\), we can exchange it for either \(u\) or \(v\) and we still have a dominating set of size at most \(|S|\). 
Similarly, the additional vertices that \(p\) covers over \(r\), are \(u, v\), and all the \(P[1]\) vertices of each dangling path gadget and shared path gadget \(P\) 
incident on \(u\), and \(v\). The all these vertices are covered due to \Cref{lem:middle-dangling-gadget-in-mds}. Therefore, if \(p\) is in the minimum dominating set \(S\), we can exchange it for either \(u\) or \(v\) and we still have a dominating set of size at most \(|S|\).

Hence, we can assume no vertex in \(DP_{e}\) other than \(r\) belongs to the MDS. Repeating this argument for all dangling path vertices gives us the lemma.
\item Note that the last part of the above proof does not apply to the shared path gadgets, though the rest does. 
For example, the \(A_{1}^{i}[1]\) vertex might cover the vertex \(a_{2}^{j}\) if \(\{A_{1}^i[1], A_{2}^{j}[1]\}\) is an edge in $H_{x, y}$. 
Therefore we get the following, slightly weaker, lemma for shared path gadgets. 
\end{enumerate}
\end{proof}

We are now ready to prove the lemma that will allow us to show our lower bound.

\begin{lemma}
\label{lem:MDSsizesRelate}
The graph \(H_{x,y}^2\) has a minimum dominating set of size \(W + 2k + 4k\log_2 k + 12\log_2 k\) if and only if, the graph \(G_{x, y}\) has a minimum dominating set of size \(W\). 
\end{lemma}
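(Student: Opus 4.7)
The plan is to establish the size equivalence by exhibiting explicit constructions in both directions, making heavy use of the normal form provided by Lemmas~\ref{lem:middle-dangling-gadget-in-mds} and~\ref{lem:no-other-dangling-gadget-in-mds}. The argument parallels Lemma~\ref{lemma:sizeMVC} (the vertex-cover analogue), but the accounting is subtler because shared path gadgets interact both with the internal path-domination requirements and with the row-to-row adjacencies inherited from $G_{x,y}$.

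For the ``if'' direction, given an MDS $D$ of $G_{x,y}$ of size $W$, I will construct a dominating set $D'$ of $H^2_{x,y}$ of the claimed size. The construction first places $DP_{e}[3]$ in $D'$ for every dangling path gadget; each such middle vertex dominates the entire five-vertex gadget in $H^{2}_{x,y}$, contributing $4k\log_{2}k + 12\log_{2}k$ vertices. For each shared path gadget $S$ whose associated row vertex $v$ is not in $D$, include $S[3]$; if $v \in D$, include $S[1]$ instead (which in $H^{2}_{x,y}$ dominates $v$ and every row vertex $v'$ with $\{v,v'\} \in E(G_{x,y})$ via the $H$-edge between shared gadgets' $[1]$-vertices), together with one more internal vertex of $S$ to cover the tail of the path. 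Finally, include every bit-gadget vertex of $D$ directly in $D'$; such vertices dominate in $H^{2}_{x,y}$, via distance-two paths through the dangling path bridges $DP_{e}[1]$, exactly the row and bit-gadget vertices they dominated in $G_{x,y}$. A direct verification then shows that $D'$ dominates $H^{2}_{x,y}$.

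For the ``only if'' direction, I will start with an arbitrary MDS $D'$ of $H^{2}_{x,y}$ and apply Lemmas~\ref{lem:middle-dangling-gadget-in-mds} and~\ref{lem:no-other-dangling-gadget-in-mds} to normalize it so that every gadget contributes its $[3]$-vertex, no dangling path gadget contributes any other vertex, and the only extra gadget vertices possibly appearing are shared-path $[1]$-vertices. I then define $D \subseteq V(G_{x,y})$ as the non-gadget vertices of $D'$ together with the row vertex associated with each shared-path $[1]$-vertex in $D'$. The key observation for showing that $D$ dominates $G_{x,y}$ is that a shared-path $[1]$-vertex in $D'$ and a bit-gadget vertex in $D'$ each dominate in $H^{2}_{x,y}$ precisely the row- and bit-gadget vertices that the corresponding element of $D$ dominates in $G_{x,y}$ (so neighborhoods translate one-to-one between the two graphs once proxies are accounted for).

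The main obstacle will be matching the precise additive constant. In a normalized $D'$, every gadget contributes its middle vertex, yielding a baseline count of $4k + 4k\log_{2}k + 12\log_{2}k$ gadget vertices; whenever a row vertex is represented by its shared-path $[1]$-proxy, the associated gadget contributes one extra internal vertex but the ``row-vertex budget'' decreases correspondingly. The delicate part is establishing tight optimality in both directions: I must argue that no clever placement of shared-path $[1]$-vertices lets $H^{2}_{x,y}$ attain a strictly smaller MDS than the stated value, which will reduce to a careful injective correspondence between shared-path $[1]$-proxies in an optimal $D'$ and row-vertex choices in an optimal $D$ of $G_{x,y}$, ensuring that every vertex in $D'$ is accounted for by either a gadget-covering obligation or a $G_{x,y}$-domination obligation, with no overlap that could be exploited to beat the bound.
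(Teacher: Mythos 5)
Your plan follows the same route as the paper: normalize an MDS of $H_{x,y}^2$ via Lemmas~\ref{lem:middle-dangling-gadget-in-mds} and~\ref{lem:no-other-dangling-gadget-in-mds}, translate between row vertices and their shared-gadget $[1]$-proxies, and do the accounting in both directions. However, there is a genuine gap in the central translation step. You assert that a shared-path $[1]$-vertex dominates in $H^{2}_{x,y}$ \emph{precisely} the row- and bit-gadget vertices that the corresponding row vertex dominates in $G_{x,y}$. This is false: the edge $\{a_1^i, t_{A_1}^j\}$ of $G_{x,y}$ is realized in $H_{x,y}$ by a dangling gadget, so $t_{A_1}^j$ sits at distance $3$ from $A_1^i[1]$ (via $a_1^i$ and $DP_e[1]$) and is \emph{not} dominated by $A_1^i[1]$ in $H^2_{x,y}$; symmetrically, $a_1^i$ itself is at distance $3$ from $a_2^j$ and does not dominate it. The row vertex and its proxy each cover only ``half'' of the $G_{x,y}$-neighborhood. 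In your ``only if'' direction this is harmless (replacing $A_1^i[1]$ by $a_1^i$ only enlarges the dominated set), but in your ``if'' direction it breaks the construction: any bit-gadget vertex whose only dominator in $D$ is a row vertex is left undominated by $D'$, so the promised ``direct verification'' fails. The paper closes exactly this hole by invoking the structural fact from the construction of~\cite{BachrachCDELP19} that in a minimum dominating set of $G_{x,y}$ the bit-gadget vertices are always dominated by bit-gadget vertices; you need this (or an equivalent argument) to make the proxy substitution legitimate. The same fact is also what lets the paper prove that a minimum dominating set of $H^2_{x,y}$ never contains both $a_1^i$ and $A_1^i[1]$, which is the concrete content behind the ``injective correspondence'' you defer to at the end; without it your count can be off by the number of such coincidences.

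Two smaller points. First, your choice to put $S[1]$ plus ``one more internal vertex'' into gadgets whose row vertex lies in $D$ is equivalent to the paper's simpler rule (always take $S[3]$, and additionally $S[1]$ when the row vertex is selected), since the extra internal vertex must in any case be one of $S[3],S[4],S[5]$ to dominate $S[5]$; stating it the paper's way makes the additive constant transparent. Second, your baseline of $4k$ shared-gadget middle vertices reflects the construction (one gadget per row vertex in each of $A_1,A_2,B_1,B_2$) but does not match the constant $2k$ in the lemma statement; you should reconcile your count with the stated bound rather than leave the discrepancy implicit.
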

\begin{proof}
For the forward direction, let $S_H$ denote a minimum dominating set of $H_{x, y}^2$.
By \Cref{lem:middle-dangling-gadget-in-mds}, we 
can assume that the vertex with index 3 from each dangling path gadget and each shared path gadget belongs to \(S_H\). 
There are \(2k + 4k\log_2 k + 12\log_2 k\) such vertices in \(H_{x, y}\) and they only cover the dangling path gadget and 
shared path gadget vertices. 
Let $S_H'$ denote the remaining vertices of \(S_{H}\), i.e., those that remain in $S_H$ after vertices with index 3 from
dangling path and shared path gadgets are removed.
Let $W$ denote the size of $S_H'$.
We show that $S_H'$ can be transformed into a minimum dominating set of \(G_{x,y}\) of size $W$.

By \Cref{lem:no-other-dangling-gadget-in-mds}, we know that \(S_{H}\) does not contain any dangling path gadget vertex 
besides those with index 3. Therefore $S_H'$ contains no dangling path vertex.
By \Cref{lem:no-other-dangling-gadget-in-mds}, (2), we know that \(S_{H}\) does not contain the \(S[2], S[4], S[5]\) vertices of any shared path gadget \(S\).
Therefore, for any shared path gadget $S$, the only vertices from $S$ that $S_H'$ may contain is $S[1]$.
Now these vertices in \(S_H'\) have to cover all the vertices in \(H^2_{x,y}\) that are not dangling and shared path vertices;
these are exactly the vertices of \(G_{x,y}\). 

We now show that $S_H$ does not contain both $a_1^i$ and $A_1^i[1]$ for any $i$, $1 \le i \le k$.
The same argument can be applied to other sets of row vertices $A_2$, $B_1$, and $B_2$.
The vertex \(a_{1}^{i}\) in \(H^{2}_{x,y}\) covers some bit gadget vertices and the vertex \(A_{1}^{i}[1]\) covers the 
same row vertices in \(A_{2}\) as the vertex \(a_{1}^{i}\) does in \(G_{x,y}\). Note that \(a_{1}^{i}\) does not cover any row vertex in \(H^{2}_{x,y}\).
We know from~\cite{BachrachCDELP19} that the MDS of \(G_{x,y}\) has the property that the bit gadget vertices provide 
coverage for all bit gadget vertices. This is a local argument and it also holds for the MDS of \(H^{2}_{x,y}\) 
since the subgraph induced by the bit gadget vertices in \(H^{2}_{x,y}\) is the same as the subgraph induced in \(G_{x,y}\). 
Therefore, \(a_{1}^{i}\) and $A_1^i[1]$ cannot belong to \(S_{H}\) because if they do, we can remove 
$a_1^i$ and still have a dominating set of $H_{x, y}$, contradicting the fact that $S_H$ is a minimum dominating set of $H^2_{x, y}$.

Knowing that $S_H'$ does not contain both $a_1^i$ and $A_1^i[1]$ (and similarly for vertices from other rows), we can
transform \(S_{H}'\) to the set \(S_{G}\) by replacing any shared path vertex by the corresponding row vertex.
This gives a dominating set of $G_{x, y}$ of size $|S_H'|$.
Note that \(S_G\) must form a \textit{minimum} dominating set of \(G_{x,y}\). 
Otherwise we can create a smaller dominating set of \(H_{x,y}^2\) than \(S_H\), by taking the vertices corresponding to the 
MDS of \(G_{x,y}\) and applying the reverse transformation that replaces all the row vertices by their corresponding shared path gadget vertex.

For the reverse direction, consider an MDS \(S_G\) of \(G_{x,y}\) of size \(W\). 
Let $S_H$ contain the \(P[3]\) vertices from all the \(2k + 4k\log_2 k + 12\log_2 k\) dangling path and shared path gadgets \(P\).
We now add to \(S_{H}\), the \(W\) vertices in \(S_G\), while replacing each the row vertex by its neighboring shared path gadget vertex. 
It is easy to see that \(S_{H}\) has size \(W + 2k + 4k\log_2 k + 12\log_2 k\) and is
a dominating set of \(H_{x,y}^2\). There cannot be a smaller dominating set of  \(H_{x, y}^2\), because otherwise we could extract 
a dominating set of \(G_{x,y}\) that is smaller than \(S_G\) using the procedure described while proving the forward direction, contradicting the optimality of \(S_G\).
\end{proof}
Using that the size of a $G^{2}$-MDS of $H_{x,y}$ relates the size of an MDS of $G_{x,y}$ we can prove \Cref{thm:exact-mds-lower-bound-main-theorem} as follows.
\begin{proof}[Proof of \Cref{thm:exact-mds-lower-bound-main-theorem}]
  Bachrach et al. \cite{BachrachCDELP19} show that the MDS lower bound graph \(G_{x,y}\) is a family of lower bound graphs for the \CONGEST\ model wrt the set-disjointness function \(f=DISJ_{k^{2}}\) and the predicate \(P_{G}\) which asks whether the graph \(G\) has a dominating set of size \(W = 4\log k + 2\). The vertices of \(G_{x,y}\) are partitioned into \(V_{A} = A_{1} \cup A_{2} \cup \{f_{S}^{i}, t_{S}^{i}, u_{S}^{i} \mid 1 \le i \le \log_{2} k, S \in \{A_{1}, A_{2}\}\}\) and \(V_{B} = V(G_{x,y}) \setminus V_{A}\) with cut size being \(|E(V_{A}, V_{B})| = O(\log k)\).

  By \Cref{lem:MDSsizesRelate}, we know that \(G_{x,y}\) satisfies the predicate \(P_{G}\) iff the graph \(H_{x,y}\) satisfies the predicate \(P_{H}\) which asks whether the input graph has a \(G^{2}\)-dominating set of size \(W + 2k + 4k\log_2 k + 12\log_2 k\) where \(W = 4\log k + 2\). Recall that the number of vertices in \(H_{x,y}\) is \(O(k \log k)\).

  Define \(V_{A}' = V_{A} \cup \{DP_{e} \mid e = \{u, v\} \text{ and } u, v \in V_{A}\} \cup \{A_{1}^{i}, A_{2}^{i} \mid 1 \le i \le k\}\) and \(V_{B}' = V(H_{x,y}) \setminus V_{A}'\). With these definitions of \(V_{A}'\) and \(V_{B}'\), the cut \(E(V_{A}', V_{B}')\) of \(H_{x,y}\) has size at most \(O(\log k)\).

  The graph \(H_{x,y}\) with partition \(V_{A}', V_{B}'\) is a family of lower bound graphs wrt the set-disjointness function \(f=DISJ_{k^{2}}\) and the predicate \(P_{H}\). Therefore, \Cref{thm: general lb framework} gives an \(\tilde{\Omega}(k^2)\) lower bound for the exact unweighted \(G^{2}\)-MDS problem in the \CONGEST\ model on a graph with \(O(k \log k)\) vertices. Therefore for a graph with \(n\)-vertices we get an \(\tilde{\Omega}(n^{2})\) lower bound which completes the proof of \Cref{thm:exact-mds-lower-bound-main-theorem}.
\end{proof}
 \subsection{Quadratic Lower bound for $O(1)$-approximate $G^2$-WMDS}\label{app:ApproximateMDSLB}
In this section, we will prove the following theorem.

\begin{theorem}\label[theorem]{thm:approx-MWDS-lower-bound-main-theorem}
  Any distributed algorithm in the \CONGEST\ model which, given an input graph \(G\), produces a \(c\)-approximate solution to the minimum weighted dominating set problem on \(G^{2}\) for \(c < 7/6\) requires \(\tilde{\Omega}(n^{2})\) rounds.
\end{theorem}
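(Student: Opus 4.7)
The plan is to build on the exact $G^2$-MDS lower bound established in the previous subsection (Section \ref{app:ExactMDSLB}) and use the Alice-Bob framework (Theorem \ref{thm: general lb framework}). Recall that in the graph family $H_{x,y}$ constructed there, $\text{MDS}(H_{x,y}^2)$ differs by exactly $1$ between the $DISJ(x,y) = \mathrm{false}$ case and the $DISJ(x,y) = \mathrm{true}$ case (with base size $W + 2k + 4k\log_2 k + 12\log_2 k$ where $W = 4\log k + 2$). The challenge here is to amplify this \emph{additive} gap of $1$ into a \emph{multiplicative} gap of at least $7/6$, which allows any $c$-approximation algorithm with $c < 7/6$ to distinguish the two cases and hence solve set disjointness.

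To achieve this, I would start from the same graph $H_{x,y}$ and assign vertex weights carefully. All of the ``forced'' middle vertices of the dangling path gadgets and shared path gadgets (those guaranteed to appear in every MDS by Lemma \ref{lem:middle-dangling-gadget-in-mds}) are given weight $0$, so that they contribute nothing to the MWDS weight. The bit gadget vertices and row vertices then carry the discriminating weight. Specifically, I would attach a small number of auxiliary pendant/gadget vertices whose inclusion in any dominating set is forced and which jointly contribute a total weight of exactly $6$ in every case, while calibrating the weight of the ``extra'' vertex that must enter the MDS in the non-disjoint case to be $1$. This makes $\text{MWDS}(H_{x,y}^2)=6$ when $DISJ(x,y)=\mathrm{false}$ and $\text{MWDS}(H_{x,y}^2)\geq 7$ when $DISJ(x,y)=\mathrm{true}$, giving the desired $7/6$ ratio.

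The rest of the proof would follow the structure of Theorem \ref{thm:exact-mds-lower-bound-main-theorem}. First, I would re-establish normal-form lemmas analogous to Lemmas \ref{lem:middle-dangling-gadget-in-mds} and \ref{lem:no-other-dangling-gadget-in-mds} for the weighted setting, showing that an optimal MWDS can be assumed to place weight only on specific vertex classes. Then I would prove a weighted analog of Lemma \ref{lem:MDSsizesRelate}, establishing that $\text{MWDS}(H_{x,y}^2)\leq 6$ iff $G_{x,y}$ has an MDS of size $\leq 4\log k + 2$, i.e., iff $DISJ(x,y)=\mathrm{false}$. The Alice-Bob partition $V_A', V_B'$ used in the exact-case proof still has cut size $O(\log k)$ (auxiliary weight-boosting vertices are attached locally on one side of the partition so as not to grow the cut), and applying Theorem \ref{thm: general lb framework} with $f=DISJ_{k^2}$ and the predicate ``$\text{MWDS}(H_{x,y}^2)\leq 6$'' yields the $\tilde{\Omega}(k^2)=\tilde{\Omega}(n^2)$ lower bound.

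The main obstacle is the design of the weighting (and of any auxiliary gap-amplifying gadgets) so that the weighted MWDS truly respects the intended $7/6$ gap. In the exact case, the gap of $1$ is a purely combinatorial fact about $G_{x,y}$; under weights, one has to rule out that some clever weight-exploiting dominating set of $H_{x,y}^2$ closes the gap by, e.g., substituting cheap shared-path-gadget vertices for row vertices. This requires a careful structural analysis tailored to the new weights, particularly around the shared path gadgets where a ``wrong'' MDS choice could otherwise avoid the additional cost. Maintaining the cut size of $O(\log k)$ in the presence of any new auxiliary vertices is a secondary issue that can be handled by attaching these vertices entirely on one side of the Alice-Bob partition.
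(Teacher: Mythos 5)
Your high-level plan (reuse the exact-$G^2$-MDS graph $H_{x,y}$, zero out the forced path-gadget vertices, and reweight so that the optimum becomes $6$ versus $7$) runs into a concrete obstruction that the paper isolates as its ``Challenge 3'' and that your proposal flags as ``the main obstacle'' but does not resolve: the bit gadgets of the Bachrach et al.\ construction cannot be reweighted so as to produce a constant multiplicative gap. Any dominating set must contain about $4\log_2 k$ bit-gadget vertices (two per $6$-cycle just to dominate the cycles themselves), so if each bit-gadget vertex carries weight $w$, the optimum pays roughly $4w\log_2 k$ before the disjointness-dependent part contributes its additive $+2$ versus $+3$. If $4w\log_2 k=\omega(1)$, the ratio between the two cases is $1+o(1)$ and no constant-factor hardness follows. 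If instead $w=O(1/\log k)$, an adversarial dominating set can select, at a single bit position, \emph{both} $t_{A_1}^j$ and $f_{A_1}^j$ (and likewise on the $B_1$, $A_2$, $B_2$ sides): since every row vertex is adjacent at every position to exactly one of the two, this dominates \emph{all} row vertices at an additional cost of only $O(w)=O(1/\log k)$, independently of $x$ and $y$. The discriminating ``extra'' vertices are then never needed, and the two cases collapse to the same weight. Hence no weighting of the existing graph yields a $7/6$ gap, and the auxiliary weight-$6$ pendant vertices you propose do not change this, since they contribute identically in both cases.

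This is exactly why the paper does not reweight $H_{x,y}$ but builds a new lower-bound graph. It (i) replaces the bit gadgets by a \emph{set gadget} based on a set system with the $r$-covering property (\Cref{lem:bit-gadget-domination-gap}), which can be dominated at weight $2$ by a complementary pair $S_i,\overline{S_i}$ but costs at least $r$ (an arbitrarily large constant) for any dominating set avoiding all complementary pairs---the ``pay $2$ or pay a lot'' dichotomy that the $6$-cycle bit gadgets lack; and (ii) merges all shared path gadgets into single gadgets $A_*$ and $B_*$ with weight-$0$ centers, so that the forced part of any solution costs nothing (your weight-$0$ idea handles this part, but only this part). Together with a redesigned row-vertex structure this gives an optimum of exactly $6$ when $DISJ(x,y)=\mathrm{false}$ and at least $7$ otherwise (\Cref{lem:weighted-mds-disjointness-reduction}). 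The ``careful structural analysis tailored to the new weights'' that you defer is precisely the point at which the construction as you describe it breaks, and filling it requires a new gadget rather than a new weighting.
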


Obtaining quadratic lower bounds for approximation algorithms seems much more challenging than obtaining quadratic lower
bounds for algorithms that solve problems exactly.
This is illustrated in~\cite{BachrachCDELP19}, which contains quadratic lower bounds for exact versions of a number
of problems (e.g., MDS, Hamiltonian path, Steiner tree, and max-cut) and a quadratic lower bound for a $O(1)$-approximation algorithm for just
one problem: maximum independent set (\textsc{MaxIS}).
The authors use an interesting gadget, called a \textit{code gadget}, that helps in creating the ``gap'' needed
for the \textsc{MaxIS} problem.
But, the success of this gadget seems to depend a lot on the structure of the \textsc{MaxIS} problem.
In particular, this gadget does not seem to work for MDS and~\cite{BachrachCDELP19} does not show
any lower bounds for approximating MDS on $G$.
However,~\cite{BachrachCDELP19} does show weaker lower bounds (i.e., linear or worse)
for $G^2$-MWDS, though for larger approximation factors.
Specifically, they show two results for $G^2$-MWDS in \CONGEST: (i) an $\Omega(n^{1-\eps}/\log n)$-round lower bound for $O(\eps \log n)$-approximation and (ii) an $\tilde{\Omega}(n)$-round lower bound for $O(\log\log n)$-approximation.
Our results overcome the weaknesses of the results of~\cite{BachrachCDELP19}: Our bounds are indeed quadratic lower bounds for approximation of $G^2$-MDS, and  our results remove the necessity of weights.

We now provide a proof of Theorem \ref{thm:approx-MWDS-lower-bound-main-theorem}, while
providing intuition for the main challenges our proof overcomes in addition to the challenge addressed in \Cref{app:ExactMDSLB}.

\begin{quote}
\textbf{Challenge 2:}
The size of the MDS in \(H^2_{x,y}\) is too large for any small-cut bit gadget to yield a constant fraction gap.
\end{quote}

\noindent
At this stage, there are still $\Omega(k \log k)$ distinct dangling and shared path gadgets and this leads to
a minimum dominating set of size $\Omega(k \log k)$ because every path gadget needs at least 1 distinct vertex
in the dominating set.
To get a lower bound for $O(1)$-approximation, one would need to create a ``gap'' of size $\Omega(k \log k)$
and this is not possible, given the $O(\log k)$ size of the cut.
To overcome this challenge, we propose an ``extreme'' version of sharing path gadgets, which we now describe.

To overcome this challenge, we need to reduce the size of the MDS significantly.
For this purpose, we propose a merged version of the dangling and shared path gadgets which will
use fewer vertices to cover all the gadget vertices. Let \(C\) be an arbitrary set of dangling and shared path
gadgets \(P_{e}\) that were added during the construction of $H_{x, y}$. In order to merge these gadgets, we remove all the
\(P_{e}[3], P_{e}[4], P_{e}[5]\) vertices and connect all the \(P_{e}[2]\) vertices to a common \(3\)
vertex path \(P_{C}[3], P_{C}[4], P_{C}[5]\). These three common vertices play the same role as the
original \(P_{e}[3], P_{e}[4], P_{e}[5]\) vertices for each of the constituent gadgets we merged.
We denote this merged path gadget as \(P_{C}\). Therefore we get the following lemma.
The proof is similar to those of \Cref{lem:middle-dangling-gadget-in-mds} so we skip it.

\begin{lemma}\label[lemma]{lem:merged-gadget-vertex-in-mds}
Let \(\mathcal{C}\) be an arbitrary partition of the dangling and shared path gadgets in $H_{x, y}$. We modify $H_{x, y}$ by merging the dangling and shared path gadgets for each \(C \in \mathcal{C}\). Any MDS of \(H^2_{x,y}\) can be transformed into an equal size MDS of \(H^2_{x,y}\)  which contains the \(P_{C}[3]\) vertex of each merged path gadget \(P_{C}\).
\end{lemma}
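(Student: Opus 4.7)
The plan is to mimic the swap argument used for \Cref{lem:middle-dangling-gadget-in-mds}, applied once to each merged gadget $P_C$. Fix $C \in \mathcal{C}$ and consider the leaf $P_C[5]$. In the modified graph $H_{x,y}$, the unique neighbor of $P_C[5]$ is $P_C[4]$, whose only other neighbor is $P_C[3]$. Hence the set of vertices at distance at most two from $P_C[5]$ in $H_{x,y}$ is exactly $\{P_C[3], P_C[4], P_C[5]\}$, so any dominating set $S_H$ of $H^2_{x,y}$ must contain at least one of these three vertices.

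Next I would compare what each of $P_C[3], P_C[4], P_C[5]$ dominates in $H^2_{x,y}$ inside the merged gadget. Because $P_C[3]$ is adjacent in $H$ to $P_C[4]$ and to every $P_e[2]$ with $e \in C$, its closed $2$-ball in $H$ contains $P_C[4], P_C[5]$, every $P_e[2]$, and every $P_e[1]$ (reached via $P_e[2]$). By contrast, $P_C[4]$ does not reach any $P_e[1]$, and $P_C[5]$ reaches neither $P_e[1]$ nor $P_e[2]$. Outside the merged gadget, none of $P_C[3], P_C[4], P_C[5]$ dominates any vertex, since the only path from these three to the rest of $H_{x,y}$ goes through $P_e[2]$ and then $P_e[1]$, which is already three hops away from $P_C[4]$ or $P_C[5]$ and more from $P_C[3]$.

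The swap is then immediate: if $S_H$ does not already contain $P_C[3]$, it must contain $P_C[4]$ or $P_C[5]$; pick any such vertex, add $P_C[3]$ and delete it from $S_H$. Since $P_C[3]$ dominates everything inside $P_C$ that the removed vertex dominated (and the removed vertex dominated nothing outside), the resulting set $S_H'$ is still a dominating set of $H^2_{x,y}$, and $|S_H'|\le |S_H|$. If the inequality were strict the minimality of $S_H$ would be violated, so $|S_H'|=|S_H|$. Any remaining copy of $P_C[4]$ or $P_C[5]$ in $S_H'$ can now be removed by the same reasoning without loss (it would again contradict minimality to strictly decrease), or simply left alone since the lemma only asks that $P_C[3]\in S_H'$. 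Iterating this exchange independently over all $C \in \mathcal{C}$ yields the desired MDS. The main thing to verify carefully is that the swap never breaks domination of vertices \emph{outside} $P_C$, which is why I singled out the observation that $P_C[4], P_C[5]$ dominate nothing beyond the merged gadget; this is the only place the structural difference from \Cref{lem:middle-dangling-gadget-in-mds} matters.
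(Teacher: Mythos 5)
Your proof is correct and is exactly the swap argument the paper intends: the paper omits this proof entirely, noting only that it is similar to \Cref{lem:middle-dangling-gadget-in-mds}, and your key observation that the closed $2$-ball of $P_C[3]$ in $H_{x,y}$ contains those of $P_C[4]$ and $P_C[5]$ (while all three $2$-balls stay inside the merged gadget) is precisely the verification needed to transfer that argument. One cosmetic slip in your last justification: $P_e[1]$ is \emph{two} hops from $P_C[3]$, not ``more'' than three --- which is exactly why $P_C[3]$ dominates it --- but this does not affect the conclusion.
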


This ``extreme'' merging of path gadgets allows us to reduce the size of a MDS substantially\footnote{We note that this ``extreme'' merging fails for \(G^{2}\)-MVC; these different path gadgets,
even after being merged, require a large vertex cover. This may indicate why we do not have
an $\tilde{\Omega}(n^2)$ lower bound for approximating \(G^{2}\)-MVC and also indicates something fundamentally different about
the two problems.}, to $O(\log k)$
But, it is still not clear how to create a large enough ``gap'' i.e., how to ensure that the
size of MDS changes by $\Theta(\log k)$ as a result of small changes in the edges of
$H_{x, y}$ caused by changes to the bit vectors $x$, $y$.
Our goal now is to modify the construction of the lower bound graph so as to
reduce the size of the MDS even further. But, for that we have the overcome the following challenge.

\begin{quote}
\textbf{Challenge 3:}
The bit gadgets themselves contribute $\Omega(\log k)$ vertices to any minimum dominating set.
\end{quote}

In order to address this challenge, we replace the bit gadgets from the exact MDS construction with a \textit{set gadget} \(G_{MDS}\) inspired by the lower bound graph for \(2\)-MDS from~\cite{BachrachCDELP19}. See~\Cref{fig:MDS-weighted-bit-gadget} for an illustration.

\textbf{Set Gadgets:}
Consider a set system in a universe \(\mathcal{U} = \{1, 2, \dots, \ell\}\) of \(\ell\) elements. In \(G_{MDS}\) there are \(T\) vertices corresponding to sets \(S_{1}, S_{2}, \dots, S_{T} \subset \mathcal{U}\) and another \(T\) vertices corresponding to their complements \(\overline{S_{1}}, \overline{S_{2}}, \dots, \overline{S_{T}}\). A vertex has the same name as the set it represents. There are \(2\ell\) vertices \(\{\alpha_{i}, \beta_{i}\}_{i=1}^{\ell}\) where each \(\alpha_{i}, \beta_{i}\) corresponds to the element \(i\) in the universe \(\mathcal{U}\). There are edges between \(\alpha_{i}\) and \(\beta_{i}\) for each \(i \in \mathcal{U}\). And there are membership edges between \(S_{j}\) and \(\alpha_{i}\) if \(i \in S_{j}\) and between \(\overline{S_{j}}\) and \(\beta_{i}\) if \(i \notin S_{j}\). We require a collection of sets \(S_{1}, \dots, S_{T}\) that satisfy the following property which is used to prove hardness of approximation for set cover in different models \cite{LundY1994, Nisan2002}.

\begin{definition}
[\(r\)-covering property] Consider a collection of \(r\) sets \(\mathcal{R}\) from \(\{S_{i}, \overline{S_{i}}\}_{i=1}^{T}\), such that for each index \(i\) the set \(S_{i}\) and its complement \(\overline{S_{i}}\) are not both included together in \(\mathcal{R}\). The sets \(S_{1}, \dots, S_{T}\) are said to satisfy the \(r\)-covering property if for any such \(\mathcal{R}\), there is at least one element in the universe \(\mathcal{U}\) that is not covered by \(\mathcal{R}\) (i.e.~the element does not belong to any set in \(\mathcal{R}\)).
\end{definition}

\begin{lemma}[\cite{Nisan2002}]
  For any \(r \le \log \ell - O(\log \log \ell)\) there exists sets \(S_{1}, \dots, S_{T}\) satisfying the \(r\)-covering property with \(T = e^{\ell/r 2^{r}}\).
\end{lemma}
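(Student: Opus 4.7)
The plan is to establish the lemma by the probabilistic method. I will construct the family $S_1,\dots,S_T$ at random by including each element $j \in \mathcal{U}$ in each $S_i$ independently with probability $1/2$, and then show that with positive probability the resulting family satisfies the $r$-covering property, provided $T$ is at most the stated bound. A standard existence argument then extracts a deterministic family with the desired property.

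First I will analyze the probability that a fixed adversarial collection $\mathcal{R}$ covers all of $\mathcal{U}$. Fix a valid choice $\mathcal{R} = \{A_1,\ldots,A_r\}$ where $A_s \in \{S_{i_s}, \overline{S_{i_s}}\}$ and the indices $i_1,\ldots,i_r$ are distinct. For any single element $j \in \mathcal{U}$, the events ``$j \in A_s$'' for $s=1,\dots,r$ are mutually independent because they depend on distinct underlying sets $S_{i_s}$, and each occurs with probability $1/2$. Hence the probability that $j$ is \emph{not} covered by $\mathcal{R}$ equals $2^{-r}$ exactly. Because memberships of distinct elements are also independent, the probability that $\mathcal{R}$ covers every element of $\mathcal{U}$ is at most $(1-2^{-r})^\ell \le \exp(-\ell/2^r)$.

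Next I will take a union bound over all valid collections $\mathcal{R}$. The number of such collections is bounded by $\binom{T}{r} 2^r \le (2T)^r$, where the factor $2^r$ accounts for choosing each selected index's set or its complement (and the constraint forbidding $S_i$ and $\overline{S_i}$ together only decreases the count). Therefore the probability that some valid $\mathcal{R}$ covers all of $\mathcal{U}$ is at most $(2T)^r \exp(-\ell/2^r)$. For this quantity to be strictly less than $1$, it suffices that $r\ln(2T) < \ell/2^r$, i.e., $\ln T < \ell/(r\cdot 2^r) - \ln 2$.

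Finally I will calibrate $T$ to match the stated bound. Setting $T = e^{\ell/(r\cdot 2^r)}/c$ for a suitable constant $c$ makes the failure probability below $1$, so the desired family exists. To recover the cleaner statement $T = e^{\ell/(r\cdot 2^r)}$, the additive $\ln 2$ correction has to be absorbed into the leading term; this is precisely where the hypothesis $r \le \log \ell - O(\log\log \ell)$ enters. Under that hypothesis one has $2^r \le \ell/\polylog(\ell)$, so $\ell/(r\cdot 2^r)$ is superconstant, dominating any fixed additive terms after tweaking the hidden constant inside the $O(\log\log \ell)$. I expect the main obstacle to be exactly this bookkeeping of constants, in particular pinning down the constant inside $O(\log\log \ell)$ so that the union bound is strictly below $1$; the probabilistic argument itself is standard and the dependencies between the random choices are clean because the sets are sampled independently.
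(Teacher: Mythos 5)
The paper does not prove this lemma at all --- it is imported verbatim from the cited source \cite{Nisan2002} as a black box, so there is no in-paper argument to compare against. Your probabilistic-method proof is correct and is essentially the standard argument behind such covering designs (random sets with each element included independently with probability $1/2$, the exact $2^{-r}$ miss probability for a fixed element thanks to the distinctness of the indices $i_1,\dots,i_r$, and a union bound over the at most $\binom{T}{r}2^{r}$ valid collections). The only loose end is the one you already flag: the union bound as written yields $\ln T < \ell/(r2^{r}) - \ln 2$ rather than $\ln T \le \ell/(r2^{r})$, so the stated $T=e^{\ell/(r2^{r})}$ is off by a constant factor; since the hypothesis $r \le \log\ell - O(\log\log\ell)$ has an unspecified constant and the lemma is only ever used asymptotically, this slack is immaterial.
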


Therefore, if we consider \(r\) to be some large constant, we have \(\ell = O(\log T)\). We set the weight of all the \(\alpha_{i}\)'s and \(\beta_{i}\)'s in \(G_{MDS}\) to be \(r\), and all other vertices have weight \(1\). Finally we add two vertices \(\alpha\) and \(\beta\) both having weight \(r\). The vertex \(\alpha\) is connected to all \(S_{i}\)'s and \(\beta\) to all \(\overline{S_{i}}\)'s. This also means that all the \(S_{i}\)'s are two hops away from each other and all the \(\overline{S_{i}}\)'s are two hops away from each other.
\begin{figure}
\begin{center}
\includegraphics[width=.7\linewidth]{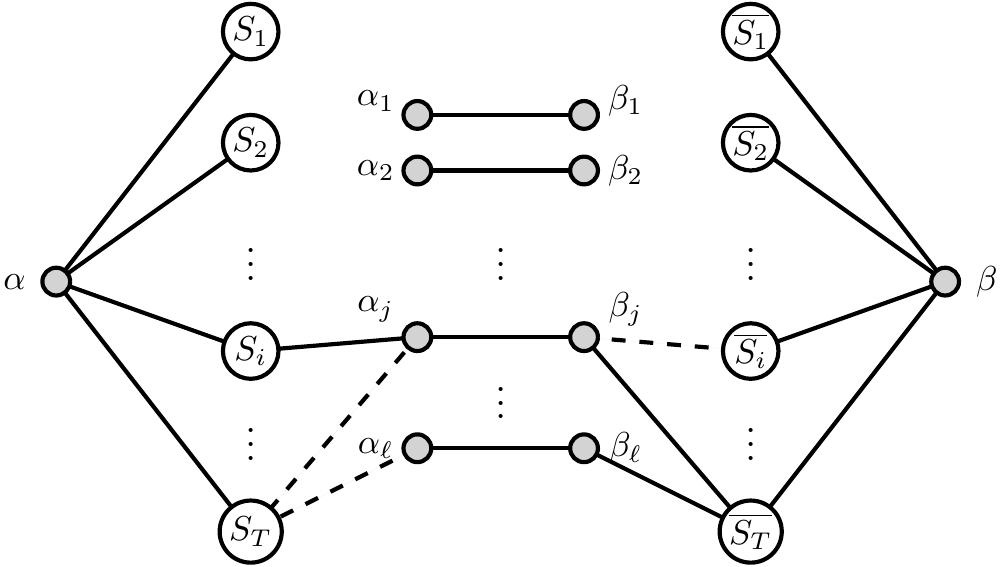}
\end{center}
\caption{\label[figure]{fig:MDS-weighted-bit-gadget} This figure shows the new set gadget we construct inspired by~\cite{BachrachCDELP19}. The lines between $\alpha_j$'s and the $S_i$'s and between
	$\beta_j$'s and $\overline{S_i}$'s indicate membership; bold lines indicate membership and the dashed lines indicate non-membership (e.g., $j \in S_i$, $j \not\in S_T$). Some membership edges are ommitted for clarity.}
\end{figure}
We get the following separation property.
\begin{lemma}
  \label[lemma]{lem:bit-gadget-domination-gap}
The graph \(G^2_{MDS}\) has an MDS of weight \(2\). Moreover, any dominating set of \(G_{MDS}\) that does not pick both \(S_{j}, \overline{S_{j}}\) for some \(j \in \{1, \dots, T\}\) will have size at least \(r\).
\end{lemma}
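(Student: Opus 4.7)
I will handle the two claims separately.

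For the first claim (an MDS of weight $2$), I propose to exhibit the pair $\{S_1, \overline{S_1}\}$ as a dominating set of $G^2_{MDS}$; since both vertices have weight $1$, this yields weight $2$. It remains to verify domination in $G^2$. The vertex $\alpha$ is a $G$-neighbor of $S_1$ and $\beta$ is a $G$-neighbor of $\overline{S_1}$; every other set-vertex $S_j$ lies at $G$-distance $2$ from $S_1$ via $\alpha$, and every $\overline{S_j}$ lies at $G$-distance $2$ from $\overline{S_1}$ via $\beta$; an element vertex $\alpha_i$ with $i \in S_1$ is a $G$-neighbor of $S_1$, while for $i \notin S_1$ we have the 2-path $\overline{S_1} - \beta_i - \alpha_i$; the case of $\beta_i$ is symmetric. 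To argue optimality I would note that a single vertex cannot dominate both $\alpha$ and $\beta$ in $G^2$ (they are at $G$-distance $3$), so weight $2$ is best possible.

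For the second claim, let $D$ be any dominating set of $G^2_{MDS}$ that does not contain both $S_j$ and $\overline{S_j}$ for any $j$, and set $\mathcal{R} = D \cap \{S_j, \overline{S_j} : 1 \le j \le T\}$. I split on $|\mathcal{R}|$. If $|\mathcal{R}| \ge r$, then since each such vertex has weight $1$ we get $w(D) \ge r$ immediately. Otherwise, I would pad $\mathcal{R}$ up to an $r$-element subcollection $\mathcal{R}'$ of $\{S_j, \overline{S_j}\}$ avoiding any complementary pair (possible because $T$ is enormous compared to $r$). The $r$-covering property gives an element $i \in \mathcal{U}$ not covered by $\mathcal{R}'$, and since $\mathcal{R} \subseteq \mathcal{R}'$ the same $i$ is uncovered by $\mathcal{R}$.

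The decisive step is to identify exactly which vertices of $G^2_{MDS}$ can dominate $\alpha_i$. A careful enumeration of $G$-paths of length at most $2$ out of $\alpha_i$ yields
\[
N_{G^2}[\alpha_i] \;=\; \{\alpha_i, \beta_i, \alpha\} \,\cup\, \{S_j : i \in S_j\} \,\cup\, \{\overline{S_j} : i \notin S_j\} \,\cup\, \{\alpha_{i'} : \exists j,\ i, i' \in S_j\};
\]
in particular $\beta \notin N_{G^2}[\alpha_i]$ since all $2$-paths out of $\beta$ end at some $\beta_{i'}$. Because $i$ is uncovered by $\mathcal{R}$, no $S_j$ with $i \in S_j$ and no $\overline{S_j}$ with $i \notin S_j$ belongs to $D$. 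Hence a dominator of $\alpha_i$ in $D$ must be one of $\alpha_i, \beta_i, \alpha$, or some $\alpha_{i'}$, all of which have weight $r$. Therefore $w(D) \ge r$, as claimed.

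The main obstacle I anticipate is precisely this enumeration of $N_{G^2}[\alpha_i]$: one has to check that no two-hop path from $\alpha_i$ sneaks to $\beta$ or to an ``unrelated'' $\overline{S_{j'}}$, and one must also make sure that the padding of $\mathcal{R}$ to $\mathcal{R}'$ is legitimate (i.e.\ that $T \ge r$ so that complementary pairs can be avoided), both of which are ultimately structural observations about the set-gadget construction.
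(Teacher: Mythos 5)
Your proof follows essentially the same route as the paper's: exhibit $\{S_j,\overline{S_j}\}$ as a weight-$2$ dominating set, then reduce the lower bound to the observation that once all weight-$r$ vertices are excluded, only set-vertices remain and the $r$-covering property forces at least $r$ of them. Your version is in fact more careful than the paper's on two points that the paper glosses over: the explicit enumeration of $N_{G^2}[\alpha_i]$ (confirming that $\beta$ and the ``wrong'' set-vertices cannot dominate $\alpha_i$), and the padding of $\mathcal{R}$ up to exactly $r$ sets so that the covering property, as stated for $r$-element collections, actually applies.

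One small slip in the optimality remark for the first claim: $\alpha$ and $\beta$ are at $G$-distance $5$, not $3$ (the shortest path is $\alpha$--$S_j$--$\alpha_i$--$\beta_i$--$\overline{S_{j'}}$--$\beta$), and note that distance $3$ would \emph{not} rule out a common $G^2$-dominator --- for that you need the distance to exceed $4$, which it does. The conclusion therefore stands, but the stated justification should be corrected; alternatively, it suffices to check directly that no single weight-$1$ vertex $S_j$ (resp.\ $\overline{S_j}$) reaches $\beta$ (resp.\ $\alpha$) within two hops.
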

\begin{proof}
  For any index \(i \in \{1, \dots, T\}\), the vertices \(S_{i}\) and \(\overline{S_{i}}\) form a dominating set of weight \(2\). Note that all the \(\alpha_{i}\)'s and \(\beta_{i}\)'s are covered because either \(S_{i}\) or \(\overline{S_{i}}\) is at most two hops away from them due to the set membership edges.

  We can assume that vertices having weight \(r\) cannot be included in the dominating because otherwise the lemma is vacuously true. Therefore, our only option is to include the \(S_{i}\)'s, and the \(\overline{S_{i}}\)'s in the dominating set.

  The \(r\)-covering property ensures that if we do not pick both \(S_{j}, \overline{S_{j}}\) for some \(j \in \{1, \dots, T\}\), then we will have to pick at least \(r\) set vertices to cover all the \(\alpha_{i}\)'s and the \(\beta_{i}\)'s. The lemma follows.
\end{proof}

We are now ready to describe our lower bound graph construction in detail.
We show how to replace bit gadgets in the construction of $H_{x, y}$ described earlier by
set gadgets $G_{MDS}$.
This modification leads to an MDS of constant weight in $H^2_{x, y}$ and more importantly
an MDS of weight 6 if $x$ and $y$ are not disjoint and a weight of 7 otherwise.

\begin{figure}
\begin{center}
\includegraphics[width=.9\linewidth]{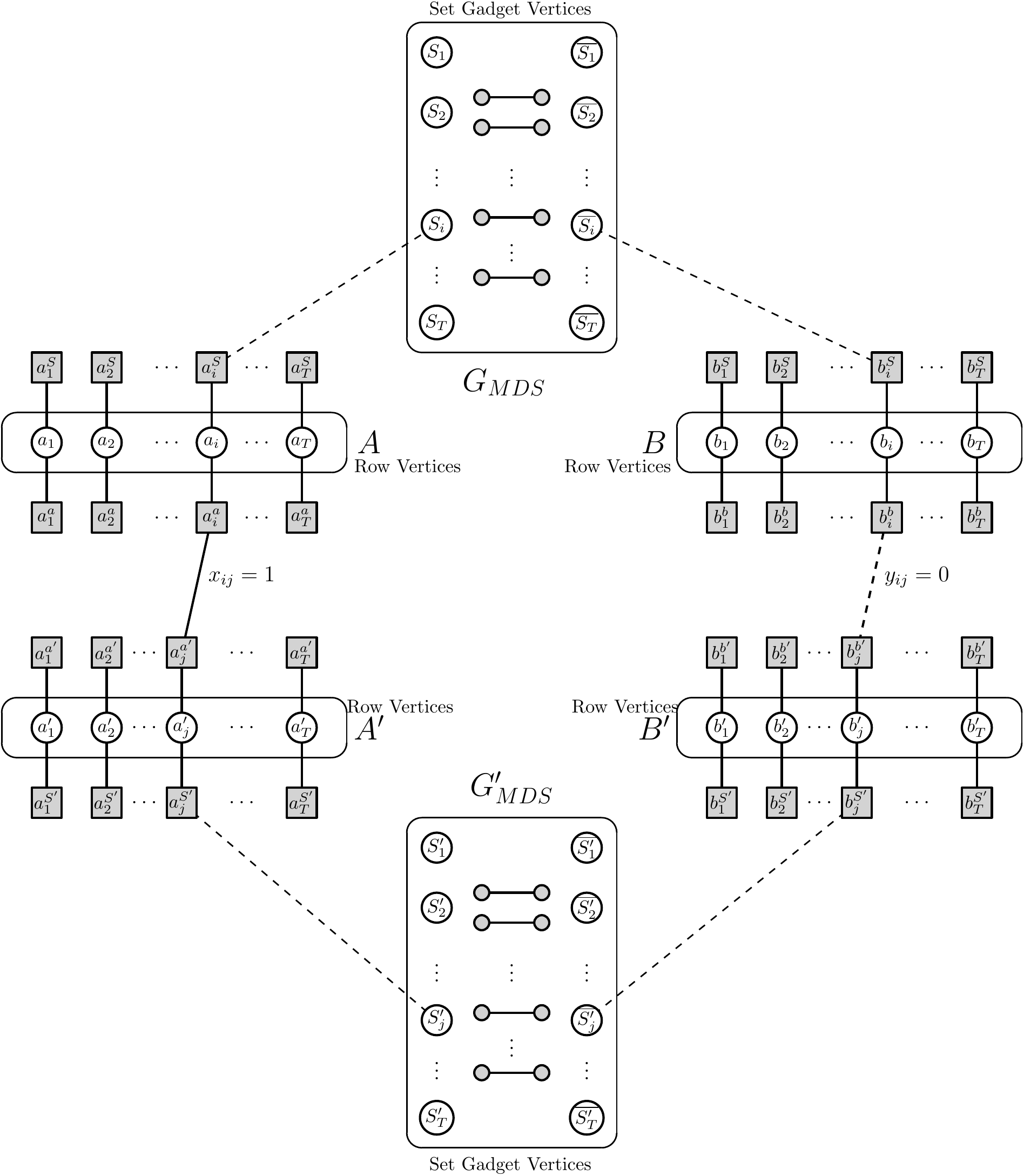}
\end{center}
\caption{\label[figure]{fig:MDS-approximate-lb-graph}This figure shows the lower bound graph \(G_{x,y}\) that shows a quadratic lower bound for computing a \(c\)-approximate solution to MDS for \(c < 7/6\) in the \CONGEST\ model. The dashed edge between a square vertex and a circular vertex \(S_{i}\) means that the square vertex is connected to all \(T-1\) vertices except \(S_{i}\). All square vertices on Alice's and Bob's side are merged path gadgets.}
\end{figure}
\textbf{Fixed Graph Construction:} Our lower bound graph \(H_{x,y}\) consists of four sets of row vertices \(A, A', B, B'\) each of which has \(T\) vertices. There are two copies of the set gadget described above: one connected to \(A, B\), denoted by \(G_{MDS}\), and the other connected to \(A', B'\), denoted by \(G'_{MDS}\). For each vertex \(v \in G_{MDS}\), the corresponding vertex in \(G'_{MDS}\) is named \(v'\).

Each vertex \(a_{i} \in A\), has two shared path gadgets \(A^{a}_{i}\) and \(A^{S}_{i}\), both of which are connected to \(a_{i}\). And similarly, each vertex \(a'_{i} \in A'\) has two shared path gadgets \(A^{a'}_{i}\) and \(A^{S'}_{i}\) which are connected to \(a'_{i}\).

We merge the shared path gadgets \(A^{S}_{i}\), \(A^{a}_{i}\), \(A^{S'}_{i}\), and \(A^{a'}_{i}\) for all \(1 \le i \le T\) to form the merged path gadget \(A_{*}\) having common vertices \(A_{*}[3], A_{*}[4], A_{*}[5]\). The vertex \(A_{*}[3]\) has weight \(0\). The sharing reduces the total number of vertices in the graph, and the merging reduces the weight of the minimum dominating set. Note that \(H_{x,y}\) does not have any dangling path gadgets.

Each \(a^{S}_{i}\) is connected to \(S_{j} \in G_{MDS}\) if \(i \neq j\), and similarly each \(a^{S'}_{i}\) is connected to \(S'_{j} \in G'_{MDS}\) if \(i \neq j\). All vertices on Alice's side except \(\{\alpha_{i}, \alpha'_{i}\}_{i=1}^{\ell}\), \(\alpha\), \(\alpha'\), and \(A_{*}[3]\) have weight 1.
The construction of Bob's side is symmetric, so we do not describe it in this proof.

Alice hosts the vertices in \(A, A'\), the vertices in the merged gadget \(A_{*}\), and the ``left side'' of the set gadgets \(G_{MDS}\) and \(G'_{MDS}\). More formally, the set gadget vertices Alice hosts are \(\alpha\), \(\alpha'\), \(\{S_{i}, S'_{i}\}_{i=1}^{T}\), and \(\{\alpha_{i}, \alpha'_{i}\}_{i=1}^{\ell}\). Bob hosts the rest of the vertices. 

\textbf{Constructing \(H_{x,y}\) given inputs \(x, y \in \{0, 1\}^{T^{2}}\):} We index the strings \(x, y\) by \((i, j) \in T \times T\). We add an edge between \(A^{a}_{i}[1]\) and \(A^{a'}_{j}[1]\) iff \(x_{ij} = 1\) and similarly we add an edge between \(B^{b}_{i}[1]\) and \(B^{b'}_{j}[1]\) iff \(y_{ij} = 1\).

If \(x_{ij} = 1\) then the vertices \(A^{a'}_{j}[1], A^{a}_{i}[1]\) have edges to \(a_{i}\) and \(a'_{j}\) in \(H^{2}_{x,y}\) and if \(x_{ij} = 0\) then no vertex in \(H^{2}_{x,y}\) has an edge to both \(a_{i}\) and \(a'_{j}\). Similarly, if \(y_{ij} = 1\) then the vertices \(B^{b'}_{j}[1], B^{b}_{i}[1]\) have edges to \(b_{i}\) and \(b'_{j}\) in \(H^{2}_{x,y}\) and if \(y_{ij} = 0\) then no vertex in \(H^{2}_{x,y}\) has an edge to both \(b_{i}\) and \(b'_{j}\)

\begin{lemma}\label[lemma]{lem:weighted-mds-disjointness-reduction}
  If \(DISJ_{T^{2}}(x,y) = \mathrm{false}\) then \(H^{2}_{x,y}\) has an MDS of weight \(6\), otherwise any dominating set of \(H^{2}_{x,y}\) has weight at least \(7\).
\end{lemma}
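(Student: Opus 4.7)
The plan is to prove the forward direction (not disjoint $\Rightarrow$ weight $\leq 6$) by an explicit construction, and the backward direction (disjoint $\Rightarrow$ weight $\geq 7$) by a structural case analysis that exploits the $r$-covering property.

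For the forward direction, given $i,j$ with $x_{ij}=y_{ij}=1$, I would exhibit the dominating set
\[
D = \{A_*[3],\, B_*[3],\, S_i,\, \overline{S_i},\, S'_j,\, \overline{S'_j},\, A^{a}_i[1],\, B^{b}_i[1]\},
\]
of total weight $0+0+1+1+1+1+1+1=6$. Verification uses: (i) $A_*[3]$ and $B_*[3]$ dominate their merged-gadget vertices plus every $A^{\sigma}_k[1]$, $B^{\sigma}_k[1]$ at $H$-distance $2$; (ii) $\{S_i,\overline{S_i}\}$ dominates $G_{MDS}$ by \Cref{lem:bit-gadget-domination-gap}, and via the shared-path edges $a^S_k\sim S_i$, $b^S_k\sim \overline{S_i}$ (for $k\neq i$) also dominates $a_k,b_k$ for every $k\neq i$ in $H^2$; (iii) the primed pair does the analogous job for $G'_{MDS}$ and $\{a'_k,b'_k : k\neq j\}$; and (iv) the edges $\{A^a_i[1],A^{a'}_j[1]\}$ (present because $x_{ij}=1$) and $\{B^b_i[1],B^{b'}_j[1]\}$ (because $y_{ij}=1$) make $A^a_i[1]$ $H^2$-adjacent to both $a_i$ and $a'_j$, and $B^b_i[1]$ to both $b_i$ and $b'_j$.

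For the backward direction I would assume, for contradiction, that some dominating set $D$ has weight at most $6$. By \Cref{lem:merged-gadget-vertex-in-mds} we may assume $A_*[3],B_*[3]\in D$ at zero weight cost. Choosing $r\geq 7$ ensures no weight-$r$ vertex ($\alpha, \alpha', \beta, \beta'$ or any $\alpha_k,\beta_k,\alpha'_k,\beta'_k$) can lie in $D$. The argument then proceeds in two stages. \emph{Stage 1:} $D$ must contain a pair $\{S_i,\overline{S_i}\}$ inside $G_{MDS}$ and a pair $\{S'_j,\overline{S'_j}\}$ inside $G'_{MDS}$, for otherwise the $r$-covering property lifted to $H^2$ forces at least $r\geq 7$ weight-$1$ ``set-type'' vertices just to dominate the $\alpha_k$'s and $\beta_k$'s of $G_{MDS}$ (alone already exceeding the budget). \emph{Stage 2:} These two pairs account for weight $4$ and dominate everything except $\{a_i,b_i,a'_j,b'_j\}$, since $S_i$ covers $a_k$ only for $k\neq i$, $\overline{S_i}$ covers $b_k$ only for $k\neq i$, and symmetrically for the primed pair. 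Thus we are left with a remaining budget of at most $2$ to dominate these four row vertices.

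The third stage is to use disjointness to derive the contradiction. Enumerating the weight-$1$ $H^2$-dominators of each of $a_i,b_i,a'_j,b'_j$ yields two observations: (a) no single weight-$1$ vertex is $H^2$-adjacent to both an Alice-side row vertex ($a_i$ or $a'_j$) and a Bob-side row vertex ($b_i$ or $b'_j$), because the Alice-side shared-path gadgets merge only through $A_*$, the Bob-side ones only through $B_*$, and the set-gadget hubs $S_\ell$ and $\overline{S_\ell}$ are joined respectively to only Alice's and only Bob's shared-path gadgets; and (b) a single weight-$1$ vertex dominates both $a_i$ and $a'_j$ only if one of the edges $\{A^a_i[1],A^{a'}_j[1]\}$ exists, i.e.\ $x_{ij}=1$, and symmetrically dominating both $b_i,b'_j$ with one vertex requires $y_{ij}=1$. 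Disjointness precludes $x_{ij}=y_{ij}=1$, so on at least one of the two sides the two row vertices require two distinct dominators, forcing the remaining weight to be at least $3$ and the total weight to $\geq 7$, a contradiction.

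The hard part will be Stage~1: extending the $r$-covering argument from the standalone gadget $G^2_{MDS}$ (\Cref{lem:bit-gadget-domination-gap}) to the full graph $H^2_{x,y}$. In $H^2$, each external vertex $A^S_i[1]$ can dominate every $\alpha_k$ with $k\in\bigcup_{j\neq i}S_j$, so it behaves like the ``macroset'' $\bigcup_{j\neq i}S_j$, and similarly $B^S_i[1]$ acts like the complement macroset on the $\beta_k$ side. The task is to verify that the enlarged set-cover instance consisting of the original $\{S_i\}$, their complements, and these macrosets still satisfies an $r$-covering-style lower bound that forces at least $r$ weight-$1$ choices whenever no matched pair is selected — so that picking $r\geq 7$ in the original construction still guarantees budget exhaustion. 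Once Stage~1 is secured, Stages~2 and~3 are an essentially mechanical neighborhood enumeration combined with the disjointness hypothesis.
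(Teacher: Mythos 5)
Your proposal follows essentially the same route as the paper's proof: the same explicit weight-$6$ dominating set $\{A_{*}[3], B_{*}[3], S_{i}, \overline{S_{i}}, S'_{j}, \overline{S'_{j}}, A^{a}_{i}[1], B^{b}_{i}[1]\}$ in the non-disjoint case, and in the disjoint case the same three steps: the merged-gadget vertices are taken for free (via \Cref{lem:merged-gadget-vertex-in-mds}), weight-$r$ vertices are excluded, matched pairs $S_{i},\overline{S_{i}}$ and $S'_{j},\overline{S'_{j}}$ are forced by the $r$-covering property, and then disjointness forces at least $3$ further weight-$1$ vertices for the four residual row vertices because no weight-$1$ vertex is $H^{2}$-adjacent to both an Alice-side and a Bob-side row vertex, and a common dominator of $a_{i},a'_{j}$ (resp.\ $b_{i},b'_{j}$) exists only if $x_{ij}=1$ (resp.\ $y_{ij}=1$).

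The step you flag as unresolved --- lifting the $r$-covering lower bound from the standalone gadget $G^{2}_{MDS}$ to $H^{2}_{x,y}$ --- is a genuine subtlety, and you have identified it more precisely than the paper does: the paper's proof at this point simply invokes \Cref{lem:bit-gadget-domination-gap}, which is a statement about $G^{2}_{MDS}$ in isolation, whereas in $H^{2}_{x,y}$ each $\alpha_{k}$ acquires the additional weight-$1$ dominators $A^{S}_{l}[1]$ (dominating $\alpha_{k}$ for every $k\in\bigcup_{m\neq l}S_{m}$ --- exactly your ``macroset'' observation), and symmetrically for the $\beta_{k}$. So, as literally stated, your Stage~1 and the paper's corresponding sentence both need further justification: one must argue that any cover exploiting macroset vertices pays at least as much elsewhere (note that $A^{S}_{l}[1]$ dominates no $\beta$-side vertex, no row vertex $a_{m}$ with $m\neq l$, and not even $S_{l}$ itself, so trading a matched pair for macroset vertices does not reduce the total weight), or run an exchange argument reducing to matched pairs --- which is precisely the kind of explicit case analysis the paper is forced to carry out in the unweighted variant (\Cref{lem:mds-disjointness-reduction}). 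Since your proposal leaves this verification open and it is the crux of the weight-$\geq 7$ direction, the proof is incomplete as written; but the plan is sound, the remaining stages are correct, and the missing verification is one the paper's own argument also elides.
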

\begin{proof}

  Note that we can assume that \(A_{*}[3]\) and \(B_{*}[3]\) are in the dominating set because their weight is zero (and also due to \Cref{lem:merged-gadget-vertex-in-mds}). The vertex \(A_{*}[3]\) covers all the vertices \(\{A_{*}[4]\), \(A_{*}[5]\), \(A^{a}_{i}[1,2]\), \(A^{S}_{i}[1,2]\), \(A^{S'}_{i}[1,2]\), \(A^{a'}_{i}[1,2]\}_{i=1}^{T}\) and the vertex \(B_{*}[3]\) covers all the vertices \(\{B_{*}[4]\), \(B_{*}[5]\), \(B^{b}_{i}[1,2]\), \(B^{\overline{S}}_{i}[1,2]\), \(B^{\overline{S'}}_{i}[1,2]\), \(B^{b'}_{i}[1,2]\}_{i=1}^{T}\) without increasing the weight of the dominating set.

  If \(DISJ_{T^{2}}(x,y) = \mathrm{false}\) then there is an index \((i,j)\) such that \(x_{ij} = y_{ij} = 1\). Therefore, we add \(A^{a}_{i}[1], S_{i}, B^{b}_{i}[1], \overline{S_{i}}, S'_{j}, \overline{S'_{j}}\) to the MDS incurring a total cost of \(6\). The vertices \(S_{i}\) and \(\overline{S_{i}}\) together cover all the vertices in the set gadget \(G_{MDS}\) along with all row vertices in \(A,B\) except \(a_{i}\) and \(b_{i}\), whereas the vertices \(S'_{j}\) and \(\overline{S'_{j}}\) cover all vertices in the set gadget \(G'_{MDS}\), along with all row vertices in \(A',B'\) except \(a'_{j}\) and \(b'_{j}\). Since \(x_{ij} = y_{ij} = 1\), the vertex \(A^{a}_{i}[1]\) covers both \(a_{i}\) and \(a'_{j}\), and the vertex \(B^{b}_{i}[1]\) covers both \(b_{i}\) and \(b'_{j}\). This means all the vertices in \(H^{2}_{x,y}\) are dominated by a set of weight \(6\) and hence the MDS of \(H^{2}_{x,y}\) has weight at most \(6\) when \(DISJ_{T^{2}}(x,y) = \mathrm{false}\).

  Now we look at the case when \(DISJ_{T^{2}}(x,y) = \mathrm{true}\). In this case, we assume we cannot pick vertex of weight \(r\) in the dominating set because if we do, we immediately get a dominating set of weight at least \(r\). And then the lemma follows because \(r\) was set to some arbitrarily large constant.

  In order to cover all the set gadget vertices in \(G_{MDS}\) and \(G'_{MDS}\), we need to pick \(S_{i}, \overline{S_{i}}, S'_{j}, \overline{S'_{j}}\) for some \(i, j \in \{1, \dots, T\}\). Otherwise we incur a cost of at least \(r\) by \Cref{lem:bit-gadget-domination-gap} for covering vertices in the two set gadgets.

  Now, the only vertices that are left uncovered are \(a_{i}, a'_{j}, b_{i}\), and \(b'_{j}\). Note that since \(DISJ_{T^{2}}(x,y) = \mathrm{true}\), there is no \((i,j)\) such that both \(x_{ij}\) and \(y_{ij}\) are \(1\). Without loss of generality assume \(x_{ij}=0\), therefore there is no vertex in \(H^{2}_{x,y}\) that has an edge to both \(a_{i}\) and \(a'_{j}\). Therefore, we need to pick at least \(2\) vertices in \(H^{2}_{x,y}\) to cover \(a_{i}\) and \(a'_{j}\). And neither of these two vertices will have an edge to \(b_{i}\) and \(b'_{j}\) so we need to pick at least \(3\) vertices to cover all the four vertices. The only vertices with weight less than \(1\) are \(A_{*}[3]\) and \(B_{*}[3]\) which don't cover any of these four vertices. Therefore, every dominating set has to have weight at least \(7\).
\end{proof}

\begin{proof}[Proof of \Cref{thm:approx-MWDS-lower-bound-main-theorem}]
Let \(V_{A} = A \cup A' \cup S \cup S' \cup \{\alpha_{i} \mid 1 \le i \le \ell\} \cup A_{*}\) and \(V_{B} = V \setminus V_{A}\). With these definitions of \(V_{A}\) and \(V_{B}\), the size of the cut \((V_{A},V_{B})\) is at most \(O(\ell) = O(\log T)\). Let \(P\) be the predicate that a graph has a minimum dominating set of weight at least \(7\). \Cref{lem:weighted-mds-disjointness-reduction} implies that \(H^{2}_{x,y}\) is a family of lower bound graphs with respect to the function \(DISJ\) and the predicate \(P\).

Therefore, \Cref{thm: general lb framework} gives an \(\tilde{\Omega}(T^2)\) lower bound for the problem of distinguishing between the case when a graph with \(O(T)\) vertices has a dominating set of weight at least \(7\) and the case when it has a dominating set of weight at most \(6\). This gives a lower bound for approximation factor \(c < 7/6\) which completes the proof of \Cref{thm:approx-MWDS-lower-bound-main-theorem}.
\end{proof}

\subsection{Quadratic Lower bound for $O(1)$-approximate $G^2$-MDS}\label{app:ApproximateMDSLB-unweighted}
The previous lower bound used weights in order to simplify the construction and proofs. In this section, we provide some modifications to get the same lower bound for unweighted MDS. In particular, we will prove the following theorem.

\begin{theorem}
  \label[theorem]{thm:approx-MDS-lower-bound-main-theorem}
  Any distributed algorithm in the \CONGEST\ model which, given an input graph \(G\), produces a \(c\)-approximate solution to the minimum unweighted dominating set problem on \(G^{2}\) for \(c < 9/8\) requires \(\tilde{\Omega}(n^{2})\) rounds.
\end{theorem}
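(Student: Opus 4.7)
The plan is to adapt the weighted lower-bound graph of \Cref{thm:approx-MWDS-lower-bound-main-theorem} into an unweighted family $H_{x,y}$ with the same two-party structure, so that the minimum (unweighted) $G^2$-MDS size is exactly $8$ when $DISJ_{T^2}(x,y)=\mathrm{false}$ and at least $9$ when $DISJ_{T^2}(x,y)=\mathrm{true}$. This $8$-vs-$9$ gap is precisely the $9/8$ claimed, and combining it with the same $O(\log T)$-cut bookkeeping used in the weighted proof and with \Cref{thm: general lb framework} then yields the theorem.

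\textbf{Shifting the target counts.} The weighted MDS had weight $6$ in the non-disjoint case and weight $\geq 7$ in the disjoint case, with the two weight-$0$ vertices $A_*[3]$ and $B_*[3]$ contributing nothing. Dropping weights turns these into two ordinary vertices of cost $1$, which by (the unweighted analogue of) \Cref{lem:merged-gadget-vertex-in-mds} may be assumed present in any MDS, shifting both targets up by $2$ to $8$ and $9$ respectively, which is exactly the $9/8$ ratio we want.

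\textbf{Replacing the weight-$r$ vertices.} In the weighted proof the vertices $\alpha_i,\beta_i,\alpha,\beta$ and their primed analogues were assigned weight $r$ solely so that an approximate solution would never select them. In the unweighted construction I attach a constant-size \emph{penalty gadget} to each such vertex (a few pendant vertices or a tiny dangling path, whose internal structure is chosen so that dominating it does not require $v$ itself), and I choose the $r$-covering parameter $r$ of the underlying set system to be a sufficiently large constant so that \Cref{lem:bit-gadget-domination-gap} still forces $r$ set vertices on a side whenever we fail to pick a complementary pair $(S_i,\overline{S_i})$. The penalty gadget is designed so that a local-exchange argument in the spirit of \Cref{lem:middle-dangling-gadget-in-mds} and \Cref{lem:no-other-dangling-gadget-in-mds} transforms any MDS of $H^2_{x,y}$ into an equal-size MDS that contains no formerly-weight-$r$ vertex.

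\textbf{Case analysis and conclusion.} With MDSes in this normal form, the argument of \Cref{lem:weighted-mds-disjointness-reduction} carries through almost verbatim. For $DISJ=\mathrm{false}$, the explicit set $\{A_*[3],B_*[3],A^{a}_i[1],S_i,\overline{S_i},B^{b}_i[1],S'_j,\overline{S'_j}\}$ of size $8$ dominates $H^2_{x,y}$ for a witness $(i,j)$ with $x_{ij}=y_{ij}=1$. For $DISJ=\mathrm{true}$, any normal-form MDS must spend $\geq 2$ vertices on the merged path-gadget interiors (one each in $A_*$ and $B_*$), $\geq 4$ vertices on the two set gadgets (by the $r$-covering property once $\alpha$/$\beta$-vertices are excluded), and $\geq 3$ further vertices on $\{a_i,a'_j,b_i,b'_j\}$, since no single vertex dominates both $a_i$ and $a'_j$ once $x_{ij}=0$ (and likewise for $b_i,b'_j$), totalling $2+4+3=9$. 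The penalty gadgets contribute only $O(T)$ extra vertices and introduce no new cut edges between the Alice and Bob sides, so $|V(H_{x,y})|=O(T)$ and the cut is still $O(\log T)$; applying \Cref{thm: general lb framework} with the predicate ``$G^2$-MDS of $H_{x,y}$ has size at most $8$'' and $f=DISJ_{T^2}$ yields $\tilde{\Omega}(T^2)=\tilde{\Omega}(n^2)$ for any $c<9/8$.

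The main obstacle I anticipate is Step 2: in $G^2$ a pendant attached to $v$ is automatically adjacent to every $G$-neighbor of $v$, so it is often dominated ``for free'' by vertices the MDS already uses elsewhere, which makes naive penalty gadgets ineffective. Pinning down a constant-size gadget whose exchange lemma handles every way a nearby set vertex, row vertex, shared-path vertex, or primed analogue might accidentally dominate the penalty, and doing so uniformly for all of $\alpha_i,\beta_i,\alpha,\beta$ and their primed copies, is the delicate part of the argument.
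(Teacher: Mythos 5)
Your overall framing is right: the target is an $8$-versus-$9$ gap obtained by making $A_*[3]$ and $B_*[3]$ cost $1$ each, and the cut/size bookkeeping plus \Cref{thm: general lb framework} then gives the bound. But the step you yourself flag as delicate --- replacing the weight-$r$ vertices by ``penalty gadgets'' --- is a genuine gap, and I do not think it can be repaired in the form you propose. In the unweighted setting you cannot make a vertex expensive to pick (it always costs $1$), so a gadget attached to $\alpha_j$ can only help if it either (a) is dominated for free by whatever the intended solution already contains, in which case it imposes no penalty at all on solutions that \emph{do} pick $\alpha_j$, or (b) requires its own dominator, in which case the $\Theta(\ell)=\Theta(\log T)$ gadgets inflate every dominating set by $\Omega(\log T)$ and destroy the constant-size gap. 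Even option (a) already breaks the upper bound: a pendant on $\alpha_j$ is adjacent in $H^2$ only to $\alpha_j$, $\beta_j$, and the $S_i$ with $j\in S_i$, so for the witness pair $(S_i,\overline{S_i})$ the pendants of all $\alpha_j$ with $j\notin S_i$ are left uncovered ($\overline{S_i}$ reaches $\alpha_j$ at distance $2$ but its pendant only at distance $3$), and the size-$8$ solution in the $DISJ=\mathrm{false}$ case no longer dominates the graph. More fundamentally, the goal should not be to exclude the former weight-$r$ vertices from the MDS by an exchange lemma; it should be to show that including them never helps, which is a counting statement.

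The paper resolves this by restructuring the set gadget rather than by penalizing its vertices: it deletes $\alpha,\beta,\alpha',\beta'$ outright and attaches each $S_i$ (resp.\ $\overline{S_i}$, and primed copies) to a fresh vertex $q_i$ (resp.\ $\overline{q_i}$) that is also adjacent to $A_*[3]$ (resp.\ $B_*[3]$). Since $A_*[3]$ and $B_*[3]$ are forced into any normal-form MDS anyway (\Cref{lem:merged-gadget-vertex-in-mds}), they dominate all set vertices and all $q$-vertices for free, so the only set-gadget vertices left to cover are $U=\{\alpha_i,\beta_i\}_{i=1}^{\ell}$ and $U'$. The possibility of picking $\alpha_j$, $\beta_j$, or $q_i$ directly is then handled not by a gadget but by a short exchange argument inside \Cref{lem:mds-disjointness-reduction}: covering $U$ with a set vertex plus one $\alpha_j$ or $\beta_j$ is never better than picking the complementary pair $S,\overline{S}$, and picking $q_i$ is dominated by picking $S_i$. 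That restructuring is the missing idea in your write-up; without it (or an equivalent mechanism), the case analysis in your third step does not close.
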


The only modification we need is to the set gadgets \(G_{MDS}\) and \(G'_{MDS}\). In order to do this, we remove the vertices \(\alpha, \beta, \alpha', \beta'\). We connect each \(S_{i}\) to a new vertex \(q_{i}\), each \(S'_{i}\) to a new vertex \(q'_{i}\), each \(\overline{S_{i}}\) to a new vertex \(\overline{q_{i}}\), and each \(\overline{S'_{i}}\) to a new vertex \(\overline{q'_{i}}\). The vertices \(q_{i}, q'_{i}\) are connected to \(A_{*}[3]\), and the vertices \(\overline{q_{i}}, \overline{q'_{i}}\) are connected to \(B_{*}[3]\) for each \(1 \le i \le T\). Since the merged path gadgets in \(H^{2}_{x,y}\) come from only shared path gadgets, we can show the following variant of \Cref{lem:no-other-dangling-gadget-in-mds}. Since the proof is similar, we skip it.

\begin{lemma}\label[lemma]{lem:no-other-merged-shared-gadget-in-mds}
  We can assume w.l.o.g.~that the vertices \(A_{*}[4], A_{*}[5]\) of the merged path gadget \(A_{*}\), and the vertices \(S[2]\) where \(S \in \{A^{S}_{i}, A^{a}_{i}, A^{S'}_{i}, A^{a'}_{i}\}_{i=1}^{T}\) do not belong to the MDS of \(H^{2}_{x,y}\). A similar statement holds for the merged path gadget \(B_{*}\)
\end{lemma}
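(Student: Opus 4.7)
The plan is to mimic the normal-form argument of \Cref{lem:no-other-dangling-gadget-in-mds}, adapted to the merged shared-path gadget $A_*$. Let $S_H$ be any MDS of $H^{2}_{x,y}$. By \Cref{lem:merged-gadget-vertex-in-mds} we may assume that the (zero-weight) vertex $A_*[3]$ already belongs to $S_H$; the argument below only ever \emph{removes} or \emph{exchanges} vertices of $S_H$, so the size never grows and $A_*[3]$ stays in.

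First I would handle the two ``outer'' path vertices $A_*[4]$ and $A_*[5]$. In $H$ the vertex $A_*[5]$ has only $A_*[4]$ as a neighbor, and $A_*[4]$ has only $A_*[3]$ and $A_*[5]$ as neighbors, so the closed $2$-neighborhoods of $A_*[4]$ and $A_*[5]$ in $H^2$ are both contained in $\{A_*[3],A_*[4],A_*[5]\}\cup\{S[1],S[2]:S\in\mathcal{C}\}$, where $\mathcal{C}=\{A^S_i,A^a_i,A^{S'}_i,A^{a'}_i\}_{i=1}^T$ denotes the shared path gadgets merged into $A_*$. Every vertex in this set is already at distance at most $2$ from $A_*[3]$ in $H$, hence covered by $A_*[3]$ in $H^2$. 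Therefore removing $A_*[4]$ and $A_*[5]$ from $S_H$ (if present) leaves a dominating set of no greater size.

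Next I would handle the vertices $S[2]$ for $S\in\mathcal{C}$. Fix such an $S$, attached to a row vertex $r(S)\in A\cup A'$ via $S[1]$. In $H$ the neighborhood of $S[2]$ is $\{S[1],A_*[3]\}$, so its closed $2$-neighborhood in $H^2$ is $\{S[2],S[1],A_*[3],r(S),A_*[4]\}\cup\{S'[2]:S'\in\mathcal{C}\}$. Among these, $S[1],A_*[4]$, every $S'[2]$ and $A_*[3]$ itself are already dominated by $A_*[3]\in S_H$; the only vertex whose coverage might depend on $S[2]$ is the row vertex $r(S)$. Since $r(S)$ is adjacent to $S[1]$ in $H$, adding $r(S)$ to $S_H$ covers (in $H^2$) everything that $S[2]$ covered, together with $r(S)$'s own $2$-neighborhood. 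Hence if $S[2]\in S_H$ I would exchange it for $r(S)$, obtaining a dominating set of $H^{2}_{x,y}$ of size at most $|S_H|$ that does not contain $S[2]$. Applying this exchange independently for every $S\in\mathcal{C}$ yields the claim; the symmetric argument for $B_*$ is identical. The only place requiring any care (and the main bookkeeping obstacle) is verifying that the exchanges for different $S$'s do not interfere, which follows because each exchange only touches $S[2]$ and $r(S)$, and distinct gadgets $S$ are attached to distinct row vertices.
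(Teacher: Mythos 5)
The paper itself omits this proof, stating only that it is ``similar'' to \Cref{lem:no-other-dangling-gadget-in-mds}, and your exchange argument (delete $A_*[4],A_*[5]$; swap each $S[2]$ for its row vertex, using that $A_*[3]$ already dominates essentially everything near the gadget) is exactly the intended adaptation. So the approach matches and the conclusion is sound, but your neighborhood bookkeeping has several concrete omissions that you would need to patch for the proof to be airtight. First, in the unweighted construction each $S_i$ is attached to a pendant $q_i$ (and $S'_i$ to $q'_i$) with $q_i,q'_i$ adjacent to $A_*[3]$, so $N^2[A_*[4]]$ and every $N^2[S[2]]$ also contain all the $q_i,q'_i$ --- these are missing from your enumerations. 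Second, $S[1]$ has neighbors beyond the row vertex: $A^S_i[1]$ is adjacent to all $S_j$ with $j\neq i$, and $A^a_i[1]$ is adjacent to the $A^{a'}_j[1]$ with $x_{ij}=1$, so $N^2[A^S_i[2]]$ contains those set-gadget vertices and $N^2[A^a_i[2]]$ contains those cross vertices; your claim that $r(S)$ is the only vertex whose coverage might depend on $S[2]$ is therefore not justified as written. The argument survives because every one of these omitted vertices is within distance $2$ of $A_*[3]$ in $H$ ($q_i$ directly, $S_j$ via $q_j$, and $A^{a'}_j[1]$ via $A^{a'}_j[2]$), but that is precisely the verification the lemma exists to record, so it should be stated. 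Finally, your closing remark that distinct gadgets are attached to distinct row vertices is false ($A^a_i$ and $A^S_i$ are both attached to $a_i$); the exchanges still compose, indeed more easily, since two swapped $S[2]$'s may map to the same row vertex and the set only shrinks, but the justification you give is the wrong one.
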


Therefore, we can show the following lemma which implies \Cref{thm:approx-MDS-lower-bound-main-theorem}.

\begin{lemma}\label[lemma]{lem:mds-disjointness-reduction}
  If \(DISJ_{T^{2}}(x,y) = \mathrm{false}\) then \(H^{2}_{x,y}\) has an MDS of weight at most \(8\) and otherwise any dominating set of \(H^{2}_{x,y}\) has weight at least \(9\).
\end{lemma}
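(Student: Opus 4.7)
The proof mirrors that of the weighted analogue \Cref{lem:weighted-mds-disjointness-reduction}: since the vertices $A_*[3]$ and $B_*[3]$ (previously of weight $0$) now count $1$ each, both bounds shift upward by exactly $2$. By \Cref{lem:merged-gadget-vertex-in-mds} and \Cref{lem:no-other-merged-shared-gadget-in-mds} we may always assume an MDS $D$ of $H^2_{x,y}$ contains $A_*[3], B_*[3]$ and no other merged/shared path vertex of index in $\{2,4,5\}$. A simple swapping argument lets me further assume $D$ avoids the auxiliary vertices $q_k, q'_k, \overline{q_k}, \overline{q'_k}$: given $A_*[3] \in D$, every vertex $G^2$-dominated by $q_k$ and not already dominated by $A_*[3]$ is also $G^2$-dominated by $S_k$, so any $q_k \in D$ can be replaced by $S_k$ (or simply deleted if $S_k \in D$) without breaking coverage; symmetric statements hold for the other $q$-vertices.

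For the upper bound, assume $\mathrm{DISJ}_{T^2}(x,y) = \mathrm{false}$ so that $x_{ij}=y_{ij}=1$ for some $(i,j)$. I claim that
\[
D = \{A_*[3],\, B_*[3],\, S_i,\, \overline{S_i},\, S'_j,\, \overline{S'_j},\, A^a_i[1],\, B^b_i[1]\}
\]
of size $8$ dominates $H^2_{x,y}$. The argument copies the one in the weighted lemma, with two additions: first, $A_*[3]$ now $G^2$-dominates every set-vertex $S_k, S'_k$ via the length-$2$ $G$-paths $A_*[3] - q_k - S_k$ and $A_*[3] - q'_k - S'_k$ introduced precisely for this purpose (symmetrically for $B_*[3]$); second, the pair $S_i, \overline{S_i}$ still covers every $\alpha_k, \beta_k$ via the set system and handles every row vertex $a_k, b_k$ with $k\neq i$ through the shared-path gadgets, while $A^a_i[1]$ dominates $\{a_i, a'_j\}$ and $B^b_i[1]$ dominates $\{b_i, b'_j\}$ because $x_{ij}=y_{ij}=1$.

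For the lower bound suppose $\mathrm{DISJ}_{T^2}(x,y) = \mathrm{true}$ and that some $D$ with $|D|\leq 8$ dominates $H^2_{x,y}$. After the WLOG reductions, $2$ vertices are consumed by $A_*[3], B_*[3]$, leaving $6$. By an adaptation of the $r$-covering argument underlying \Cref{lem:bit-gadget-domination-gap}, the only way to dominate $\{\alpha_k, \beta_k\}_{k=1}^\ell$ with $o(r)$ further vertices is to include a complementary pair $S_{i_0}, \overline{S_{i_0}}$ for some $i_0$; symmetrically $G'_{MDS}$ forces a pair $S'_{j_0}, \overline{S'_{j_0}}$. These account for $4$ more vertices, leaving $2$. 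The four selected set-vertices then dominate every row vertex $a_k, b_k, a'_k, b'_k$ with $k \notin \{i_0, j_0\}$ through the shared gadgets but leave $a_{i_0}, b_{i_0}, a'_{j_0}, b'_{j_0}$ undominated. Disjointness yields $x_{i_0 j_0}=0$ or $y_{i_0 j_0}=0$; WLOG $x_{i_0 j_0}=0$. Exactly as in the weighted case, no single vertex is $G^2$-adjacent to both $a_{i_0}$ and $a'_{j_0}$, so two distinct dominators are needed for that pair, and at least one further vertex is needed to cover one of $b_{i_0}, b'_{j_0}$. Hence $|D| \geq 2 + 4 + 2 + 1 = 9$, a contradiction.

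The main obstacle is to rigorously adapt the $r$-covering property to $H^2_{x,y}$. The subtlety is that vertices outside $V(G_{MDS})$, specifically the $A^S_i[1]$ (resp.\ $A^{\overline{S}}_i[1]$), are $G^2$-adjacent to many $\alpha_k$'s (resp.\ $\beta_k$'s) via length-$2$ paths $A^S_i[1] - S_j - \alpha_k$ with $j\neq i$ and $k\in S_j$, so they could in principle replace set-vertices in covering $G_{MDS}$. The plan is to show that any such $A^S_i[1] \in D$ can be exchanged for an honest set-vertex with no loss in coverage of $\{\alpha_k, \beta_k\}$ or of the row vertices: concretely, by the definition of the $r$-covering sets any subfamily that covers $\{\alpha_k, \beta_k\}$ without containing a complementary pair has size at least $r$, and this bound transfers to the $G^2$ setting via the same trade argument already used for the $q$-vertices, so that the pair-forcing step of the lower bound survives.
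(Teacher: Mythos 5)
Your proposal follows the paper's proof almost line for line: the same eight-vertex dominating set $\{A_{*}[3], B_{*}[3], S_{i}, \overline{S_{i}}, S'_{j}, \overline{S'_{j}}, A^{a}_{i}[1], B^{b}_{i}[1]\}$ for the upper bound, and the same $2+4+3$ accounting (forced merged-gadget vertices, a forced complementary pair per set gadget, then three more vertices for $a_{i_0}, a'_{j_0}, b_{i_0}, b'_{j_0}$ using $x_{i_0 j_0}=0$ or $y_{i_0 j_0}=0$) for the lower bound. The swap eliminating the $q_k$-vertices is correct and is in fact slightly more careful than the paper's phrasing, since the $G^2$-neighbourhood of $q_k$ is contained in that of $\{A_*[3]\}\cup\{S_k\}$.

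The one place where you go beyond the paper is also where your argument has a real gap. You correctly identify that $A^{S}_{i}[1]$ is $G^2$-adjacent to every $\alpha_k$ with $k\in\bigcup_{j\neq i}S_j$, but your proposed repair --- exchanging $A^{S}_{i}[1]$ for ``an honest set-vertex with no loss in coverage'' --- cannot work: that union ranges over $T-1$ sets and will in general be all of $\mathcal{U}$, so $A^{S}_{i}[1]$ covers strictly more of the $\alpha_k$'s than any \emph{single} $S_j$ does, and the subset-of-coverage trade that worked for the $q$-vertices does not transfer. (The paper itself is terse here: it asserts, citing \Cref{lem:no-other-merged-shared-gadget-in-mds}, that $U$ can only be covered from inside $G_{MDS}$, but that lemma only excludes index-$2,4,5$ gadget vertices, not the index-$1$ vertices $A^{S}_{i}[1]$, $B^{\overline{S}}_{i}[1]$, etc.) A correct way to close this step is not an exchange but a counting argument on the row vertices: a budget of six non-forced vertices must dominate the four pairwise ``dominator-disjoint'' families $\{a_k\}$, $\{b_k\}$, $\{a'_k\}$, $\{b'_k\}$ (no vertex of $H_{x,y}$ is $G^2$-adjacent to both an $a$-type and a $b$-type row vertex, and only set vertices dominate more than $O(1)$ rows of a given family), so spending index-$1$ shared-gadget vertices on $U\cup U'$ starves the row vertices and still forces total size at least $9$. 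As written, the pair-forcing step of your lower bound is not established.
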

\begin{proof}
  Note that we can assume that \(A_{*}[3]\) and \(B_{*}[3]\) are in the dominating set due to \Cref{lem:merged-gadget-vertex-in-mds}. The vertex \(A_{*}[3]\) covers all the vertices \(\{A_{*}[4]\), \(A_{*}[5]\), \(A^{a}_{i}[1,2]\), \(A^{S}_{i}[1,2]\), \(A^{S'}_{i}[1,2]\), \(A^{a'}_{i}[1,2]\), \(S_{i}\), \(S'_{i}\), \(q_{i}\), \(q'_{i}\}_{i=1}^{T}\) and the vertex \(B_{*}[3]\) covers all the vertices \(\{B_{*}[4]\), \(B_{*}[5]\), \(B^{b}_{i}[1,2]\), \(B^{\overline{S}}_{i}[1,2]\), \(B^{\overline{S'}}_{i}[1,2]\), \(B^{b'}_{i}[1,2]\), \(\overline{S_{i}}\), \(\overline{S'_{i}}\), \(\overline{q_{i}}\), \(\overline{q'_{i}}\}_{i=1}^{T}\).

  If \(DISJ_{T^{2}}(x,y) = \mathrm{false}\) then there is an index \((i,j)\) such that \(x_{ij} = y_{ij} = 1\). Therefore, we add \(A^{a}_{i}[1], S_{i}, B^{b}_{i}[1], \overline{S_{i}}, S'_{j}, \overline{S'_{j}}\) to the MDS incurring a total cost of \(8\). The vertices \(S_{i}\) and \(\overline{S_{i}}\) together cover all the uncovered vertices in the set gadget \(G_{MDS}\) along with all row vertices in \(A,B\) except \(a_{i}\) and \(b_{i}\), whereas the vertices \(S'_{j}\) and \(\overline{S'_{j}}\) cover all the uncovered vertices in the set gadget \(G'_{MDS}\), along with all row vertices in \(A',B'\) except \(a'_{j}\) and \(b'_{j}\). Since \(x_{ij} = y_{ij} = 1\), the vertex \(A^{a}_{i}[1]\) covers both \(a_{i}\) and \(a'_{j}\), and the vertex \(B^{b}_{i}[1]\) covers both \(b_{i}\) and \(b'_{j}\). This means all the vertices in \(H^{2}_{x,y}\) are dominated by a set of weight \(6\) and hence the MDS of \(H^{2}_{x,y}\) has weight at most \(8\) when \(DISJ_{T^{2}}(x,y) = \mathrm{false}\).

  Now we look at the case when \(DISJ_{T^{2}}(x,y) = \mathrm{true}\). Consider the uncovered set gadget vertices in \(G_{MDS}\) which are \(U = {\{\alpha_{i}, \beta_{i}\}}_{i=1}^{\ell}\) and \(G'_{MDS}\) which are \(U' = {\{\alpha'_{i}, \beta'_{i}\}}_{i=1}^{\ell}\). Notice that by \Cref{lem:no-other-merged-shared-gadget-in-mds}, we can assume that \(U\) can only be covered by vertices in \(G_{MDS}\), and \(U'\) can only be covered by vertices in \(G'_{MDS}\).

  The sets \(U, U'\) can be covered using \(4\) vertices: \(S_{i}, \overline{S_{i}}, S'_{j}, \overline{S'_{j}}\) for some \(i, j \in \{1, \dots, T\}\). We cannot cover all the vertices of \(G_{MDS}\) and \(G'_{MDS}\) using fewer than \(4\) vertices, as it would require using at most one vertex to cover all vertices in either \(G_{MDS}\) or \(G'_{MDS}\). This is not possible since no single vertex covers all vertices \({\{\alpha_{i}, \beta_{i}\}}_{i=1}^{\ell}\) in \(G_{MDS}\), and \({\{\alpha'_{i}, \beta'_{i}\}}_{i=1}^{\ell}\) in \(G'_{MDS}\) by the construction and the \(r\)-covering property.

  Note that there are other ways of covering \(G_{MDS}\), and \(G'_{MDS}\) using exactly \(2\) vertices each. If we pick \(q_{i}\) (or \(\overline{q_{i}}\)), it is better to pick \(S_{i}\) (or \(\overline{S_{i}}\)) since it covers more vertices of \(U\). The \(r\)-covering property guarantees that no single set covers all the \(\ell\) elements. But we can also cover \(U\) by picking a vertex \(S \in \{S_{i}, \overline{S_{i}}_{i=1}^{T}\}\) which covers all but one element \(j \in \{1, \dots, \ell\}\) along with either \(\alpha_{j}\) or \(\beta_{j}\). But this is equivalent to picking \(S\) and \(\overline{S}\) because \(\overline{S}\) will cover both \(\alpha_{j}\) and \(\beta_{j}\), along with many other row vertices.

  Therefore, we can assume w.l.o.g.~that the \(4\) vertices used to cover the set gadgets are \(S_{i}, \overline{S_{i}}, S'_{j}, \overline{S'_{j}}\) for some \(i, j \in \{1, \dots, T\}\).

  Now, the only vertices that are left uncovered are \(a_{i}, a'_{j}, b_{i}\), and \(b'_{j}\). Note that since \(DISJ_{T^{2}}(x,y) = \mathrm{true}\), there is no \((i,j)\) such that both \(x_{ij}\) and \(y_{ij}\) are \(1\). Without loss of generality, assume \(x_{ij}=0\), therefore there is no vertex in \(H^{2}_{x,y}\) that has an edge to both \(a_{i}\) and \(a'_{j}\). Therefore, we need to pick at least \(2\) vertices in \(H^{2}_{x,y}\) to cover \(a_{i}\) and \(a'_{j}\). Neither of these two vertices will have an edge to \(b_{i}\) and \(b'_{j}\) so we need to pick at least \(3\) vertices to cover all the four vertices. Therefore, every dominating set has to have size at least \(9\).
\end{proof}

\begin{proof}[Proof of \Cref{thm:approx-MDS-lower-bound-main-theorem}]
Let \(V_{A} = A \cup A' \cup S \cup S' \cup \{\alpha_{i} \mid 1 \le i \le \ell\} \cup A_{*}\) and \(V_{B} = V \setminus V_{A}\). With these definitions of \(V_{A}\) and \(V_{B}\), the size of the cut \(E(V_{A},V_{B})\) is at most \(O(\ell) = O(\log T)\). Let \(P\) be the predicate that a graph has a minimum dominating set of size at least \(9\). \Cref{lem:mds-disjointness-reduction} implies that \(H^{2}_{x,y}\) is a family of lower bound graphs with respect to the function \(DISJ\) and the predicate \(P\).

Therefore, \Cref{thm: general lb framework} gives an \(\tilde{\Omega}(T^2)\) lower bound for the problem of distinguishing between the case when a graph with \(O(T)\) vertices has a dominating set of size at least \(9\) and the case when it has a dominating set of size at most \(8\). This gives a lower bound for approximation factor \(c < 9/8\) which completes the proof of \Cref{thm:approx-MDS-lower-bound-main-theorem}.
\end{proof}
 
\section{Centralized Hardness Results for $G^2$-MVC and $G^2$-MDS}\label{sec:centHardness}
In the following theorem we show that using a dangling-path gadget as in Theorem~\ref{thm:MWVC-lower-bound-main-theorem} gives that MVC is NP-complete on $G^2$, and a simplified version of the proof of Theorem~\ref{theorem:HtoG} gives that there is no FPTAS for MVC on $G^2$ unless $P=NP$. 
\begin{theorem}\label{thm:MVC-centralized-FPTAS-hard}[No FPTAS for $G^2$-MVC]
Given input graph $G$, solving $G^2$-MVC exactly is NP-complete. 
Moreover, there is no FPTAS for $G^2$-MVC unless $P = NP$, i.e., there is no family of algorithms $\{A_\eps \mid \eps>0\}$ such that algorithm $A_\eps$ runs in time $\poly(n, \frac{1}{\eps})$ and yields a $(1 + \eps)$-approximation for MVC on $G^2$, unless $P = NP$.
\end{theorem}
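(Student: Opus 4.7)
The plan is to reduce the classical $G$-MVC problem (known to be NP-complete) to $G^{2}$-MVC using the dangling-path gadget $DP_{e}$ introduced in \Cref{sec:orgcb69afc}. Given an input $G=(V,E)$ with $|V|=n$ and $|E|=m$, I construct $H$ by replacing every edge $e=\{u,v\}\in E$ with a fresh dangling-path gadget: three new vertices $DP_{e}[1],DP_{e}[2],DP_{e}[3]$ forming the path $DP_{e}[1],DP_{e}[2],DP_{e}[3]$, with $DP_{e}[1]$ additionally joined to both $u$ and $v$. The resulting graph has $n+3m$ vertices, so the reduction is polynomial.

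The key identity to establish is $|\mathrm{MVC}(H^{2})|=|\mathrm{MVC}(G)|+2m$, by the same argument as in \Cref{lem:dangling-rearrangement} and \Cref{lemma:sizeMVC}. For the upper-bound direction, given an MVC $C_{G}$ of $G$, the set $C_{G}\cup\bigcup_{e}\{DP_{e}[1],DP_{e}[2]\}$ covers $H^{2}$: the triangle $\{DP_{e}[1],DP_{e}[2],DP_{e}[3]\}$ and all $H^{2}$-edges incident on a gadget vertex are covered by the chosen pair, and every original edge $\{u,v\}\in E$ (still present in $H^{2}$ via the $2$-hop through $DP_{e}[1]$) is covered by $C_{G}$. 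For the lower-bound direction, I invoke the rearrangement argument of \Cref{lem:dangling-rearrangement}: because $\{DP_{e}[1],DP_{e}[2],DP_{e}[3]\}$ is a triangle in $H^{2}$ and $DP_{e}[3]$ has no $H^{2}$-neighbours outside the gadget, any vertex cover of $H^{2}$ can be transformed, without increasing its size, into one that contains exactly $DP_{e}[1]$ and $DP_{e}[2]$ from each gadget; removing these $2m$ vertices leaves a vertex cover of $G$, since no gadget vertex remaining in the cover is adjacent in $H^{2}$ to both endpoints of any original edge $\{u,v\}\in E$. NP-completeness follows, with membership in \textsf{NP} being immediate.

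For the FPTAS part I mimic \Cref{theorem:HtoG} in the much simpler centralized polynomial-time regime. Suppose $\{A_{\eps}\}_{\eps>0}$ is an FPTAS for $G^{2}$-MVC, so $A_{\eps}$ runs in $\poly(n,1/\eps)$. Given $G$, construct $H$ as above and run $A_{\eps}$ on $H$ with $\eps:=1/(n+2m+1)$, which is $1/\poly(n)$; the runtime is therefore polynomial in the size of $G$. Let $C_{H}$ be its output, so $|C_{H}|\le(1+\eps)(k+2m)$ where $k=|\mathrm{MVC}(G)|\le n$. Applying the rearrangement of \Cref{lem:dangling-rearrangement} to $C_{H}$ (it only requires $C_{H}$ to be feasible, not optimal) yields a cover $C'_{H}$ with $|C'_{H}|\le|C_{H}|$ containing exactly $\{DP_{e}[1],DP_{e}[2]\}$ per gadget, so $C'_{H}\cap V$ is a vertex cover of $G$ of size at most
\[
(1+\eps)(k+2m)-2m \;=\; k+\eps(k+2m)\;<\;k+1.
\]
Since vertex-cover sizes are integers, $|C'_{H}\cap V|=k$, i.e., it is an optimal MVC of $G$. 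Hence an FPTAS for $G^{2}$-MVC would give a polynomial-time exact algorithm for $G$-MVC, forcing $P=NP$.

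The main obstacle is to verify that the rearrangement of \Cref{lem:dangling-rearrangement} applies to \emph{any} feasible vertex cover, not only to minimum ones, since the FPTAS returns only an approximately optimal solution. Inspecting that proof confirms this: every swap (removing $DP_{e}[3]$ and adding whichever of $DP_{e}[1],DP_{e}[2]$ is missing) strictly does not increase the size and produces another feasible cover. Once this observation is in place, both parts of the theorem go through cleanly.
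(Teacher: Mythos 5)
Your proposal is correct and follows essentially the same route as the paper: the same three-vertex dangling-path gadget per edge, the same identity $|\mathrm{MVC}(H^2)|=|\mathrm{MVC}(G)|+2m$ established via the rearrangement of \Cref{lem:dangling-rearrangement}, and the same trick of running the FPTAS with an inverse-polynomial $\eps$ to force the error below $1$. Your choice $\eps=1/(n+2m+1)$ is in fact slightly more careful than the paper's $\eps=1/(3|E_G|)$, since it makes $\eps(k+2m)<1$ hold unconditionally rather than relying on $k<|E_G|$.
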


\begin{proof}
  For the first part of the theorem, we use a reduction from MVC on $G$. Given a graph $G=(V_G,E_G)$, we construct a graph $H=(V_H,E_H)$ by replacing each edge $e \in E_G$ with a dangling path gadget $DP_e$ which is a path on three vertices \(p_{e}^{1}, p_{e}^{2}, p_{e}^{3}\) where \(p_{e}^{1}\) is connected to both the end points of \(e\) (as is defined in the proof of \Cref{theorem:HtoG}). Note that \(V_{H}\) contains the vertices in \(V_{G}\) plus the vertices in the dangling path gadget \(DP_{e}\) for each \(e \in E_{G}\). Therefore, the size of \(H\) is polynomial in the size of \(G\). Now we show that \(G\) has a vertex cover of size \(c\) iff \(H^{2}\) has a vertex cover of size \(c + 2|E_{G}|\).

  For the forward direction, consider a vertex cover \(S_{G}\) of \(G\) having size \(c\). We construct a vertex cover \(S_{H}\) of \(H^{2}\) by taking all the vertices in \(S_{G}\) and adding the two vertices \(p_{e}^{1}, p_{e}^{2}\) in the dangling path gadget \(DP_{e}\) for all \(e \in E_{G}\). The two vertices \(p_{e}^{1}, p_{e}^{2}\) cover all the edges in \(H^{2}\) incident on \(DP_{e}\), and \(S_{G}\) covers all the \(H^{2}\) edges that are incident between two vertices in \(V_{G}\). Therefore, \(S_{H}\) is a valid vertex cover of \(H^{2}\) with \(c+2|E_{G}|\) vertices.

  For the reverse direction, consider a vertex cover \(S_{H}\) of \(H^{2}\) having size \(c'\). \Cref{lem:dangling-rearrangement} implies that an exact MVC $S_H$ for $H^2$ takes all vertices of $DP_e$ except $p_{e}^{3}$ for every dangling path gadget $DP_{e}$ and that the set of nodes in $S_H \cap V_{G}$ is a vertex cover for $G$. This in particular implies \(c' \ge 2|E_{G}|\). Let \(c\) be the number of vertices in \(S_{H} \cap V_{G}\). These vertices have to form a valid vertex cover of \(G\) because the subgraph of \(H^{2}\) induced by \(V_{G}\) is exactly \(G\).

  For the second part of the theorem, we follow a line similar to that of Theorem~\ref{theorem:HtoG}, as follows. Let $ALG$ be a $(1+\eps)$-approximation scheme for MVC on $G^2$ that completes in $\poly(n, \frac{1}{\eps})$ time. We construct the same graph $H$ from $G$ as before by adding a dangling path gadget \(DP_{e}\) for each edge \(e \in E_{G}\). We run $ALG$ on \(H^{2}\) with \(\eps = 1/(3|E_{G}|)\). Note that by the previous argument the size of the minimum vertex cover of \(H^{2}\) is \(c + 2|E_{G}|\) where \(c\) is the size of the minimum vertex cover in \(G\). Therefore, $ALG$ will find a vertex cover of size at most \((1+\eps)(c + 2|E_{G}|) = c + 2|E_{G}| + (c + 2|E_{G}|)/(3|E_{G}|) = c + 2|E_{G}| + \alpha\) where \(\alpha < 1\). Therefore, $ALG$ runs in polynomial time and we can find the MVC of \(G\) by taking the solution returned by $ALG$ and taking all the corresponding \(V_{G}\) vertices in the cover. This contradicts the $NP$-hardness of vertex cover in \(G\) (assuming \(P \neq NP\)).
\end{proof}

We also show that one cannot efficiently compute good approximations of $G^2$-MDS unless one obtains a major breakthrough result. 
\begin{theorem}[No better-than-$\ln n$-approximation for MDS on $G^2$]
\label{thm:MDS-centralized-ln-n-hard}
Given input graph $G$, solving MDS exactly on $G^2$ is NP-complete. 
Moreover, if there is some $\eps > 0$ such that a polynomial-time algorithm can solve MDS on $G^2$ to within an approximation factor of $(1 - \eps) \ln n$, then $NP \subseteq DTIME\left(n^{O(\log \log n)}\right)$.
\end{theorem}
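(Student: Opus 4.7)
The plan is to give a polynomial-time approximation-factor-preserving reduction from MDS on an arbitrary input graph $G_0$ to $G^2$-MDS, analogous in spirit to the proof of \Cref{thm:MVC-centralized-FPTAS-hard}. Given an MDS instance $G_0=(V_0,E_0)$ with $n_0$ vertices, I would construct the $G^2$-MDS input $G=(V_G,E_G)$ by attaching a pendant leaf $v'$ to each vertex $v \in V_0$, i.e., $V_G = V_0 \cup \{v' : v \in V_0\}$ and $E_G = E_0 \cup \{\{v,v'\} : v \in V_0\}$; note $|V_G|=2n_0$. The heart of the proof is the identity $\mathrm{MDS}(G^2) = \mathrm{MDS}(G_0)$, together with a polynomial-time procedure that turns any dominating set of $G^2$ of size $k$ into a dominating set of $G_0$ of size at most $k$.

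The key structural observation is that in $G^2$ the pendant $v'$ has closed neighborhood exactly $\{v'\} \cup \{v\} \cup N_{G_0}(v)$: its only $G$-neighbor is $v$, and its $G$-distance-$2$ vertices are precisely the $G_0$-neighbors of $v$. For the upper bound, any dominating set $D_0$ of $G_0$ is also a dominating set of $G^2$, because $E_0 \subseteq E(G^2)$ already dominates $V_0$, and each pendant $v'$ is dominated exactly when $D_0 \cap (\{v\} \cup N_{G_0}(v)) \neq \emptyset$, which is the $G_0$-domination condition for $v$. For the lower bound, from any dominating set $D$ of $G^2$ I would define $D'_0 = \{v \in V_0 : v \in D \text{ or } v' \in D\}$, which has $|D'_0| \leq |D|$. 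Since $v'$ is dominated in $G^2$, we have $D \cap (\{v, v'\} \cup N_{G_0}(v)) \neq \emptyset$, and this translates directly into $v \in D'_0$ or some $G_0$-neighbor of $v$ lying in $D'_0$; hence $D'_0$ is a dominating set of $G_0$.

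Given the identity, the NP-completeness of $G^2$-MDS follows from the NP-completeness of MDS~\cite{FiegeJACM1998}, since the construction is polynomial and preserves the optimum. For the approximation hardness, I would proceed by contradiction: suppose some $\eps > 0$ admits a polynomial-time $(1-\eps)\ln n$-approximation for $G^2$-MDS, where $n$ denotes the number of vertices of the input. Applied to $G$ built from a worst-case MDS instance $G_0$ on $n_0$ vertices, this returns a dominating set of $G^2$ of size at most $(1-\eps)\ln(2 n_0) \cdot \mathrm{MDS}(G^2) = (1-\eps)(\ln n_0 + \ln 2) \cdot \mathrm{MDS}(G_0)$, and the extraction above yields a dominating set of $G_0$ of the same size. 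Since $(1-\eps)(\ln n_0 + \ln 2) \leq (1-\eps/2)\ln n_0$ for all sufficiently large $n_0$, this contradicts Feige's~\cite{FiegeJACM1998} theorem that MDS is not $(1-\delta)\ln n$-approximable for any $\delta > 0$ unless $NP \subseteq DTIME(n^{O(\log\log n)})$.

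The main obstacle I expect is ensuring the reduction is exactly approximation-factor preserving, with no additive slack that could swamp $\mathrm{MDS}(G_0)$ on hard instances. The subtlety of the pendant-leaf construction lies in the fact that each pendant $v'$ in $G^2$ sees only its base vertex and the $G_0$-neighbors of that base vertex; no extraneous cross-pendant edges arise, so the pendant constraints faithfully encode the $G_0$-domination condition without inflating the optimum. Heavier alternatives such as edge subdivision or the dangling-path gadgets of \Cref{app:ExactMDSLB} would instead introduce additive costs of order $|V_0|$ or $|E_0|$, which on Feige's hard instances dominate $\mathrm{MDS}(G_0)$ and thereby destroy the $(1-\eps)\ln n$-approximation factor needed for the contradiction.
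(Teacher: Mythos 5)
Your proposal is correct, and it takes a genuinely different --- and in one respect tighter --- route than the paper. The paper builds $H$ from the MDS instance by replacing every edge with a dangling path gadget and then merging all of these gadgets into a single gadget $DP_E$; this keeps the subgraph of $H^2$ induced by the original vertices equal to the original graph, at the price of an additive $+1$ in the optimum (the vertex $DP_E[3]$, via \Cref{lem:merged-gadget-vertex-in-mds}) and, more importantly, $\Theta(|E_0|)$ extra vertices, so $H$ can have quadratically many vertices. Your pendant-leaf construction instead leaves $E_0$ intact (so $G^2[V_0]$ is $G_0^2$, not $G_0$) and encodes the $G_0$-domination constraint for $v$ entirely in the requirement that the pendant $v'$ be dominated, whose closed $G^2$-neighborhood is exactly $\{v,v'\}\cup N_{G_0}(v)$; the projection $v,v'\mapsto v$ then converts any $G^2$-dominating set into a $G_0$-dominating set of no larger size, giving $\mathrm{MDS}(G^2)=\mathrm{MDS}(G_0)$ exactly, with only a factor-$2$ vertex blowup. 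This matters for the $(1-\eps)\ln n$ statement: with $2n_0$ vertices the loss in the logarithm is an additive $\ln 2$, absorbed into $\eps/2$ as you argue, whereas with $\Theta(n_0^2)$ vertices the factor $\ln n$ itself roughly doubles --- a point the paper's proof glosses over, so your version is actually the cleaner instantiation of the intended argument. One small correction to your closing remark: the paper's \emph{merged} gadget does not incur an additive cost of order $|E_0|$ in the optimum --- the merging reduces that cost to exactly $1$ --- its real cost is the vertex blowup just described. Everything else checks out: distinct pendants are at $G$-distance at least $3$ so no cross-pendant edges arise, the over-domination of $V_0$ in $G^2$ is harmless because the pendant constraints are the binding ones, and the identity together with the polynomial extraction yields both NP-completeness and the approximation-preserving reduction needed to invoke Feige.
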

\begin{proof}
  We prove this by showing a polynomial time reduction from MDS in \(G\) to MDS in \(G^{2}\). The theorem follows by the hardness of approximation result for \(G\)~\cite{FiegeJACM1998}. The reduction is that for each edge \(e\) in \(G=(V_G,E_G)\), we add a dangling path gadget \(DP_{e}\) and merge all the dangling path gadgets in \(G\) to form the merged path gadget \(DP_{E}\). We call this new graph \(H=(V_H,E_H)\). Note that \(H\) has \(O(m)\) vertices where \(m = |E_{G}|\) which is polynomial in the size of \(G\). The vertices in \(V_{H}\) can be partitioned into two sets, namely the vertices in \(DP_{E}\) and the vertices corresponding to \(V_{G}\), which we call the \(G\)-vertices. Now we show that the size of the MDS in \(H^{2}\) is \(W + 1\), iff the size of the MDS in \(G\) is \(W\).

  We start with the forward direction, let \(S_{H}\) be an MDS of \(H^{2}\) having size \(W+1\). By~\Cref{lem:merged-gadget-vertex-in-mds} we know that \(DP_{E}[3]\) has to belong to \(S_{H}\) (and therefore \(W \ge 0\)) and it covers all the vertices in the merged path gadget \(DP_{E}\). Note that given \(DP_{E}[3]\) is in \(S_{H}\), we can assume that no other vertex of \(DP_{E}\) can belong to \(S_{H}\) by arguing along similar lines as the proof of \Cref{lem:no-other-dangling-gadget-in-mds}. Now \(S_{H}\) needs to cover the \(G\)-vertices in \(H\), without using any vertices in \(DP_{E}\). The subgraph of \(H^{2}\) induced by the \(G\)-vertices is exactly the graph \(G\). So \(S_{H} \setminus \{DP_{E}[3]\}\) must form an MDS of \(G\) because if \(G\) has a dominating set \(S\) of size \(< W\), then it contradicts the optimality of \(S_{H}\) because \(S \cup \{DP_{E}[3]\}\) is a valid dominating set of \(H^{2}\) having size \(< W + 1\).

  To prove the reverse direction, let \(S_{G}\) be an MDS of \(G\) of size \(W\). We construct \(S_{H}\) by taking the \(G\)-vertices in \(H^{2}\) corresponding to \(S_{G}\) along with the vertex \(DP_{E}[3]\). The set \(S_{H}\) has size \(W+1\) and it covers all the vertices in \(H^{2}\) because the vertices corresponding to \(S_{G}\) dominate all the \(G\)-vertices in \(H^{2}\), and \(DP_{E}[3]\) dominates all the vertices in \(DP_{E}\). Therefore, \(S_{H}\) is a dominating set of \(H^{2}\) of size \(W+1\). Note that \(H^{2}\) cannot have a smaller dominating set because then we can use the argument for the forward direction to extract a dominating set of \(G\) of size \(< W\), which contradicts the assumption that \(S_{G}\) is an MDS of \(G\).
\end{proof}
 
\section*{Acknowledgement}
 This project was partially supported by the European Union's Horizon 2020 Research and  Innovation Programme under grant agreement no. 755839 (Keren Censor-Hillel, Yannic Maus).

\bibliographystyle{alpha}
\bibliography{Refs-arxiv}

\newcommand{\etalchar}[1]{$^{#1}$}
\begin{thebibliography}{BBiKS19}

\bibitem[ACK16]{AbboudCHK16}
Amir Abboud, Keren Censor{-}Hillel, and Seri Khoury.
\newblock Near-linear lower bounds for distributed distance computations, even
  in sparse networks.
\newblock In {\em Proceedings of the 30th International Symposium on
  Distributed Computing, {DISC}}, pages 29--42, 2016.

\bibitem[Bak94]{BakerJACM1994}
Brenda~S. Baker.
\newblock Approximation algorithms for np-complete problems on planar graphs.
\newblock {\em J. {ACM}}, 41(1):153--180, 1994.

\bibitem[BBiKS19]{BenBasatKS19}
Ran Ben-Basat, Ken ichi Kawarabayashi, and Gregory Schwartzman.
\newblock {Parameterized Distributed Algorithms}.
\newblock In {\em 33rd International Symposium on Distributed Computing (DISC
  2019)}, pages 6:1--6:16, 2019.

\bibitem[BCD{\etalchar{+}}19]{BachrachCDELP19}
Nir Bachrach, Keren Censor{-}Hillel, Michal Dory, Yuval Efron, Dean
  Leitersdorf, and Ami Paz.
\newblock Hardness of distributed optimization.
\newblock In {\em {PODC}}, pages 238--247, 2019.

\bibitem[BCS17]{Bar-YehudaCSJACM2017}
Reuven Bar{-}Yehuda, Keren Censor{-}Hillel, and Gregory Schwartzman.
\newblock A distributed {(2} + {\(\epsilon\)})-approximation for vertex cover
  in {$O(\log \Delta / \epsilon \log\log \Delta)$} rounds.
\newblock {\em J. {ACM}}, 64(3):23:1--23:11, 2017.

\bibitem[BE83]{Bar-YehudaE83}
Reuven Bar{-}Yehuda and Shimon Even.
\newblock A local-ratio theorem for approximating the weighted vertex cover
  problem.
\newblock In {\em Proceedings of the {WG} '83, International Workshop on Graph
  theoretic Concepts in Computer Science}, pages 17--28, 1983.

\bibitem[BEKS18]{Ben-BasatEKS18}
Ran Ben{-}Basat, Guy Even, Ken{-}ichi Kawarabayashi, and Gregory Schwartzman.
\newblock A deterministic distributed 2-approximation for weighted vertex cover
  in {${O}(\log N \log \Delta /\log^2\log \Delta)$} rounds.
\newblock In {\em {SIROCCO}, 2018}, 2018.

\bibitem[CD18]{Censor-HillelD18}
Keren Censor{-}Hillel and Michal Dory.
\newblock Distributed spanner approximation.
\newblock In {\em Proceedings of the 2018 {ACM} Symposium on Principles of
  Distributed Computing, {PODC} 2018, Egham, United Kingdom, July 23-27, 2018},
  pages 139--148, 2018.

\bibitem[CK00]{ChenISAAC2000}
Jianer Chen and Iyad Kanj.
\newblock On approximating minimum vertex cover for graphs with perfect
  matching.
\newblock In {\em Proceedings of the 11th International Conference on
  Algorithms and Computation}, ISAAC ’00, page 132–143, Berlin, Heidelberg,
  2000. Springer-Verlag.

\bibitem[CK18]{CzumajK18}
Artur Czumaj and Christian Konrad.
\newblock Detecting cliques in {CONGEST} networks.
\newblock In {\em 32nd International Symposium on Distributed Computing,
  {DISC}}, pages 16:1--16:15, 2018.

\bibitem[CKP17]{Censor-HillelKP17}
Keren Censor{-}Hillel, Seri Khoury, and Ami Paz.
\newblock Quadratic and near-quadratic lower bounds for the {CONGEST} model.
\newblock In {\em 31st International Symposium on Distributed Computing, {DISC}
  2017, October 16-20, 2017, Vienna, Austria}, pages 10:1--10:16, 2017.

\bibitem[CKPY18]{Censor-HillelKP18}
Keren Censor{-}Hillel, Telikepalli Kavitha, Ami Paz, and Amir Yehudayoff.
\newblock Distributed construction of purely additive spanners.
\newblock {\em Distributed Computing}, 31(3):223--240, 2018.

\bibitem[DKM19]{DKM19}
Janosch Deurer, Fabian Kuhn, and Yannic Maus.
\newblock Deterministic distributed dominating set approximation in the
  {CONGEST} model.
\newblock In {\em {Proc. ACM Symp. on Principles of Distributed Computing
  (PODC)}}, 2019.

\bibitem[DZ10]{DemboZ2010}
Amir Dembo and Ofer Zeitouni.
\newblock {\em Large Deviations Techniques and Applications}, volume~95.
\newblock 01 2010.

\bibitem[Elk04]{Elkin04}
Michael Elkin.
\newblock Unconditional lower bounds on the time-approximation tradeoffs for
  the distributed minimum spanning tree problem.
\newblock In {\em Proceedings of the 36th Annual {ACM} Symposium on Theory of
  Computing, Chicago, IL, USA, June 13-16, 2004}, pages 331--340, 2004.

\bibitem[Fei98]{FiegeJACM1998}
Uriel Feige.
\newblock A threshold of ln n for approximating set cover.
\newblock {\em J. ACM}, 45(4):634–652, July 1998.

\bibitem[FGKO18]{FischerGKO18}
Orr Fischer, Tzlil Gonen, Fabian Kuhn, and Rotem Oshman.
\newblock Possibilities and impossibilities for distributed subgraph detection.
\newblock In {\em Proceedings of the 30th on Symposium on Parallelism in
  Algorithms and Architectures, {SPAA}}, pages 153--162, 2018.

\bibitem[FHW12]{FrischknechtHW12}
Silvio Frischknecht, Stephan Holzer, and Roger Wattenhofer.
\newblock Networks cannot compute their diameter in sublinear time.
\newblock In {\em Proceedings of the Twenty-Third Annual {ACM-SIAM} Symposium
  on Discrete Algorithms, {SODA}}, pages 1150--1162, 2012.

\bibitem[FKP13]{FKP13}
Pierre Fraigniaud, Amos Korman, and David Peleg.
\newblock Towards a complexity theory for local distributed computing.
\newblock {\em J. {ACM}}, 60(5):35:1--35:26, 2013.

\bibitem[FPPS99]{Fotakis1999}
Dimitris Fotakis, Grammati Pantziou, George Pentaris, and Paul Spirakis.
\newblock Frequency assignment in mobile and radio networks.
\newblock {\em DIMACS Series in Discrete Mathematics and Theoretical Computer
  Science}, 1999.

\bibitem[GHK18]{GHK18}
Mohsen Ghaffari, David~G. Harris, and Fabian Kuhn.
\newblock On derandomizing local distributed algorithms.
\newblock In {\em 2018 IEEE 59th Annual Symposium on Foundations of Computer
  Science (FOCS)}, pages 662--673, Los Alamitos, CA, USA, oct 2018. IEEE
  Computer Society.

\bibitem[GJ79]{GareyJ79}
Michael~R. Garey and David~S. Johnson.
\newblock {\em Computers and Intractability: {A} Guide to the Theory of
  NP-Completeness}.
\newblock W. H. Freeman, 1979.

\bibitem[GK18]{GK18}
Mohsen Ghaffari and Fabian Kuhn.
\newblock Derandomizing distributed algorithms with small messages: Spanners
  and dominating set.
\newblock In {\em 32nd International Symposium on Distributed Computing, {DISC}
  2018, New Orleans, LA, USA, October 15-19, 2018}, pages 29:1--29:17, 2018.

\bibitem[GKM17]{SLOCAL17}
Mohsen Ghaffari, Fabian Kuhn, and Yannic Maus.
\newblock On the complexity of local distributed graph problems.
\newblock In {\em {Proc. ACM Symp. on Theory of Computing (STOC)}}, pages
  784--797. {ACM}, 2017.

\bibitem[Hal95]{HalldorssonSODA1995}
Magn{\'{u}}s~M. Halld{\'{o}}rsson.
\newblock Approximating discrete collections via local improvements.
\newblock In {\em Proceedings of the Sixth Annual {ACM-SIAM} Symposium on
  Discrete Algorithms, 22-24 January 1995. San Francisco, California, {USA}},
  pages 160--169, 1995.

\bibitem[JRS02]{JiaRS02}
Lujun Jia, Rajmohan Rajaraman, and Torsten Suel.
\newblock An efficient distributed algorithm for constructing small dominating
  sets.
\newblock {\em Distributed Computing}, 15(4):193--205, 2002.

\bibitem[KMW16]{KMW16}
Fabian Kuhn, Thomas Moscibroda, and Roger Wattenhofer.
\newblock Local computation: Lower and upper bounds.
\newblock {\em J. {ACM}}, 63(2):17:1--17:44, 2016.

\bibitem[KN97]{KushilevitzN:book96}
Eyal Kushilevitz and Noam Nisan.
\newblock {\em Communication Complexity}.
\newblock Cambridge University Press, New York, NY, USA, 1997.

\bibitem[KR08]{KhotRJCSS2008}
Subhash Khot and Oded Regev.
\newblock Vertex cover might be hard to approximate to within 2-epsilon.
\newblock {\em J. Comput. Syst. Sci.}, 74(3):335--349, 2008.

\bibitem[LPPP03]{Peleg03}
Zvi Lotker, Elan Pavlov, Boaz Patt{-}Shamir, and David Peleg.
\newblock {MST} construction in o(log log n) communication rounds.
\newblock In {\em {SPAA}}, pages 94--100, 2003.

\bibitem[LY94]{LundY1994}
Carsten Lund and Mihalis Yannakakis.
\newblock On the hardness of approximating minimization problems.
\newblock {\em J. ACM}, 41(5):960–981, September 1994.

\bibitem[MS06]{Mosk-AoyamaSPODC2006}
Damon Mosk{-}Aoyama and Devavrat Shah.
\newblock Computing separable functions via gossip.
\newblock In {\em Proceedings of the Twenty-Fifth Annual {ACM} Symposium on
  Principles of Distributed Computing, {PODC} 2006, Denver, CO, USA, July
  23-26, 2006}, pages 113--122, 2006.

\bibitem[Nis02]{Nisan2002}
Noam Nisan.
\newblock The communication complexity of approximate set packing and covering.
\newblock In {\em Automata, Languages and Programming}, pages 868--875, Berlin,
  Heidelberg, 2002. Springer Berlin Heidelberg.

\bibitem[Pel00]{peleg00}
David Peleg.
\newblock {\em Distributed Computing: A Locality-Sensitive Approach}.
\newblock SIAM, 2000.

\bibitem[PPS16]{PanduranganPS16}
Gopal Pandurangan, David Peleg, and Michele Scquizzato.
\newblock Message lower bounds via efficient network synchronization.
\newblock In {\em {SIROCCO}}, pages 75--91, 2016.

\bibitem[PR00]{PelegR00}
David Peleg and Vitaly Rubinovich.
\newblock A near-tight lower bound on the time complexity of distributed
  minimum-weight spanning tree construction.
\newblock {\em {SIAM} J. Comput.}, 30(5):1427--1442, 2000.

\bibitem[RG20]{RG19}
Vaclav Rozho\v{n} and Mohsen Ghaffari.
\newblock Polylogarithmic-time deterministic network decomposition and
  distributed derandomization.
\newblock In {\em {Proc. ACM Symp. on Theory of Computing (STOC)}}, 2020.

\bibitem[SHK{\etalchar{+}}12]{Dassarmaetal12}
Atish~Das Sarma, Stephan Holzer, Liah Kor, Amos Korman, Danupon Nanongkai,
  Gopal Pandurangan, David Peleg, and Roger Wattenhofer.
\newblock Distributed verification and hardness of distributed approximation.
\newblock {\em {SIAM} J. Comput.}, 41(5):1235--1265, 2012.

\bibitem[SYZ08]{SHAO200837}
Zhendong Shao, Roger~K. Yeh, and David Zhang.
\newblock The l(2,1)-labeling on graphs and the frequency assignment problem.
\newblock {\em Applied Mathematics Letters}, 21(1):37 -- 41, 2008.

\bibitem[Vaz01]{Vaziranibook}
Vijay~V. Vazirani.
\newblock {\em Approximation Algorithms}.
\newblock Springer-Verlag, Berlin, Heidelberg, 2001.

\bibitem[WS11]{WilliamsonShmoysbook}
David~P. Williamson and David~B. Shmoys.
\newblock {\em The Design of Approximation Algorithms}.
\newblock Cambridge University Press, USA, 1st edition, 2011.

\end{thebibliography}

\end{document}